\newcommand{\be}{\begin{equation}}
	\newcommand{\ee}{\end{equation}}
\newcommand{\bea}{\begin{eqnarray}}
	\newcommand{\eea}{\end{eqnarray}}
\newcommand{\beas}{\begin{eqnarray*}}
	\newcommand{\eeas}{\end{eqnarray*}}
\newcommand{\eq}[1]{\begin{equation}\begin{aligned}#1\end{aligned}\end{equation}}
\newcommand{\eqn}[1]{\begin{align*}#1\end{align*}}
\newcommand{\bbE}{\mathbb E}
\newcommand{\bbF}{\mathbb F}
\newcommand{\bbN}{\mathbb N}
\newcommand{\bbP}{\mathbb P}
\newcommand{\bbR}{\mathbb R}
\newcommand{\scF}{\mathcal F}
\newcommand{\scL}{\mathcal L}
\newcommand{\scM}{\mathcal M}
\newcommand{\scR}{\mathcal R}
\newcommand{\scU}{\mathcal U}
\newcommand*\widebar[1]{%
	\hbox{%
		\vbox{%
			\hrule height 0.5pt 
			\kern0.5ex
			\hbox{%
				\kern-0.1em
				\ensuremath{#1}%
				\kern-0.1em
			}%
		}%
	}%
} 
\newcommand{\crl}[1]{\ensuremath{ \left\{ #1 \right\} }}
\newcommand{\edg}[1]{\ensuremath{ \left[ #1 \right] }}
\newcommand{\brak}[1]{\ensuremath{\left( #1 \right)}}
\newtheoremstyle{plain}%
  {9pt}
  {9pt}
  {\itshape}
  {}
  {\scshape}
  { }
  {0.5em}
  {}
\theoremstyle{plain}
\newtheorem{assumption}{Assumption}
\newtheorem{theorem}{Theorem}[section]
\newtheorem{proposition}[theorem]{Proposition}
\newtheorem{remark}[theorem]{Remark}
\newtheorem{example}[theorem]{Example}
\newtheorem{examples}[theorem]{Examples}
\newtheorem{foo}[theorem]{Remarks}
\newenvironment{Remark}{\begin{remark}\rm}{\end{remark}}
\newcommand\FigInsert[1]{%
  \begin{center}
  \framebox{Figure \ref{#1} near here}
  \end{center}}
 \newcommand\TabInsert[1]{%
  \begin{center}
  \framebox{Table \ref{#1} near here}
  \end{center}}
\title{\textbf{Robust Wasserstein Optimization and its
Application in Mean-CVaR}}
\author{XIN HAI\thanks{Corresponding author.
Email: xin.hai@monash.edu} $\dag$ and KIHUN NAM${\ddag}$}
\affil{$\dag\ddag$Monash University, Clayton, VIC 3800, Australia}
\begin{document}
\date{}
\maketitle
\begin{abstract}
\noindent We refer to recent inference methodology and formulate a framework for solving the distributionally robust optimization problem, where the true probability measure is inside a Wasserstein ball around the empirical measure and the radius of the Wasserstein ball is determined by the empirical data. We transform the robust optimization into a non-robust optimization with a penalty term and provide the selection of the Wasserstein ambiguity set's size. Moreover, we apply this framework to the robust mean-CVaR optimization problem and the numerical experiments of the US stock market show impressive results compared to other popular strategies.\\[4mm]
\textit{Keywords}: Distributionally robust optimization, Wasserstein distance, Mean-CVaR, Robust portfolio selection, Modern portfolio theory\\[4mm]
\textit{JEL Classification}: C14, C52, C61, C63, G11
\end{abstract}
\setcounter{equation}{0}
\section{Introduction}
In this article, we use a robust Wasserstein optimization (RWO) framework to solve the distributionally robust optimization problem, where the true probability measure is inside a Wasserstein ball centered at the empirical probability measure. We transform the robust optimization problem into a non-robust minimization problem with a penalty term and provide an appropriate selection of the Wasserstein ambiguity set's size. Then we apply this RWO framework to the robust mean-CVaR (conditional value-at-risk) optimization and obtain its dual non-robust minimization with a penalty term. Next, we apply the robust mean-CVaR model to the US stock market on five different 10-year intervals between 2002 and 2019, which provides the portfolio allocation, the Sharpe ratio, the mean/CVaR, and the cumulative wealth competitive with the classical mean-CVaR portfolio and various robust portfolio strategies.

It is not hard to solve the non-robust optimization problem. Although the true probability measure is not precisely known in practice, it is often estimated through a finite set of historical data. Unfortunately, the non-robust model is extremely sensitive to the underlying parameters and the empirical version can deviate significantly from the true one.

In order to address this issue, robust optimization has been widely used. The robust formulations consider the case where the true probability differs from the empirical one. A robust solution is obtained by considering the worst-case scenario inside the ambiguity set, which represents a class of models that are plausible variations of the empirical measure. Therefore, it is important to be clear about the ambiguity set when studying robust optimization problems. There are some typical choices for the ambiguity set. \cite{hansen2001robust} considered a set of measures that are equivalent to the empirical measure. \cite{lobo2000worst}, and \cite{tutuncu2004robust} analyzed an inverse image of the measurable function, such as the mean and variance can be allowed to be in certain specified intervals.  \citealp[Chapter 12]{fabozzi2007robust}, \citealp{hota2019data}, \cite{fournier2015rate} considered a topological neighborhood of the empirical probability measure, such as a Wasserstein ball that is around the empirical measure.

The size of the ambiguity set is crucial as well. If the ambiguity set's size is too large, then the available data becomes less relevant and the result is too conservative. If the size is too small, then the effect of robustness will be lost. Therefore, it is necessary to choose an appropriate size of the ambiguity set based on the context of one's problem. Recently, \cite{delage2010distributionally}, \cite{esfahani2018data}, and \cite{blanchet2019data} studied the data-driven problem to provide some approaches for the choice of the ambiguity set's size. In particular, \cite{blanchet2019robust} studied the ambiguity set around the empirical measure and the distance between two probability measures is dictated by the Wasserstein metric. They provided a novel method, Robust Wasserstein Profile Inference (RWPI), to choose an appropriate Wasserstein ball for a given confidence level.

The RWO framework is mainly concerned with the selection of the Wasserstein ambiguity set's size and the dual problem of robust optimization. First, we define the non-robust optimization problem and robust Wasserstein profile (RWP) function to obtain the Wasserstein ambiguity set's size by adapting and modifying the RWPI method. Then We transform the robust optimization into a non-robust optimization with a penalty term. Finally, we consider the application of the RWO framework to the portfolio optimization problem and compare its performance with other strategies.

Recently, VaR (value-at-risk) and CVaR become the most popular risk measures. VaR reflects the maximum potential loss of an asset for a given confidence level and period. However, VaR has some drawbacks, such as the lack of subadditivity. Hence it is not a coherent risk measure (\cite{artzner1999coherent}). To overcome the limitations of VaR, \cite{rockafellar2000optimization} studied a modified version of CVaR, which was defined as the mean of the tail distribution exceeding VaR. As a coherent risk measure, CVaR is currently more widely recommended than VaR by theoreticians and market practitioners.

We consider the robust mean-CVaR optimization problem with a cost function $c(u,w)=\|u-w\|_2^\kappa$, which corresponds to the Wasserstein metric of order $\kappa$. Based on our RWO framework, we transform the robust mean-CVaR optimization problem into a non-robust minimization problem with a penalty term for $\kappa=1$ and $2$. Next, we choose an appropriate Wasserstein ambiguity set's size using a data-driven method proposed by \cite{blanchet2019robust} for the mean-variance optimization. Since the mean-CVaR problem is not regular enough to apply \cite{blanchet2019robust} method, we regularise the equation using the mollifier used in \cite{peng1999non} and \cite{tong2008available}. We obtain the explicit expression of the Wasserstein ambiguity set's size for the robust mean-CVaR model when $\kappa=1$ and $2$.

Then we select the top 100 US stock price data in the five ten-year periods 2002-2012, 2004-2014, 2006-2016, 2008-2018, and 2009-2019 to perform numerical simulations with and without transaction costs. We choose the first two years as the in-sample to estimate the parameters, and the last eight years as the out-of-sample to test the model. We analyze the influence of the smooth parameter in the smooth mean-CVaR problem compared with the non-robust problem. We show the convergence of the Wasserstein ambiguity set's size for the robust mean-CVaR models as the in-sample size grows. We compared the performance (the portfolio percentage allocation, the Sharpe ratio, the mean/CVaR, and the cumulative wealth) of our robust mean-CVaR models for $\kappa=1$ and $2$ with the classical mean-CVaR model in \cite{rockafellar2000optimization} and \cite{rockafellar2002conditional}, the robust mean-CVaR model under box uncertainty in \cite{zhu2009worst}, and the robust mean-CVaR model under distribution ambiguity in \cite{kang2019data}.

The numerical experiments showed impressive results. Our robust models perform better than non-robust and other robust strategies in most cases except in the 2004-2014 and 2008-2018 simulations, where our robust models showed moderate performance on the mean/CVaR. Moreover, our robust models have the most diversified portfolios among all strategies.

The rest of the paper is organized as follows. In Section \ref{section2} we formulate the RWO framework to solve the robust Wasserstein optimization problem. Besides, we provide an appropriate selection of the Wasserstein ambiguity set's size. Section \ref{section3} is our application of the RWO framework, the robust mean-CVaR problem. Section \ref{section4} presents the implementation of our investment strategy. Section \ref{section5} is
concerned with the numerical experiments of our robust strategies compared to several popular strategies. Finally, Section \ref{section6} concludes this paper. 

\section{RWO framework}
\label{section2}
\subsection{Robust problem formulation}
In this section, we are interested in studying a distributionally robust optimization problem, given by
\eq{\label{eq1}&\min_{\boldsymbol{x}\in\scF_{(\delta,\rho)}}\max_{P\in\scU_\delta(Q)}E_P[f(\boldsymbol{x},\boldsymbol{\theta})],}
where $f(\boldsymbol{x},\boldsymbol{\theta})$ is a function that depends both on the $n$-dimensional decision vector $\boldsymbol{x}$ and the $n$-dimensional random vector $\boldsymbol{\theta}$, whose distribution $P$ is supported on the ambiguity set $\scU_\delta(Q)$, $P$ is the underlying probability distribution, $Q$ is the empirical probability derived from historical data, $\scF_{(\delta,\rho) }:=\{\boldsymbol{x}:\min_{P\in\scU_\delta(Q)}E_P[g(\boldsymbol{x},\boldsymbol{\theta})]\geq\rho\}$ is the feasible region with the worst case constraint, $g$ is a measurable function, $\rho$ is the worst acceptable target value, $\delta$ is the discrepancy between two probability measures $P$ and $Q$.

We need to define the optimal transport cost and Wasserstein distance to define our ambiguity set. Firstly, we assume the cost function $c: \bbR^n\times\bbR^n\rightarrow [0,\infty]$ is lower semi-continuous and $c(u,w)=0$ if and only if $u=w$ for $u, w\in \bbR^n$. Given two probability measures $\mu$ and $\nu$, the optimal transport cost or discrepancy between $\mu$ and $\nu$, denoted by $C(\mu,\nu)$, can be defined as
\eqn{C(\mu,\nu)=\inf_{(U,W)\in\Pi(\mu,\nu)} \bbE[c(U,W)],
}
where \eqn{\Pi(\mu,\nu):=\crl{(U,W): \bbP\circ U^{-1} =\mu, \bbP\circ W^{-1}=\nu}.
}
Intuitively speaking, the optimal transport cost $C(\mu,\nu)$ is measuring the cheapest way of rearranging the probability distribution $\mu$ into $\nu$, where the cost function $c(u,w)$ can be interpreted as the cost of transporting unit mass from $u$ to $w$.

For $\kappa \geq 1$, the Wasserstein distance of order $\kappa$ between $\mu$ and $\nu$ is defined as
\eqn{W_\kappa(\mu,\nu)=\brak{\inf_{(U,W)\in\Pi(\mu,\nu)}\bbE[d(U,W)^\kappa]}^{1/\kappa},
}
where $d$ is the metric defined on the codomain of $U$ and $W$.

According to 
\cite{villani2009optimal}, if $c^{1/\kappa}(\cdot,\cdot)$ is a metric for any $\kappa\geq 1$, then $C^{1/\kappa}(\cdot,\cdot)$ is also a metric between two probability distributions. In that case,
\eqn{C^{1/\kappa}(\mu,\nu)=\brak{\inf_{(U,W)\in\Pi(\mu,\nu)} \bbE\left[c(U,W)\right]}^{1/\kappa}.
}
If we choose $c(u,w)=d^{\kappa}(u,w)$, then $C^{1/\kappa}(\mu,\nu)$ is the standard Wasserstein distance of order $\kappa$. Throughout this article, we will select $C^{1/\kappa}(\cdot,\cdot)$ as the discrepancy between two probability measures and choose the following cost function
\eqn{c(u,w)= \|u-w\|_2^\kappa,
}
where $\kappa\geq 1$, and $\|\cdot\|_2$ is the Euclidean norm on $\bbR^n$.

Finally, for a given $\delta>0$, we define an ambiguity set by $\delta$-neighborhood of $Q$, 
\eqn{\scU_\delta(Q):=\{P:C^{1/\kappa}(P,Q)\leq\delta\}.}

The distribution $P$ is not precisely known and is often partially observable through a finite set of independent samples of past realizations $(\theta_i)_{i=1, 2, ..., N}$ of the random vector $\boldsymbol{\theta}$. To generate data-driven solutions, we approximate the distribution $P$ with a discrete empirical
probability distribution $Q$,

\eqn{Q(\cdot)=\frac{1}{N}\sum_{i=1}^{N}\delta_{\theta_i}(\cdot),
}
where $\delta_{\theta_i}(\cdot)$ is the indicator function.

\begin{assumption}
\label{as1}
Assume that the cost function $c(\boldsymbol{\theta},\theta_i)=\|\boldsymbol{\theta}-\theta_i\|^{\kappa}_2$ for $\kappa\geq 1$.
\end{assumption}

\begin{proposition}
\label{proposition1}
The primal problem given in (\ref{eq1}) is equivalent to the following dual problem:
\eqn{& \min_{\gamma_1\geq 0,\boldsymbol{x}}\quad\gamma_1\delta+\frac{1}{N}\sum_{i=1}^N \brak{\sup_{\boldsymbol{\theta}}\crl{f(\boldsymbol{x},\boldsymbol{\theta})-\gamma_1\|\boldsymbol{\theta}-\theta_i\|_2^\kappa}},\\
&\text{s.t.}\quad -\inf_{\gamma_0\geq0}\left\{\gamma_0\delta+\frac{1}{N}\sum_{i=1}^N\brak{\sup_{\boldsymbol{\theta}}\{-g(\boldsymbol{x},\boldsymbol{\theta})-\gamma_0\|\boldsymbol{\theta}-\theta_i\|_2^\kappa\}}\right\}\geq\rho,}
in the sense that the two problems have the same optimal values.
\end{proposition}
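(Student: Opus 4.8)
The plan is to split the nested problem (\ref{eq1}) into its two worst-case components and to observe that \emph{both} are instances of a single template: the maximisation of a linear functional $P\mapsto E_P[h(\boldsymbol{\theta})]$ of the distribution over the Wasserstein ball $\scU_\delta(Q)$. Concretely, I would first establish the one reformulation
\eqn{\sup_{P\in\scU_\delta(Q)}E_P[h(\boldsymbol{\theta})]=\inf_{\gamma\geq0}\crl{\gamma\delta+\frac{1}{N}\sum_{i=1}^N\sup_{\boldsymbol{\theta}}\crl{h(\boldsymbol{\theta})-\gamma\|\boldsymbol{\theta}-\theta_i\|_2^\kappa}},}
and then apply it twice. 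Taking $h=f(\boldsymbol{x},\cdot)$ reproduces the objective of the dual; taking $h=-g(\boldsymbol{x},\cdot)$ together with $\min_{P\in\scU_\delta(Q)}E_P[g]=-\max_{P\in\scU_\delta(Q)}E_P[-g]$ turns the worst-case constraint defining $\scF_{(\delta,\rho)}$ into the stated feasibility condition with its inner infimum over $\gamma_0$. Merging the resulting minimisation over $\boldsymbol{x}$ with the outer infimum over $\gamma_1$ then yields the dual verbatim, and since each step preserves the optimal value the two problems share the same value.

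To prove the template identity I would use Lagrangian relaxation of the ball constraint followed by optimal-transport (Kantorovich) duality. Dualising the constraint (here the radius $\delta$ is identified with the bound on the transport cost $C(P,Q)$, since $C^{1/\kappa}(P,Q)\leq\delta$ is the relabelling $\delta^\kappa\mapsto\delta$ of $C(P,Q)\leq\delta^\kappa$) with a multiplier $\gamma\geq0$ gives the weak-duality bound: for every feasible $P$ and every $\gamma\geq0$,
\eqn{E_P[h]\leq\gamma\delta+\brak{E_P[h]-\gamma C(P,Q)}.}
The key simplification comes from $Q=\frac{1}{N}\sum_i\delta_{\theta_i}$ being discrete: disintegrating any coupling $\pi\in\Pi(P,Q)$ over the atoms of $Q$ writes $E_P[h]-\gamma C(P,Q)$ as the average $\frac{1}{N}\sum_i\int\crl{h(u)-\gamma\|u-\theta_i\|_2^\kappa}\pi_i(du)$ of conditional integrals, which is at most $\frac{1}{N}\sum_i\sup_u\crl{h(u)-\gamma\|u-\theta_i\|_2^\kappa}$. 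Taking the supremum over $P$ and the infimum over $\gamma$ yields the ``$\leq$'' inequality.

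The harder direction --- that there is no duality gap --- is where the real work lies, and I expect it to be the main obstacle. I would establish it by constructing a (near-)maximising measure: at the optimal multiplier $\gamma^{\ast}$, place the conditional mass $\pi_i$ on a maximiser $u_i^{\ast}$ of $u\mapsto h(u)-\gamma^{\ast}\|u-\theta_i\|_2^\kappa$, which recovers the upper bound and certifies tightness through the complementary slackness $C(P^{\ast},Q)=\delta$. Making this rigorous requires (i) lower semicontinuity of the cost (Assumption \ref{as1}) together with a growth/integrability condition on $h$ so that the inner suprema are finite and attained (or attained up to $\varepsilon$); (ii) a Slater-type condition guaranteeing a finite optimal $\gamma^{\ast}$, which is automatic here since $Q$ itself lies strictly inside the ball; and (iii) the strong-duality theorem for optimal transport of \cite{villani2009optimal} to justify interchanging $\sup_P$ with the Lagrangian. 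Rather than reprove this from scratch, I would invoke the Wasserstein distributionally robust duality of \cite{blanchet2019robust} and \cite{esfahani2018data}, whose hypotheses reduce to (i)--(iii); the residual difficulty is to verify that $f$ and $g$ satisfy the required growth conditions uniformly in $\boldsymbol{x}$, so that the interchange and attainment are valid for \emph{both} applications of the template.
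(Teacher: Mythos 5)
Your proposal follows essentially the same route as the paper's own proof: the paper likewise reduces both the objective and the worst-case constraint to the single worst-case-expectation duality of \cite{blanchet2019robust} (its Proposition 1), applies it once with $f(\boldsymbol{x},\cdot)$ and once with $-g(\boldsymbol{x},\cdot)$ after writing $\min_P E_P[g]=-\max_P E_P[-g]$, and then merges the minimization over $\boldsymbol{x}$ with the infimum over $\gamma_1$. Your extra material --- the sketch of how the duality template itself would be proved, the $\delta$ versus $\delta^\kappa$ relabelling, and the growth/Slater hypotheses needed for no duality gap --- is more careful than what the paper records, but operationally you end by invoking the same external duality result in the same two places, so the arguments coincide.
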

\begin{proof}
See Section \ref{proofproposition1}.
\end{proof}

In addition, we assume the following assumptions throughout this article:
\begin{assumption}
\label{as2}
Given a target value $\rho$, the non-robust problem
\eq{\label{nonrobust}&\min_{\boldsymbol{x}}\;E_{P^*}[f(\boldsymbol{x},\boldsymbol{\theta})]\quad \text{subject to}\quad E_{P^*}[g(\boldsymbol{x},\boldsymbol{\theta})]=\rho}
has a unique solution $\boldsymbol{x}^*$. 
\end{assumption}
\begin{assumption}
\label{as3}
$f(\boldsymbol{x},\boldsymbol{\theta})$ and $g(\boldsymbol{x},\boldsymbol{\theta})$ are  differentiable with respect to $\boldsymbol{x}$.
\end{assumption}
According to Assumptions \ref{as2} and \ref{as3}, we have $\Pi_{P^*}=\crl{\boldsymbol{x}^*}$. 
Let us define
\[
h(\boldsymbol{x}^*,\boldsymbol{\theta})= \displaystyle\frac{\partial}{\partial x}f(\boldsymbol{x}^*,\boldsymbol{\theta})-\lambda_0^*\frac{\partial}{\partial x}g(\boldsymbol{x}^*,\boldsymbol{\theta})
\]
where $\lambda_0^*$ is the Lagrange multiplier appear in optimization problem of \eqref{nonrobust}.
In this case, the first order derivative optimal condition can be written as  
\eq{\label{non1.1}
&E_{P^*}[h(\boldsymbol{x}^*,\boldsymbol{\theta})]=\boldsymbol{0}.
}
 We assume the following conditions for $h(\boldsymbol{x}^*,\boldsymbol{\theta})$.
\begin{assumption}
	\label{as4}
$E_{P^*}\|h(\boldsymbol{x}^*,\boldsymbol{\theta})\|^2_2<\infty$ and $h(\boldsymbol{x}^*,\boldsymbol{\theta})$ is continuously differentiable with respect to $\boldsymbol{\theta}$, $D_{\boldsymbol{\theta}}h(\boldsymbol{x}^*,\boldsymbol{\theta})$.
\end{assumption}
\begin{assumption}
\label{as5}
For any $\xi_0\in \bbR^n\backslash\crl{\textbf{0}}$, the function $h(\boldsymbol{x}^*,\boldsymbol{\theta})$ in Assumption \ref{as4} satisfies
\eqn{P^*(\|\xi_0^\intercal D_{\boldsymbol{\theta}}h(\boldsymbol{x}^*,\boldsymbol{\theta})\|_2>0)>0.}
\end{assumption}
\subsection{Selection of the Wasserstein ambiguity set's size}
Assume the worst acceptable target value $\rho$ be chosen by the investor. We would like to choose the ambiguity set size as the minimum uncertainty level so that the optimal solution of non-robust model (\ref{nonrobust}) under $P^*$ is in the neighborhood of the estimate of $\boldsymbol{x}^*$ with confidence level $1-\delta_0$ which is assigned by the investor. In other words,
	\begin{align*}
{\rm arg}\min\crl{\delta:P^*\left(\boldsymbol{x}^*\in\bigcup_{P\in \scU_\delta(Q)}\Pi_P\right)\geq 1-\delta_0},
	\end{align*}
where $\Pi_P$ is the set of all solutions of the non-robust problem (\ref{nonrobust}) under $P$. 

In order to solve the problem asymptotically, we refer to 
the RWPI approach illustrated in \cite{blanchet2019robust}. For RWP function
\eq{\label{RN}\scR_N(\kappa)=\min\{C^{1/\kappa}(P,Q):E_P[h(\boldsymbol{x}^*,\boldsymbol{\theta})]=\boldsymbol{0}\},}
it is known that
\[
\crl{\scR_N(\kappa)\leq\delta}=\crl{\boldsymbol{x}^*\in\bigcup_{P\in \scU_\delta(Q)}\Pi_P}.
\]
Therefore, our choice for the ambiguity set size is
\eqn{
	\delta^* :=\min\crl{\delta:P^*\left(\scR_N(\kappa)\leq\delta\right)\geq 1-\delta_0},
}
where $1-\delta_0$ is a confidence level. 

Then, we use the following convergence result to obtain $\delta^*$.
\begin{theorem}[Theorem 3, \cite{blanchet2019robust}]
\label{theorem1}
Under Assumptions \ref{as1} to \ref{as5}, as $N\rightarrow \infty$, we have
\eqn{N^{\kappa/2}\scR_N(\kappa)\Rightarrow \widebar{\scR}(\kappa),}
where, for $\kappa >1$,
\eqn{\widebar{\scR}(\kappa)=\max_{\xi_0\in\bbR^n}\crl{\kappa\xi_0^\intercal H-(\kappa-1)E_{P^*}\|\xi_0^\intercal D_{\boldsymbol{\theta}}h(\boldsymbol{x}^*,\boldsymbol{\theta})\|_2^{\kappa/(\kappa-1)}}; }
if $\kappa =1$,
\eqn{\widebar{\scR}(1)=\max_{\xi_0:P^*(\|\xi_0^\intercal D_{\boldsymbol{\theta}}h(\boldsymbol{x}^*,\boldsymbol{\theta})\|_2>1)=0}\crl{\xi_0^\intercal H},}
where $H\sim N(\boldsymbol{0},Cov[h(\boldsymbol{x}^*,\boldsymbol{\theta})])$ and $Cov[h(\boldsymbol{x}^*,\boldsymbol{\theta})]=E_{P^*}[h(\boldsymbol{x}^*,\boldsymbol{\theta})h(\boldsymbol{x}^*,\boldsymbol{\theta})^\intercal]$.

In particular, if $\kappa=2$ and $E_{P^*}[D_{\boldsymbol{\theta}}h(\boldsymbol{x}^*,\boldsymbol{\theta})]$ is invertible,
\eqn{\widebar{\scR}(2)= \max_{\xi_0\in\bbR^n}\crl{2\xi_0^\intercal H-\xi_0^\intercal E_{P^*}[ D_{\boldsymbol{\theta}}h(\boldsymbol{x}^*,\boldsymbol{\theta})]\xi_0}=H^\intercal(E_{P^*}[ D_{\boldsymbol{\theta}}h(\boldsymbol{x}^*,\boldsymbol{\theta})])^{-1}H.
}
\end{theorem}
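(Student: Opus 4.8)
The plan is to follow the Robust Wasserstein Profile Inference argument of \cite{blanchet2019robust}, which this statement restates; concretely, the task is to check that Assumptions \ref{as1}--\ref{as5} place us in its scope and then to reproduce the three structural steps (dualize, rescale, pass the maximum to the limit) that generate $\widebar{\scR}(\kappa)$. First I would dualize the profile function \eqref{RN}. Since $c(u,w)=\|u-w\|_2^\kappa$ and $Q=\frac1N\sum_{i=1}^N\delta_{\theta_i}$ is discrete, transporting mass off $Q$ amounts to relocating each atom $\theta_i$ to a point $u_i$, so
\[
\scR_N(\kappa)^\kappa=\min\crl{\tfrac1N\textstyle\sum_{i} \|u_i-\theta_i\|_2^\kappa:\ \tfrac1N\sum_i h(\boldsymbol{x}^*,u_i)=\boldsymbol0}.
\]
Lagrangian strong duality for this moment-constrained optimal-transport problem then yields
\[
\scR_N(\kappa)^\kappa=\max_{\lambda\in\bbR^n}\crl{-\lambda^\intercal\bar h_N+\tfrac1N\textstyle\sum_i\min_{\Delta_i}\edg{\|\Delta_i\|_2^\kappa-\lambda^\intercal\brak{h(\boldsymbol{x}^*,\theta_i+\Delta_i)-h(\boldsymbol{x}^*,\theta_i)}}},
\]
where $\bar h_N:=\frac1N\sum_i h(\boldsymbol{x}^*,\theta_i)$. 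Verifying this duality (finiteness, attainment, no gap) is the first place care is needed.

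Second, I would identify the scaling. The first-order condition \eqref{non1.1} gives $E_{P^*}[h(\boldsymbol{x}^*,\boldsymbol{\theta})]=\boldsymbol0$, so under Assumption \ref{as4} the central limit theorem gives $\sqrt N\,\bar h_N\Rightarrow H\sim N(\boldsymbol0,\mathrm{Cov}[h(\boldsymbol{x}^*,\boldsymbol{\theta})])$; this forces $\scR_N(\kappa)=O_P(N^{-1/2})$ and suggests the substitution $\lambda=N^{(1-\kappa)/2}\xi_0$. Taylor-expanding $h(\boldsymbol{x}^*,\theta_i+\Delta_i)-h(\boldsymbol{x}^*,\theta_i)=D_{\boldsymbol{\theta}}h(\boldsymbol{x}^*,\theta_i)\Delta_i+o(\|\Delta_i\|_2)$ (Assumption \ref{as4}) and using, for $\kappa>1$, the scalar identity $\min_{\Delta}\{\|\Delta\|_2^\kappa-b^\intercal\Delta\}=-(\kappa-1)\kappa^{-\kappa/(\kappa-1)}\|b\|_2^{\kappa/(\kappa-1)}$, each per-atom minimization contributes, to leading order, $-(\kappa-1)\kappa^{-\kappa/(\kappa-1)}\|\xi_0^\intercal D_{\boldsymbol{\theta}}h(\boldsymbol{x}^*,\theta_i)\|_2^{\kappa/(\kappa-1)}$ after the $N$-powers cancel. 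Multiplying by $N^{\kappa/2}$, the linear part converges to $\xi_0^\intercal H$ (using symmetry of $H$) while the law of large numbers gives $\frac1N\sum_i\|\xi_0^\intercal D_{\boldsymbol{\theta}}h(\boldsymbol{x}^*,\theta_i)\|_2^{\kappa/(\kappa-1)}\to E_{P^*}\|\xi_0^\intercal D_{\boldsymbol{\theta}}h(\boldsymbol{x}^*,\boldsymbol{\theta})\|_2^{\kappa/(\kappa-1)}$; relabelling $\xi_0\mapsto\kappa\xi_0$ absorbs the constant $\kappa^{-\kappa/(\kappa-1)}$ and produces exactly the integrand of $\widebar{\scR}(\kappa)$.

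Third, I would pass the maximum through the limit. I would prove that the rescaled dual objective converges, jointly through $H$, to $\xi_0\mapsto\kappa\,\xi_0^\intercal H-(\kappa-1)E_{P^*}\|\xi_0^\intercal D_{\boldsymbol{\theta}}h(\boldsymbol{x}^*,\boldsymbol{\theta})\|_2^{\kappa/(\kappa-1)}$ uniformly on compact sets, with the limiting objective coercive in $\xi_0$ because Assumption \ref{as5} makes the penalty nondegenerate and keeps the argmax bounded, and then invoke an argmax/continuous-mapping theorem to conclude $N^{\kappa/2}\scR_N(\kappa)\Rightarrow\widebar{\scR}(\kappa)$. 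The case $\kappa=1$ is treated separately: there $\min_{\Delta}\{\|\Delta\|_2-b^\intercal\Delta\}$ equals $0$ when $\|b\|_2\le1$ and $-\infty$ otherwise, so finiteness of the dual forces the almost-sure constraint $\|\xi_0^\intercal D_{\boldsymbol{\theta}}h(\boldsymbol{x}^*,\boldsymbol{\theta})\|_2\le1$, giving $\widebar{\scR}(1)=\max_{\xi_0}\{\xi_0^\intercal H\}$ over that set. For $\kappa=2$ the exponent $\kappa/(\kappa-1)=2$ makes the penalty quadratic, say $\xi_0^\intercal A\xi_0$ with $A=E_{P^*}[D_{\boldsymbol{\theta}}h(\boldsymbol{x}^*,\boldsymbol{\theta})]$, so the limit is the quadratic program $\max_{\xi_0}\{2\xi_0^\intercal H-\xi_0^\intercal A\xi_0\}$, solved by $\xi_0=A^{-1}H$ with value $H^\intercal A^{-1}H$ once $A$ is invertible.

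The two hardest points are (i) the optimal-transport strong duality used to linearize the constrained minimization into the decoupled, per-atom dual, which must be justified with the lower semicontinuity and growth of $c$ rather than taken for granted; and (ii) the uniform control of the Taylor remainders over the $\lambda$-range that grows like $N^{(\kappa-1)/2}$, needed so that the pointwise limits of the linear and power terms upgrade to the uniform convergence that legitimizes interchanging limit and maximum. The nonsmooth $\kappa=1$ boundary case, where the dual degenerates into a hard constraint, also requires a dedicated argument rather than the smooth optimization used for $\kappa>1$.
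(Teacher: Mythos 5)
You cannot be checked against ``the paper's own proof'' here, because the paper does not prove this statement at all: Theorem \ref{theorem1} is quoted, as its bracketed attribution indicates, from Theorem 3 of \cite{blanchet2019robust}, and the authors use it as a black box to set $\delta^*=\eta_{1-\delta_0}/N^{\kappa/2}$. Measured against the RWPI argument you are reconstructing, your skeleton is essentially the cited proof: strong duality turning the profile function into a maximization over a multiplier with decoupled per-atom inner problems, the CLT-dictated rescaling $\lambda=N^{(1-\kappa)/2}\xi_0$, Taylor expansion plus the conjugate identity $\min_{\Delta}\crl{\|\Delta\|_2^\kappa-b^\intercal\Delta}=-(\kappa-1)\kappa^{-\kappa/(\kappa-1)}\|b\|_2^{\kappa/(\kappa-1)}$, the relabelling $\xi_0\mapsto\kappa\xi_0$, an argmax/epi-convergence step whose coercivity comes from Assumption \ref{as5}, and the degenerate $\kappa=1$ case in which the inner minimum collapses to the hard constraint $\|\xi_0^\intercal D_{\boldsymbol{\theta}}h(\boldsymbol{x}^*,\boldsymbol{\theta})\|_2\leq1$.

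Two concrete defects should be repaired before this counts as a proof. (i) Your primal reformulation, in which each atom $\theta_i$ is relocated to a single point $u_i$, is not equivalent to $\min\crl{C(P,Q):E_P[h(\boldsymbol{x}^*,\boldsymbol{\theta})]=\boldsymbol{0}}$: couplings may split the mass $1/N$ of an atom across several destinations, and for nonlinear $h$ splitting can be strictly cheaper. For instance, with $N=1$, $\theta_1=0$, $h(\theta)=\theta^2-1$, $\kappa=1$, relocating the atom costs at least $1$, while sending mass $p$ to $\pm p^{-1/2}$ and leaving the rest in place costs $\sqrt{p}\to0$. Moreover the no-splitting problem is non-convex in $(u_1,\dots,u_N)$, so ``Lagrangian strong duality'' applied to it is not justified. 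The dual you wrote down is nevertheless the correct one; the legitimate route to it is the linear-programming duality over measures of Blanchet and Murthy --- the same duality the paper invokes in Proposition \ref{proposition1} --- not a Lagrangian argument on the relocation problem. (ii) Power bookkeeping: your duality and scaling analysis concern $\scR_N(\kappa)^\kappa$, i.e.\ the optimal transport cost, so what your argument actually proves is $N^{\kappa/2}\scR_N(\kappa)^\kappa\Rightarrow\widebar{\scR}(\kappa)$. The conclusion you assert, $N^{\kappa/2}\scR_N(\kappa)\Rightarrow\widebar{\scR}(\kappa)$, is correct only under the convention of \cite{blanchet2019robust}, where the profile function is the cost itself; with the paper's definition \eqref{RN}, which takes the root $C^{1/\kappa}$, the normalization would have to be $N^{1/2}$ and the limit $\widebar{\scR}(\kappa)^{1/\kappa}$. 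This inconsistency is inherited from the paper (its \eqref{RN} does not match the theorem it quotes), but your proof should commit to one convention and carry it through.
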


In sum, we first estimate $\boldsymbol{x}^*$ 
by calculating the optimal solution from non-robust problem (\ref{nonrobust}) under probability $Q$. Then we use the result of Theorem \ref{theorem1} to obtain the $(1-\delta_0)$-quantile of $\;\widebar{\scR}(\kappa)$, which is denoted by $\eta_{1-\delta_0}$. Then, since $N^{\kappa/2}\scR_N(\kappa)\Rightarrow \widebar{\scR}(\kappa)$ as $N\rightarrow \infty$, we choose the appropriate size of the Wasserstein ambiguity set
\eqn{\delta^* = \eta_{1-\delta_0}/N^{\kappa/2}.
}

\section{Robust mean-CVaR application}
\label{section3}
\subsection{Robust mean-CVaR problem formulation}
After formulating the RWO framework, we consider its application in the portfolio optimization problem. Assume a market with $n$ risky assets and one risk-free asset without an interest rate. Let $(\Omega,\scF,\bbF, P^*)$ be a filtered probability space with $n$ discrete stochastic processes $\{S^i_t: t=1,...,T\}, i=1,2,..,n$ and $\bbF$ is the filtration generated by $\{S^i:i=1,2,...,n\}$. Here, $S^i_t$ represents the $i$th risky asset price at time $t$. Let us denote $R^i$ the return of risky asset $i$ on the time interval $[1, T]$, that is $R^i=(S_{T}^i-S_1^i)/S_1^i $. Then, we can define a $n$-dimensional return vector $\boldsymbol{R}=(R^1,R^2,...,R^n)$. Throughout this paper, any vector is understood to be a column vector and the transpose of $\boldsymbol{R}$ is denoted by $\boldsymbol{R}^{\intercal}$ and ${\bf 1}$ denotes a vector of appropriate dimension where each element is $1$. 

We let $\pi^i$ be the amount of money invested in the $i$th risky asset on the time interval $[1, T]$ and 
$\boldsymbol{\pi}=(\pi^1,\pi^2,...,\pi^n).$ Define the portfolio's wealth return $w(\boldsymbol{\pi},\boldsymbol{R}):=\boldsymbol{\pi}^\intercal \boldsymbol{R}$
and its loss $L:=loss(\boldsymbol{R})=-\boldsymbol{\pi}^\intercal \boldsymbol{R}.$
Let $F_L$ be its cumulative distribution function $F_L(l) =\bbP\{L\leq l\}$. Let $F^{-1}_L(v)$ be its left continuous inverse $F^{-1}_L(v)=\min\{l:F_L(l)\geq v\}$.

For a given $\delta>0$, we define an ambiguity set by $\delta$-neighborhood of $Q$, 
\eqn{\scU_\delta(Q):=\{P:C^{1/\kappa}(P,Q)\leq\delta\},}
where $Q$ is the empirical probability we obtained from historical data, that is $Q(\cdot)=\frac{1}{N}\sum_{i=1}^{N}\delta_{R_i}(\cdot),$
where $(R_i)_{i=1, 2, ..., N}$ are realizations of $\boldsymbol{R}$ and $\delta_{R_i}(\cdot)$ is the indicator function.
The feasible region of portfolio investment is given by
\eqn{\scF_{(\delta,\rho)}=\crl{\boldsymbol{\pi}:\max_{P\in\scU_{\delta}(Q)}E_P(\boldsymbol{\pi}^\intercal\boldsymbol{R})\geq \rho,\;\boldsymbol{\pi}^\intercal \boldsymbol{1}=1}.
}

Now let us define the worst-case conditional value-at-risk (WCVaR$_\alpha$). For a given risk level $\alpha\in(0,1)$, we denote the value-at-risk VaR$_\alpha$ and the conditional value-at-risk CVaR$_\alpha$ as
\eqn{&VaR_{\alpha}(L):=\min\{l:F_L(l)\geq\alpha\}=F^{-1}_L(\alpha),\\
&CVaR_{\alpha}(L):=E_P[L|L>VaR_{\alpha}(L)].}
\cite{rockafellar2000optimization} and \cite{rockafellar2002conditional} demonstrated that the calculation of CVaR$_\alpha$ can be achieved by minimizing the following auxiliary function
\eqn{CVaR_{\alpha}(L)=\min_a\crl{a +\frac{1}{\alpha}E_P[L-a]^+},}
where $a\in\bbR$ is the threshold of the loss and $[L-a]^+=\max(L-a,0)$.

This dual representation led us to define the the worst-case CVaR$_\alpha$ as follow:
\eq{\label{eq3.7}WCVaR_{\alpha}(-\boldsymbol{\pi}^\intercal\boldsymbol{R})&:=\max_{P\in\scU_{\delta}(Q)}CVaR_{\alpha}(-\boldsymbol{\pi}^\intercal\boldsymbol{R})=\max_{P\in\scU_{\delta}(Q)}\min_{\boldsymbol{\pi}\in\scF_{(\delta,\rho)},a}\crl{a +\frac{1}{\alpha}E_P[-\boldsymbol{\pi}^\intercal\boldsymbol{R}-a]^+}.}
\begin{assumption}
\label{ass6.1}
Assume that the return of risky asset $i$, $R^i$, is a bounded random variable.
\end{assumption}
\begin{theorem}
\label{theorem3.3}
For given $\delta>0$,
\eq{\label{eq3.9}WCVaR_{\alpha}(-\boldsymbol{\pi}^\intercal\boldsymbol{R})
&=\min_{\boldsymbol{\pi}\in\scF_{(\delta,\rho)},a}\max_{P\in\scU_{\delta}(Q)}\crl{a +\frac{1}{\alpha}E_P[-\boldsymbol{\pi}^\intercal\boldsymbol{R}-a]^+}.}
\end{theorem}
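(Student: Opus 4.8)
Writing $\Phi(\boldsymbol{\pi},a,P):=a+\frac{1}{\alpha}E_P[-\boldsymbol{\pi}^\intercal\boldsymbol{R}-a]^+$, the plan is to read \eqref{eq3.9} as an interchange of the inner minimization over $(\boldsymbol{\pi},a)$ with the outer maximization over $P$, and to prove it by invoking Sion's minimax theorem. Half of the identity is free: the inequality
\eqn{\max_{P\in\scU_\delta(Q)}\min_{\boldsymbol{\pi}\in\scF_{(\delta,\rho)},a}\Phi(\boldsymbol{\pi},a,P)\le\min_{\boldsymbol{\pi}\in\scF_{(\delta,\rho)},a}\max_{P\in\scU_\delta(Q)}\Phi(\boldsymbol{\pi},a,P)}
is weak duality and holds for any function whatsoever. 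The entire content of the theorem is therefore the reverse inequality, that is, the existence of a common saddle value. So after recording weak duality I would devote the argument to checking the structural hypotheses of a minimax theorem and then applying it.

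I first verify the convexity--concavity structure. For fixed $P$, the map $(\boldsymbol{\pi},a)\mapsto(-\boldsymbol{\pi}^\intercal\boldsymbol{R}-a)^+$ is the composition of the nondecreasing convex function $t\mapsto t^+$ with an affine map, hence convex; taking $E_P[\cdot]$ preserves convexity and adding the linear term $a$ keeps it convex, so $\Phi(\cdot,\cdot,P)$ is convex (in particular quasi-convex and lower semicontinuous) on $\scF_{(\delta,\rho)}\times\bbR$. The domain is itself convex: the budget constraint $\boldsymbol{\pi}^\intercal\boldsymbol{1}=1$ is affine, while the worst-case mean-return constraint cuts out a superlevel set of the concave map $\boldsymbol{\pi}\mapsto\min_{P}E_P[\boldsymbol{\pi}^\intercal\boldsymbol{R}]$ (an infimum of linear functions), so $\scF_{(\delta,\rho)}\times\bbR$ is convex. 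On the other side, $P\mapsto\Phi(\boldsymbol{\pi},a,P)$ is affine, hence concave and quasi-concave, and $\scU_\delta(Q)=\{P:C^{1/\kappa}(P,Q)\le\delta\}$ is convex because the optimal transport cost is jointly convex in its arguments, so the ball is a sublevel set of a convex functional.

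The remaining and most delicate ingredient is compactness together with upper semicontinuity of $P\mapsto\Phi(\boldsymbol{\pi},a,P)$ in a topology in which $\scU_\delta(Q)$ is compact, and this is exactly where Assumption \ref{ass6.1} enters. Since each $R^i$ is bounded, $\boldsymbol{R}$ takes values in a fixed compact set $\mathcal{K}\subset\bbR^n$, so I would restrict attention to probability measures supported on $\mathcal{K}$ and equip them with the topology of weak convergence. On $\mathcal{K}$ the integrand $(-\boldsymbol{\pi}^\intercal\boldsymbol{R}-a)^+$ is bounded and continuous, which makes $P\mapsto\Phi(\boldsymbol{\pi},a,P)$ weakly continuous; by Prokhorov's theorem the measures on the compact $\mathcal{K}$ form a weakly compact set, and since the transport-cost ball is weakly closed, $\scU_\delta(Q)$ is weakly compact as well as convex. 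With all hypotheses verified and $\scU_\delta(Q)$ compact, Sion's minimax theorem gives equality of the two values in the weak-duality display, which is precisely \eqref{eq3.9}.

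I expect the main obstacle to be this topological bookkeeping rather than any algebra: one must pin down a single topology that simultaneously renders $\scU_\delta(Q)$ compact and $P\mapsto E_P[(-\boldsymbol{\pi}^\intercal\boldsymbol{R}-a)^+]$ continuous, while reconciling the fact that the ambiguity ball is defined through the transport cost $C^{1/\kappa}$ whereas the objective is an expectation. Boundedness of the returns is what allows these two requirements to coexist, by confining everything to measures on $\mathcal{K}$, where the Lipschitz integrand is bounded; without Assumption \ref{ass6.1} one would instead have to control $\kappa$-th moments uniformly along the ball and establish compactness and continuity directly in the Wasserstein topology, which is considerably more technical.
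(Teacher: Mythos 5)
Your outline---weak duality plus a minimax interchange resting on convexity in $(\boldsymbol{\pi},a)$, linearity in $P$, and weak compactness of the ambiguity set---is in spirit the same one the paper uses: the paper checks convexity and boundedness (Assumption \ref{ass6.1}), cites Proposition 3 of \cite{pichler2018quantitative} for tightness and weak compactness of $\scU_\delta(Q)$, and then concludes via the minimax theorem of \cite{hota2018} (Theorem II.2) rather than raw Sion. Your convex--concave bookkeeping is correct. The genuine gap is in the topological step, exactly where you predicted the difficulty: you restrict attention to probability measures supported on the compact set $\mathcal{K}$ carrying $\boldsymbol{R}$, but $\scU_\delta(Q)$ is \emph{not} contained in the set of measures supported on $\mathcal{K}$. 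A measure in the ball may move an arbitrarily small amount of mass arbitrarily far from $\mathcal{K}$, subject only to the transport budget, and such measures are not a technicality one may discard: they are precisely the near-maximizers of the inner problem. For $\kappa=1$ the supremum of $E_P[-\boldsymbol{\pi}^\intercal\boldsymbol{R}-a]^+$ over the full ball equals $E_Q[-\boldsymbol{\pi}^\intercal\boldsymbol{R}-a]^+ + \delta\|\boldsymbol{\pi}\|_2$ (this is the very source of the penalty term in \ref{RMC-1}), and it is approached by moving mass $\varepsilon$ a distance $\delta/\varepsilon$ in the direction $-\boldsymbol{\pi}$; the supremum over measures confined to $\mathcal{K}$ is strictly smaller in general. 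So the identity you would prove concerns a strictly smaller ambiguity set, i.e.\ a different statement.

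Nor can this be patched by simply observing that the full ball is weakly compact (it is---that is the Pichler--Xu result the paper cites; tightness comes from the uniform transport bound, not from compactness of supports). The real obstruction is semicontinuity: on the full ball, $P\mapsto E_P[-\boldsymbol{\pi}^\intercal\boldsymbol{R}-a]^+$ is weakly lower semicontinuous but \emph{not} upper semicontinuous when $\kappa=1$, because the integrand has linear growth. Take $P_n=(1-\tfrac{1}{n})Q+\tfrac{1}{n}\delta_{z_n}$ with $z_n$ placed at distance $n\delta$ from a data point in the direction $-\boldsymbol{\pi}$: then $P_n\in\scU_\delta(Q)$ for every $n$, $P_n\Rightarrow Q$ weakly, yet $\lim_n E_{P_n}[-\boldsymbol{\pi}^\intercal\boldsymbol{R}-a]^+ = E_Q[-\boldsymbol{\pi}^\intercal\boldsymbol{R}-a]^+ + \delta\|\boldsymbol{\pi}\|_2$, strictly above the value at the weak limit. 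Hence Sion's theorem cannot be invoked in the weak topology with $\scU_\delta(Q)$ as the compact side; upper semicontinuity in the maximizing variable is exactly the hypothesis that fails. (For $\kappa>1$ the escaping-mass contribution vanishes and your continuity claim can be rescued by uniform integrability, but the theorem is asserted for all $\kappa\geq 1$, and \ref{RMC-1} is the case $\kappa=1$.) To close the argument you need either a minimax theorem tailored to this distributionally robust setting---which is what the paper's appeal to Theorem II.2 of \cite{hota2018}, combined with Assumption \ref{ass6.1} and the Pichler--Xu compactness, accomplishes---or to route the interchange through the strong-duality reformulation of the inner supremum (Proposition \ref{proposition1}) rather than through a compactness argument.
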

\begin{proof}
Let us rewrite \eqref{eq3.7} as
\eqn{WCVaR_{\alpha}(-\boldsymbol{\pi}^\intercal\boldsymbol{R})&=\max_{P\in\scU_{\delta}(Q)}\min_{\boldsymbol{\pi}\in\scF_{(\delta,\rho)},a}F_{\alpha}(\boldsymbol{\pi},a),
}
where $F_{\alpha}(\boldsymbol{\pi},a,\boldsymbol{R})=a +\displaystyle\frac{1}{\alpha}E_P[-\boldsymbol{\pi}^\intercal\boldsymbol{R}-a]^+$. 

As a function of $a\in\bbR$ and $\boldsymbol{\pi}$, $F_{\alpha}(\boldsymbol{\pi},a,\cdot)$ is finite and convex. According to Assumption \ref{ass6.1}, as a function of $\boldsymbol{R}$, $F_{\alpha}(\cdot,\cdot,\boldsymbol{R})$ is bounded. 

In addition, \cite{pichler2018quantitative} (Proposition 3) proved that the Wasserstein ambiguity set $\scU_{\delta}(Q)$ is also tight for every $\delta>0$ and weakly compact. Therefore, by Theorem II.2 of \cite{hota2018}, the claim is proved.
\end{proof}

Then we can define the robust mean-CVaR optimization problem 
\eq{\label{eq4.10}&\min_{\boldsymbol{\pi}, a}\max_{P\in\scU_\delta(Q)}\crl{a +\frac{1}{\alpha}E_P[-\boldsymbol{\pi}^\intercal\boldsymbol{R}-a]^+},\\
&\text{s.t.}\;\min_{P\in\scU_\delta(Q)}\;E_{P}(\boldsymbol{\pi^\intercal \boldsymbol{R}})\geq \rho,\\
&\qquad\;\boldsymbol{\pi}^\intercal \boldsymbol{1}=1.}

In the above section, we get the dual formulation of the robust optimization. According to Proposition \ref{proposition1} in our RWO framework, we obtain the dual problem of the robust mean-CVaR (\ref{eq4.10}).
\begin{proposition}
\label{proposition4.1}
Consider the cost function $c(u,w)=\|u-w\|_2^{\kappa}$ with $\kappa=1$. The robust mean-CVaR problem given in (\ref{eq4.10}) is equivalent to the following dual problem:
\begin{equation}\label{RMC-1}\tag{RMC-1}\begin{aligned}
&\min_{\boldsymbol{\pi},a}\;\crl{a +\displaystyle\frac{1}{\alpha}\left(E_Q[-\boldsymbol{\pi}^\intercal \boldsymbol{R}-a]^+ + \delta\|\boldsymbol{\pi}\|_2\right)},\\
&\text{s.t.}\quad E_{Q}[\boldsymbol{\pi}^{\intercal}\boldsymbol{R}]-\delta\|\boldsymbol{\pi}\|_2\geq \rho,\\
&\qquad\;\boldsymbol{\pi}^\intercal \boldsymbol{1}=1.
\end{aligned}\end{equation}
If $\kappa=2$, the robust mean-CVaR (\ref{eq4.10}) admits the following dual problem:
\begin{equation}\label{RMC-2}\tag{RMC-2}\begin{aligned}
& \min_{\boldsymbol{\pi},a,\gamma,s_i}\gamma\delta+\frac{1}{N}\sum_{i=1}^N s_i,\\
&\text{s.t.}\; \frac{\|\boldsymbol{\pi}\|_2^2}{4\gamma\alpha^2}-\frac{1}{\alpha}\boldsymbol{\pi}^\intercal R_i+a\brak{1-\frac{1}{\alpha}}\leq s_i,\\
&\qquad a\leq s_i,\\
&\qquad E_{Q}[\boldsymbol{\pi}^{\intercal}\boldsymbol{R}]-\sqrt{\delta}\|\boldsymbol{\pi}\|_2\geq\rho,\\
&\qquad\boldsymbol{\pi}^\intercal 1= 1,\\
&\qquad\gamma\geq 0.
\end{aligned}\end{equation}
\end{proposition}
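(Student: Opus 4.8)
The plan is to apply Proposition \ref{proposition1} with the identification $\boldsymbol{x}=(\boldsymbol{\pi},a)$, random vector $\boldsymbol{\theta}=\boldsymbol{R}$, objective $f(\boldsymbol{x},\boldsymbol{\theta})=a+\frac{1}{\alpha}[-\boldsymbol{\pi}^\intercal\boldsymbol{R}-a]^+$, and constraint function $g(\boldsymbol{x},\boldsymbol{\theta})=\boldsymbol{\pi}^\intercal\boldsymbol{R}$, so that (\ref{eq4.10}) is exactly the primal problem (\ref{eq1}) up to the extra linear constraint $\boldsymbol{\pi}^\intercal\boldsymbol{1}=1$, which passes through the duality unchanged. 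The whole derivation then reduces to evaluating the two inner suprema in the dual of Proposition \ref{proposition1}, namely $\sup_{\boldsymbol{\theta}}\crl{f(\boldsymbol{x},\boldsymbol{\theta})-\gamma_1\norm{\boldsymbol{\theta}-R_i}_2^\kappa}$ for the objective and $\sup_{\boldsymbol{\theta}}\crl{-g(\boldsymbol{x},\boldsymbol{\theta})-\gamma_0\norm{\boldsymbol{\theta}-R_i}_2^\kappa}$ for the constraint, and then carrying out the outer minimization over the multipliers $\gamma_1,\gamma_0\geq0$. Throughout, the substitution $u=\boldsymbol{\theta}-R_i$ will reduce each supremum to a translation-free problem in $u$.

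First I would handle the constraint term, which is the cleaner of the two. After substitution it decouples as $-\boldsymbol{\pi}^\intercal R_i+\sup_u\crl{-\boldsymbol{\pi}^\intercal u-\gamma_0\norm{u}_2^\kappa}$. For $\kappa=1$, Cauchy--Schwarz shows the inner supremum is $0$ when $\gamma_0\geq\norm{\boldsymbol{\pi}}_2$ and $+\infty$ otherwise, so the outer infimum is attained at the binding value $\gamma_0=\norm{\boldsymbol{\pi}}_2$ and equals $\delta\norm{\boldsymbol{\pi}}_2-E_Q[\boldsymbol{\pi}^\intercal\boldsymbol{R}]$; the requirement $-\inf_{\gamma_0}\{\cdots\}\geq\rho$ then becomes $E_Q[\boldsymbol{\pi}^\intercal\boldsymbol{R}]-\delta\norm{\boldsymbol{\pi}}_2\geq\rho$. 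For $\kappa=2$ the inner supremum is a concave quadratic maximized at $u^*=-\boldsymbol{\pi}/(2\gamma_0)$ with value $\norm{\boldsymbol{\pi}}_2^2/(4\gamma_0)$, and the ensuing one-dimensional minimization of $\gamma_0\delta+\norm{\boldsymbol{\pi}}_2^2/(4\gamma_0)$ balances at $\gamma_0=\norm{\boldsymbol{\pi}}_2/(2\sqrt{\delta})$, giving $\sqrt{\delta}\norm{\boldsymbol{\pi}}_2-E_Q[\boldsymbol{\pi}^\intercal\boldsymbol{R}]$ and hence the constraint $E_Q[\boldsymbol{\pi}^\intercal\boldsymbol{R}]-\sqrt{\delta}\norm{\boldsymbol{\pi}}_2\geq\rho$.

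The objective term is the only genuinely delicate computation, because of the positive part. Writing $b_i=-\boldsymbol{\pi}^\intercal R_i-a$ and pulling out the $\boldsymbol{\theta}$-free constant $a$, I would use $[x]^+=\max(x,0)$ together with the elementary identity $\sup_u\max(A(u),B(u))=\max\brak{\sup_uA(u),\sup_uB(u)}$ to split the supremum into the concave piece $\sup_u\crl{\frac{1}{\alpha}(b_i-\boldsymbol{\pi}^\intercal u)-\gamma_1\norm{u}_2^\kappa}$ and the trivial piece $\sup_u\crl{-\gamma_1\norm{u}_2^\kappa}=0$. For $\kappa=1$ the concave piece is finite precisely when $\gamma_1\geq\norm{\boldsymbol{\pi}}_2/\alpha$, in which case the two pieces combine to $\frac{1}{\alpha}[b_i]^+=\frac{1}{\alpha}[-\boldsymbol{\pi}^\intercal R_i-a]^+$; driving $\gamma_1$ down to its binding value converts $\gamma_1\delta$ into the penalty $\frac{\delta}{\alpha}\norm{\boldsymbol{\pi}}_2$ and reproduces (\ref{RMC-1}). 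For $\kappa=2$ the concave piece is a quadratic maximized at $u^*=-\boldsymbol{\pi}/(2\gamma_1\alpha)$ with value $\frac{b_i}{\alpha}+\frac{\norm{\boldsymbol{\pi}}_2^2}{4\gamma_1\alpha^2}$, so the full supremum equals $\max\crl{a\brak{1-\frac{1}{\alpha}}-\frac{1}{\alpha}\boldsymbol{\pi}^\intercal R_i+\frac{\norm{\boldsymbol{\pi}}_2^2}{4\gamma_1\alpha^2},\,a}$; introducing epigraph variables $s_i$ as upper bounds for this maximum turns its two branches into the two inequality constraints of (\ref{RMC-2}), and renaming $\gamma_1$ as $\gamma$ finishes the derivation.

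The main obstacle I anticipate is the careful treatment of the positive part in the $\kappa=2$ objective: one must justify the $\sup$--$\max$ interchange and verify that the two resulting branches correspond exactly to the epigraph constraints of (\ref{RMC-2}), so that no separate case analysis on the sign of $b_i$ is needed and the linear program in $s_i$ automatically selects the larger branch. A secondary point, common to both $\kappa=1$ computations, is to confirm that the supremum genuinely diverges to $+\infty$ once the multiplier drops below its critical threshold; this is exactly what pins the multiplier to its binding value and legitimizes eliminating it, and it follows from Cauchy--Schwarz alone. Finiteness of the remaining empirical expectations is immediate since $Q$ has finite support.
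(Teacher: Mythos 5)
Your proposal is correct and follows essentially the same route as the paper's proof: both apply Proposition \ref{proposition1} to the inner maximization over $P$ (the constraint $\boldsymbol{\pi}^\intercal\boldsymbol{1}=1$ passing through untouched), evaluate the resulting suprema by Cauchy--Schwarz for $\kappa=1$ and by completing the square for $\kappa=2$, pin the multiplier to its binding value in the $\kappa=1$ case, and turn the two branches of the positive part into the epigraph constraints of \ref{RMC-2}. The only noteworthy difference is cosmetic: for the positive part at $\kappa=1$ the paper writes $[r]^+=\sup_{0\leq\beta\leq1}\beta r$ and invokes a minimax interchange, whereas you use the elementary identity $\sup_u \max\brak{A(u),B(u)}=\max\brak{\sup_u A(u),\sup_u B(u)}$, which yields the same case analysis with slightly less machinery.
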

\begin{proof}
See Section \ref{proofA1}.
\end{proof}

\subsection{Selection of the Wasserstein ambiguity set's size for robust mean-CVaR problem}
Once obtaining the dual problems \ref{RMC-1} and \ref{RMC-2} for the robust mean-CVaR (\ref{eq4.10}), next we will choose an appropriate size of the Wasserstein ambiguity set for the robust mean-CVaR model. Because the value of $\delta$ too large or too small has a detrimental effect on the robust optimal solution. 

First let us state the non-robust mean-CVaR problem:
\eq{\label{nonsmoothing}\;&\min_{\boldsymbol{\pi},a}\;\left\{a +\frac{1}{\alpha}E_P\left[-\boldsymbol{\pi}^\intercal \boldsymbol{R}-a\right]^+\right\}\quad \text{ subject to }\quad E_P[\boldsymbol{\pi}^\intercal \boldsymbol{R}]= \rho\quad\text{and}\quad\boldsymbol{\pi}^\intercal 1= 1.}
As the max function $\edg{\cdots}^+$ does not satisfy Assumption \ref{as3}, we regularize the objective function using the method in \cite{peng1999non}:
\eqn{f(t,\boldsymbol{\pi},a,\boldsymbol{R})
&=t\ln\left(\exp\left(\frac{-\boldsymbol{\pi}^{\intercal}\boldsymbol{R}-a}{t}\right)+1\right) .}
Note that $f(t, \boldsymbol{\pi},a,\boldsymbol{R})\to [-\boldsymbol{\pi}^{\intercal}\boldsymbol{R}-a]^+$ as $t\to0$.
For $t<<1$, assume 
\begin{assumption}
\label{as6}
Given a target value $\rho$, define the smooth non-robust mean-CVaR problem 
\eq{\label{smoothing}&\min_{\boldsymbol{\pi},a}\; a+\frac{1}{\alpha}E_{P^*}\left[t\ln\left(\exp\Big(\frac{-\boldsymbol{\pi}^{\intercal}\boldsymbol{R}-a}{t}\Big)+1\right)\right],\\
&\text{s.t.} \;E_{P^*}[\boldsymbol{\pi}^{\intercal}\boldsymbol{R}]=\rho,\\
&\;\;\quad\boldsymbol{\pi}^\intercal\boldsymbol{1}=1.
}
It has unique solutions $\boldsymbol{\pi}^*$ and $a^*$. 
\end{assumption}
According to Assumption \ref{as6} we have the optimal solutions $\boldsymbol{\pi}^*$ and $a^*$. Therefore there exist Lagrange multipliers $\lambda_1^*$ and $\lambda_2^*$ such that
\begin{align}
\nonumber&\scL(\boldsymbol{\pi}^*,a^*,\lambda_1^*,\lambda_2^*)=a^*+\frac{1}{\alpha}E_{P^*}\left[t\ln\left(exp\Big(\frac{-(\boldsymbol{\pi}^*)^{\intercal}\boldsymbol{R}-a^*}{t}\Big)+1\right)\right]-\lambda_1^*(E_{P^*}[(\boldsymbol{\pi}^*)^\intercal\boldsymbol{R}]-\rho)\\
&\qquad\qquad\qquad\qquad-\lambda_2^*((\boldsymbol{\pi}^*)^\intercal\boldsymbol{1}-1)\nonumber,\\
&\label{lagrange_1} D_{\boldsymbol{\pi}^*}\scL(\boldsymbol{\pi}^*,a^*,\lambda_1^*,\lambda_2^*)=E_{P^*}\left[\frac{-\boldsymbol{R}}{\alpha[1+exp(((\boldsymbol{\pi}^*)^\intercal\boldsymbol{R}+a^*)/t)]}\right]-\lambda_1^*E_{P^*}[\boldsymbol{R}]-\lambda_2^*\boldsymbol{1}=\boldsymbol{0},\\
&\nonumber D_{a^*}\scL(\boldsymbol{\pi}^*,a^*,\lambda_1^*,\lambda_2^*)= E_{P^*}\left[1-\frac{1}{\alpha[1+exp(((\boldsymbol{\pi}^*)^\intercal\boldsymbol{R}+a^*)/t)]}\right]=0,\\
&\nonumber D_{\lambda_1^*}\scL(\boldsymbol{\pi}^*,a^*,\lambda_1^*,\lambda_2^*)= (\boldsymbol{\pi}^*)^\intercal E_{P^*}[\boldsymbol{R}]-\rho = 0,\\
&\nonumber D_{\lambda_2^*}\scL(\boldsymbol{\pi}^*,a^*,\lambda_1^*,\lambda_2^*)= (\boldsymbol{\pi}^*)^\intercal\boldsymbol{1}-1 = 0.
\end{align}
Multiplying (\ref{lagrange_1}) by $(\boldsymbol{\pi}^*)^\intercal$ and substituting $\rho$ for $(\boldsymbol{\pi}^*)^\intercal E_{P^*}[\boldsymbol{R}]$, we obtain
\eqn{\lambda_2^*= (\boldsymbol{\pi}^*)^\intercal E_{P^*}[g(\boldsymbol{\pi}^*,a^*,\boldsymbol{R})]-\lambda_1^*\rho,
}
where
\eqn{g(\boldsymbol{\pi}^*,a^*,\boldsymbol{R})=\frac{-\boldsymbol{R}}{\alpha[1+\exp(((\boldsymbol{\pi}^*)^\intercal\boldsymbol{R}+a^*)/t)]}.
}
Then we substitute $\lambda_2^*$ into (\ref{lagrange_1}). For each $i$, we get
\eqn{\lambda_1^*= \frac{E_{P^*}[g^i(\boldsymbol{\pi}^*,a^*,\boldsymbol{R})]-(\boldsymbol{\pi}^*)^\intercal E_{P^*}[ g(\boldsymbol{\pi}^*,a^*,\boldsymbol{R})]}{E_{P^*}[R^i]-\rho}.
}
We have
\eqn{E_{P^*}[h(\boldsymbol{\pi}^*,a^*,\boldsymbol{R})]=\boldsymbol{0},}
where 
\eqn{h(\boldsymbol{\pi}^*,a^*,\boldsymbol{R}) =g(\boldsymbol{\pi}^*,a^*,\boldsymbol{R})-\lambda_1^*\boldsymbol{R}-\lambda_2^*\boldsymbol{1}.
}
Since $Q\Rightarrow P^*$ as the data sample $N$ increases to infinity, it is reasonable to estimate $\boldsymbol{\pi}^*$  and $a^*$ by calculating the optimal solution from the smooth non-robust mean-CVaR problem (\ref{smoothing}) under probability $Q$. There exist Lagrange multipliers $\lambda_1$ and $\lambda_2$ such that
\eqn{E_P[h(\boldsymbol{\pi},a,\boldsymbol{R})]=E_P[g(\boldsymbol{\pi},a,\boldsymbol{R})-\lambda_1\boldsymbol{R}-\lambda_2\boldsymbol{1}]=\boldsymbol{0}.
}
Then we can define the RWP function
\eq{\label{R_N}\scR_N(\kappa)=\min\{C^{1/\kappa}(P,Q):E_P[h(\boldsymbol{\pi},a,\boldsymbol{R})]=\boldsymbol{0}\}.}
Our choice for $\delta$ is
\eqn{
	\delta^* :=\min\crl{\delta:P^*\left(\scR_N(\kappa)\leq\delta\right)\geq 1-\delta_0},
}
where the confidence level $1-\delta_0$ is chosen by the investor. 

Assumptions \ref{as4} and \ref{as5} also hold for $h(\boldsymbol{\pi}^*,a^*,\boldsymbol{R})$ (proof see Section \ref{prooftheorem2}). As all the conditions required in the selection of the Wasserstein ambiguity set's size (Theorem \ref{theorem1}), then we obtain the asymptotic upper bound of $\scR_N(\kappa)$ (\ref{R_N}) for the robust mean-CVaR models \ref{RMC-1} and \ref{RMC-2}.
\begin{theorem}
\label{theorem2}
Under Assumptions \ref{as6} and Theorem \ref{theorem1}, as $N\rightarrow \infty$, we have
\eqn{N^{\kappa/2}\scR_N(\kappa)\Rightarrow \widebar{\scR}(\kappa),}
where, for $\kappa >1$,
\eqn{\widebar{\scR}(\kappa)=\max_{\xi\in\bbR^n}\crl{\kappa\xi^\intercal Z-(\kappa-1)E_{P^*}\|\xi^\intercal D_{\boldsymbol{R}}h(\boldsymbol{\pi}^*,a^*,\boldsymbol{R})\|_2^{\kappa/(\kappa-1)}}}
and
\eqn{D_{\boldsymbol{R}}h(\boldsymbol{\pi}^*,a^*,\boldsymbol{R}) &= \frac{-I_n}{\alpha[1+exp(((\boldsymbol{\pi}^*)^\intercal\boldsymbol{R}+a^*)/t)]}\\
&+ \frac{\boldsymbol{R}(\boldsymbol{\pi}^*)^\intercal}{\alpha t[1+exp(((\boldsymbol{\pi}^*)^\intercal\boldsymbol{R}+a^*)/t)][1+exp((-(\boldsymbol{\pi}^*)^\intercal\boldsymbol{R}-a^*)/t)]}-\lambda_1^*I_n.
}
If $\kappa =1$,
\eqn{\widebar{\scR}(1):=\max_{\xi\in \Xi}\crl{\xi^\intercal Z},}
where 
\eqn{Z\sim \mathcal{N}(\boldsymbol{0},E_{P^*}[h(\boldsymbol{\pi}^*,a^*,\boldsymbol{R})h(\boldsymbol{\pi}^*,a^*,\boldsymbol{R})^\intercal]),
}
and 
\eqn{\Xi=\{\xi\in\bbR^n:\|\xi^\intercal D_{\boldsymbol{R}}h(\boldsymbol{\pi}^*,a^*,\boldsymbol{R})\|\leq 1 \}.
}
Moreover, $\widebar{\scR}(1)$ has the upper bound:
\eqn{\widebar{\scR}(1):=\sup_{\xi\in \Xi}\xi^\intercal Z \leq \sup_{\xi:\|\xi\|_2\leq 1}\xi^\intercal Z:=\|\Tilde{Z}\|_2,
}
where \eqn{\Tilde{Z}\sim\mathcal{N}\left(\boldsymbol{0},E_{P^*}\left[\left(\Big(\frac{1}{\alpha}+|\hat{\lambda}_1^*|\Big)|\boldsymbol{R}|+|\hat{\lambda}_2^*|\boldsymbol{1}\right)\left(\Big(\frac{1}{\alpha}+|\hat{\lambda}_1^*|\Big)|\boldsymbol{R}|+|\hat{\lambda}_2^*|\boldsymbol{1}\right)^\intercal\right]\right)
}
and $\hat{\lambda}_1^* =\lim_{t\rightarrow0^+}\lambda_1^*$, $\hat{\lambda}_2^*=\lim_{t\rightarrow0^+}\lambda_2^*$.

If $\kappa=2$ and $E_{P^*}[D_{\boldsymbol{R}}h(\boldsymbol{\pi}^*,a^*,\boldsymbol{R})]$ is invertible,
\eqn{\widebar{\scR}(2)= \max_{\xi\in\bbR^n}\crl{2\xi^\intercal Z-\xi^\intercal E_{P^*}[ D_{\boldsymbol{R}}h(\boldsymbol{\pi}^*,a^*,\boldsymbol{R})]\xi}=Z^\intercal(E_{P^*}[ D_{\boldsymbol{R}}h(\boldsymbol{\pi}^*,a^*,\boldsymbol{R})])^{-1}Z.}
Let $A=(\boldsymbol{\pi}^*)^\intercal\boldsymbol{R}+a^*$. As $t\rightarrow0^+$, we can get the upper bound:
\eqn{ \hat{\scR}_N(2)=\lim_{t\rightarrow0^+}\widebar{\scR}(2) \leq \left(E_{P^*}\left[-\hat{\lambda}_1^*\mathbf{1}_{\crl{A>0}}-\frac{1+\alpha\hat{\lambda}_1^*}{\alpha}\mathbf{1}_{\crl{A<0}}\right]\right)^{-1}\Tilde{Z}^\intercal\Tilde{Z}.
}
\end{theorem}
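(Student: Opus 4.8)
The plan is to derive Theorem~\ref{theorem2} as a specialization of Theorem~\ref{theorem1} to the smoothed mean-CVaR score, supplemented by an explicit Jacobian computation and two limiting estimates as $t\to 0^+$. First I would confirm that the hypotheses of Theorem~\ref{theorem1}, namely Assumptions~\ref{as4} and~\ref{as5}, hold for $h(\boldsymbol{\pi}^*,a^*,\boldsymbol{R})=g(\boldsymbol{\pi}^*,a^*,\boldsymbol{R})-\lambda_1^*\boldsymbol{R}-\lambda_2^*\boldsymbol{1}$ coming from the first-order condition~\eqref{lagrange_1}. Square-integrability $E_{P^*}\|h\|_2^2<\infty$ follows from Assumption~\ref{ass6.1}: each $|R^i|$ is bounded and the sigmoid factor $1/(1+\exp(A/t))$ with $A=(\boldsymbol{\pi}^*)^\intercal\boldsymbol{R}+a^*$ lies in $(0,1)$, while continuous differentiability in $\boldsymbol{R}$ is immediate because the softplus regularization is $C^\infty$. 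The non-degeneracy required in Assumption~\ref{as5} is where the smoothing is essential, and this verification is carried out separately in Section~\ref{prooftheorem2}. With these in hand, Theorem~\ref{theorem1} applies verbatim after the substitution $(\boldsymbol{x}^*,\boldsymbol{\theta},H)\mapsto(\boldsymbol{\pi}^*,a^*,\boldsymbol{R},Z)$, which simultaneously yields the weak limit $N^{\kappa/2}\scR_N(\kappa)\Rightarrow\widebar{\scR}(\kappa)$ and the three variational formulas for $\kappa>1$, $\kappa=1$, and $\kappa=2$.

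The second step is to compute $D_{\boldsymbol{R}}h$ explicitly. Setting $s=1/(1+\exp(A/t))$, the leading block of $h$ is $-\boldsymbol{R}s/\alpha$; differentiating by the product and chain rules with $\partial_{\boldsymbol{R}}s=-(\boldsymbol{\pi}^*/t)\,s(1-s)$ and $s(1-s)=[(1+\exp(A/t))(1+\exp(-A/t))]^{-1}$ produces the diagonal term $-I_n/(\alpha(1+\exp(A/t)))$ together with the rank-one term $\boldsymbol{R}(\boldsymbol{\pi}^*)^\intercal/(\alpha t(1+\exp(A/t))(1+\exp(-A/t)))$, while the linear part $-\lambda_1^*\boldsymbol{R}$ contributes $-\lambda_1^*I_n$. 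This reproduces the stated Jacobian.

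The third step establishes the two upper bounds. For $\kappa=1$ I would relax the feasible set $\Xi$ to the Euclidean unit ball, so that $\sup_{\xi\in\Xi}\xi^\intercal Z\le\sup_{\|\xi\|_2\le1}\xi^\intercal Z=\|Z\|_2$, and then dominate $\|Z\|_2$ in distribution by $\|\Tilde{Z}\|_2$: since the sigmoid factor is at most $1$ one has the componentwise bound $|h^i|\le(\tfrac{1}{\alpha}+|\lambda_1^*|)|R^i|+|\lambda_2^*|$, which controls the covariance of $Z$ by that of $\Tilde{Z}$ and, through a Gaussian comparison of Euclidean norms (Anderson's inequality for centered balls), yields the stochastic bound. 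For $\kappa=2$, when $E_{P^*}[D_{\boldsymbol{R}}h]$ is invertible the concave program $\max_{\xi}\{2\xi^\intercal Z-\xi^\intercal E_{P^*}[D_{\boldsymbol{R}}h]\xi\}$ is solved by $\xi=(E_{P^*}[D_{\boldsymbol{R}}h])^{-1}Z$, giving $\widebar{\scR}(2)=Z^\intercal(E_{P^*}[D_{\boldsymbol{R}}h])^{-1}Z$. Letting $t\to0^+$, the sigmoid tends to $\mathbf{1}_{\{A<0\}}$ and the rank-one term vanishes in expectation, so $E_{P^*}[D_{\boldsymbol{R}}h]\to-(\hat{\lambda}_1^*+\tfrac{1}{\alpha}P^*(A<0))I_n$; rewriting this scalar as $E_{P^*}[-\hat{\lambda}_1^*\mathbf{1}_{\{A>0\}}-\tfrac{1+\alpha\hat{\lambda}_1^*}{\alpha}\mathbf{1}_{\{A<0\}}]$ and replacing $Z^\intercal Z$ by its dominating surrogate $\Tilde{Z}^\intercal\Tilde{Z}$ delivers the bound for $\hat{\scR}_N(2)$.

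The step I expect to be the main obstacle is the passage $t\to0^+$: one must justify interchanging the limit with expectation for both the Jacobian and the covariance (dominated convergence, using the boundedness of $\boldsymbol{R}$ and the uniform $(0,1)$ bound on the sigmoid), control the rank-one term $\boldsymbol{R}(\boldsymbol{\pi}^*)^\intercal s(1-s)/(\alpha t)$ whose prefactor diverges as $t\to0$ yet integrates to zero provided $A$ carries no atom at $0$, and confirm that the limiting multipliers $\hat{\lambda}_1^*,\hat{\lambda}_2^*$ exist and render $E_{P^*}[D_{\boldsymbol{R}}h]$ positive definite so that the inversion and the sign in the final estimate are legitimate. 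A second delicate point is the covariance comparison underlying the $\Tilde{Z}$ bounds, since only a componentwise absolute-value domination of $h$ is directly available and promoting it to an ordering of Gaussian Euclidean norms needs a dedicated argument; the verification of Assumption~\ref{as5} for the smoothed score, deferred to Section~\ref{prooftheorem2}, is the remaining one.
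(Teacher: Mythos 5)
Your overall strategy matches the paper's: verify Assumptions \ref{as4}--\ref{as5} for the smoothed score $h$, invoke Theorem \ref{theorem1} with the substitution $(\boldsymbol{x}^*,\boldsymbol{\theta},H)\mapsto(\boldsymbol{\pi}^*,a^*,\boldsymbol{R},Z)$, compute the Jacobian $D_{\boldsymbol{R}}h$ by the product/chain rule (your computation agrees with the paper's), and obtain the upper bounds by passing to the Euclidean unit ball, a Gaussian covariance domination, and the limits $t\to 0^+$. However, there are two genuine gaps. First, the step you describe as ``relax the feasible set $\Xi$ to the Euclidean unit ball'' is only an upper bound if $\Xi\subseteq\crl{\xi:\|\xi\|_2\leq 1}$, and this containment is not automatic: $\Xi$ is cut out by the almost-sure constraint $\|\xi^\intercal D_{\boldsymbol{R}}h\|_2\leq 1$, and nothing in the definition forces its elements to have small Euclidean norm. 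The paper spends the bulk of the $\kappa=1$ argument precisely on this point: it derives the lower bound
\[
\|\xi^\intercal D_{\boldsymbol{R}}h(\boldsymbol{\pi}^*,a^*,\boldsymbol{R})\|_2\geq \Big(\tfrac{1}{\alpha[1+\exp(A/t)]}-\tfrac{\|\boldsymbol{R}\|_2\|\boldsymbol{\pi}^*\|_2}{\alpha t[1+\exp(A/t)][1+\exp(-A/t)]}-|\lambda_1^*|\Big)\|\xi\|_2
\]
via H\"older and the reverse triangle inequality, then exhibits an event $\Omega_\epsilon$ of positive probability on which the prefactor exceeds $1-\epsilon$ (this is where the additional hypothesis that $\boldsymbol{R}$ has a positive density enters), concluding that any $\xi$ with $\|\xi\|_2>1$ violates the a.s.\ constraint and hence lies outside $\Xi$. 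Your proposal omits this argument entirely, and without it the $\kappa=1$ bound is unsupported.

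Second, your verification of Assumption \ref{as5} is circular: you state that the non-degeneracy condition ``is carried out separately in Section \ref{prooftheorem2},'' but that section \emph{is} the proof of the theorem you are asked to prove. In the paper this verification is an explicit step inside the proof: one shows that for $\xi\neq\boldsymbol{0}$ the event $\crl{\|\xi^\intercal D_{\boldsymbol{R}}h(\boldsymbol{\pi}^*,a^*,\boldsymbol{R})\|_2=0}$ requires $\boldsymbol{R}$ to satisfy a specific nonlinear equation and hence has probability zero, so $P^*(\|\xi^\intercal D_{\boldsymbol{R}}h\|_2>0)>0$. You need to supply this (or some) argument rather than defer it. On the remaining delicate point --- promoting the componentwise bound $|h^i|\leq(\tfrac{1}{\alpha}+|\hat{\lambda}_1^*|)|R^i|+|\hat{\lambda}_2^*|$ to stochastic domination of Gaussian norms --- you are candid that a dedicated argument is needed; note that the paper itself only asserts that $Cov(\Tilde{Z})-Cov(Z)$ is positive definite without proof, so on this point your proposal is no weaker than the original, but neither does it close the hole.
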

\begin{proof}
See Section \ref{prooftheorem2}.
\end{proof}

\section{Implementation of investment strategy}
\label{section4}
In practice, it is not plausible to rebalance the portfolio daily because of the transaction costs. Therefore, we implement the strategy in the following way.
At time $0$, we invest $\pi_1^*$ and
let it evolve until
\begin{align}\label{timetorebalance}
	\max_{i=1,...,n}\left|\displaystyle\frac{(\pi^i_1)^*(1+R^i_1)(1+R^i_2)\cdots(1+R^i_{t-1})-\pi^i_t}{(\pi^i_1)^*(1+R^i_1)(1+R^i_2)\cdots(1+R^i_{t-1})}\right|> 0.05.
\end{align}
When \eqref{timetorebalance} happens, we rebalance our portfolio to align with our target investment strategy $\pi$. Again, we wait until the $\pi^*$ deviates from $\pi$ more than 5\% and if it does, we rebalance our portfolio. We repeat the scheme until $N$. See the following algorithm.
\begin{itemize}
\item[]{\bf Algorithm}
	\begin{itemize}
		\item[(i)] We let $(\pi^*_t)_{t=1,2,...,N}$ be our investment strategy from each mean-CVaR model. At time 0, we invest $\pi_1^* $and $\tau_0=1$.
		\item[(ii)] For $k\in\bbN$, define stopping times
		\begin{align*}
			\tau_k:=\min\crl{\tau_{k-1}<t\leq T:\max_{i=1,...,n}\left|\displaystyle\frac{(\pi^i_{\tau_{k-1}})^*(1+R^i_1)(1+R^i_2)\cdots(1+R^i_{t-1})-\pi^i_t}{(\pi^i_1)^*(1+R^i_1)(1+R^i_2)\cdots(1+R^i_{t-1})}\right|> 0.05},
		\end{align*}
	and let
	\begin{align*}
		(\pi^i_t)^*:=
		\begin{cases}
			(\pi^i_{\tau_{k-1}})^*(1+R^i_{\tau_{k-1}})(1+R^i_{\tau_{k-1}+1})\cdots(1+R^i_{t-1}),& \text{if } t\in(\tau_{k-1},\tau_k)\\
			\pi_{\tau_k} &\text{if } t= \tau_k
		\end{cases}.
	\end{align*}
	\end{itemize}
\end{itemize}
\begin{Remark}
	Note that the transaction costs of the above algorithm for time $[1,T]$ is not optimised. However, since $N\gg T$, the investments $\pi^*_t$ for $t<T$ have virtually no effect on the performance of investment overall. 
\end{Remark}

When we assume there exist transaction costs, we use the linear transaction costs of 0.2\%. More precisely, our transaction costs is given by
\begin{align*}
	TC=0.002\sum_{k\in\bbN,\tau_k\leq T}\sum^n_{i=1}\left|(\pi^i_{\tau_k-1})^*(1+R_{\tau_k-1})-(\pi^i_{\tau_k})^*\right|.
\end{align*}

\section{Empirical studies}
\label{section5}
In this section, we consider the $L^2$-norm and our robust mean-CVaR models \ref{RMC-1} and \ref{RMC-2}. Firstly, we analyze the convergence of the Wasserstein ambiguity set's size as the in-sample size increases. Then we compare our robust mean-CVaR model's performance with those of well-known strategies.

The data provided by Bloomberg Terminal is used for the empirical study. We select the 100 largest S\&P 500 companies by market cap at three different time points, Feb 1, 2002, June 1, 2004, June 1, 2006, Aug 1, 2008, and June 1, 2009. For each time point, we use the daily adjusted closing price data for the previous 2 years to estimate our parameters and simulate the strategies for the following 8 years. We assume the net investment of each period is a fixed constant $\$1$. We compare the portfolio allocation, the Sharpe ratio, the mean/CVaR, and the cumulative wealth of our robust mean-CVaR models and other conventional methods with and without transaction costs.
\subsection{Influence of the smooth parameter}
To select an appropriate size of the Wasserstein ambiguity set, we transform the non-robust mean-CVaR (\ref{nonsmoothing}, NMC) into a smooth problem (\ref{smoothing}, Smooth). We test the influence of the smooth parameter $t$ updating from $10^{-5}$ to $10^{-1}$ in the smooth problem compared with the non-robust problem.
\FigInsert{fig1}
Figure \ref{fig1} showed as $t$ goes to small (less than $10^{-4}$), smooth mean-CVaR and NMC problems almost have the same optimal values. Thus we can choose smooth mean-CVaR instead of NMC and forward this smooth problem to the calculation of the Wasserstein ambiguity set's size. 

\subsection{Convergence of the Wasserstein ambiguity set's size } 

Next, we will analyze the convergence of the Wasserstein ambiguity set's size as the in-sample size increases. The simulation sample of $\scR_N(\kappa)$ is set to 10000. We estimate $0.95$ quantile of $\scR_N(\kappa)$ and denote the estimate of the quantile by $\eta_{0.95}$. Then our Wasserstein ambiguity set's size is $\delta=\eta_{0.95}/N^{\kappa/2}$.
\FigInsert{fig2}
\FigInsert{fig3}

It is clear in Figure \ref{fig2} and \ref{fig3} that $\delta$ decreases as the in-sample size increases. If the in-sample size reaches infinity, the uncertainty size $\delta$ will be close to 0. We know the model will have too much ambiguity and the available data will be useless when choosing a large $\delta$, and the utility of robustness will be disappeared if we choose a very small $\delta$. Therefore, for later simulations, we choose the past two years as in-sample size to get corresponding parameters.

\subsection{Comparison of our model with conventional methods} 
In this section, we will compare our robust mean-CVaR model with other conventional methods like classical sample base mean-CVaR (\cite{rockafellar2000optimization} and \cite{rockafellar2002conditional}), the robust mean-CVaR under box uncertainty (\cite{zhu2009worst}), and the robust mean-CVaR under distribution ambiguity (\cite{kang2019data}).

The classical mean-CVaR portfolio selection problem is 
\eqn{\min_{\boldsymbol{\pi},a}\crl{a +\displaystyle\frac{1}{\alpha}E_P[-\boldsymbol{\pi}^\intercal \boldsymbol{R}-a]^+}\quad s.t\quad E_{P}(\boldsymbol{\pi}^\intercal \boldsymbol{R})\geq \rho.}
In reality, the investor does not know the underlying distribution $P$. If we use the empirical probability $Q$ instead of $P$ and only a collection of $N$ historical observations $\crl{R_1},...,\crl{R_N}$ of $\boldsymbol{R}$ is available, then the classical mean-CVaR problem can be recast as the the following sample based non-robust mean-CVaR optimization problem
\begin{equation}\tag{NMC}\begin{aligned}
\min_{\pi,a}\;\left\{a +\frac{1}{\alpha N}\sum_{i=1}^N\left[-\boldsymbol{\pi}^\intercal R_i-a\right]^+\right\}\quad s.t\quad \frac{1}{N}\sum_{i=1}^N(\boldsymbol{\pi}^\intercal \boldsymbol{R}_i)\geq \rho.
\end{aligned}\end{equation}
\cite{zhu2009worst} assumed random return vector $\boldsymbol{R}$ follows a discrete distribution and the sample space of $\boldsymbol{R}$ is given by $\crl{R_1,...,R_N}$. Let $p_k$ denote the probability according to $k$-$th$ sample and $\sum_{k=1}^N p_k = 1$, $p_k\geq 0$, $k=1,...,N$. Denote $\boldsymbol{p} = (p_1,...,p_N)^\intercal$.
Then $E_P(\cdot) = \boldsymbol{p}^\intercal(\cdot)$.

Suppose that $\boldsymbol{p}$ belongs to a box, i.e.,
\eqn{\boldsymbol{p}\in\scU_B = \crl{\boldsymbol{p}: \boldsymbol{p} = \boldsymbol{p}^0+\boldsymbol{\eta}, \boldsymbol{1}^\intercal\boldsymbol{\eta}=0,  \underline{\boldsymbol{\eta}}\leq \boldsymbol{\eta}\leq\overline{\boldsymbol{\eta}}},
}
where $\boldsymbol{p}^0$ is a nominal distribution that represents the most
likely distribution, $\underline{\boldsymbol{\eta}}$ and $\overline{\boldsymbol{\eta}}$ are given constant vectors. They consider the following robust mean-CVaR model under box uncertainty
\eq{\label{eq6.16}\min_{\boldsymbol{\pi}}\max_{\boldsymbol{\theta}\in\scU_B}\crl{a +\displaystyle\frac{1}{\alpha}E_P[-\boldsymbol{\pi}^\intercal \boldsymbol{R}-a]^+}
\quad s.t\quad\min_{\boldsymbol{\theta}\in\scU_B}\;E_{P}(\boldsymbol{\pi^\intercal \boldsymbol{R}})\geq \rho.}
Then they get the following dual minimization problem of (\ref{eq6.16}) with variables $(\boldsymbol{\pi},\boldsymbol{u},a,\zeta,z,\boldsymbol{\xi},\boldsymbol{\omega},\delta,$\\$\boldsymbol{\tau},\boldsymbol{\nu})\in \bbR^n\times\bbR^N\times\bbR\times\bbR\times\bbR\times\bbR^N\times\bbR^N\times\bbR\times\bbR^N\times\bbR^N$,
\begin{equation}\tag{BMC}\begin{aligned}&\min\;\zeta\\
&\text{s.t.}\;a+\frac{1}{\alpha}(\boldsymbol{\theta}^0)^\intercal \boldsymbol{u} + \frac{1}{\alpha}(\overline{\boldsymbol{\eta}}^\intercal\boldsymbol{\xi}+\underline{\boldsymbol{\eta}}^\intercal\boldsymbol{\omega})\leq \zeta\;\nonumber\\
& \qquad \boldsymbol{1}z +\boldsymbol{\xi}+\boldsymbol{\omega}=\boldsymbol{u} \nonumber\\
& \qquad \boldsymbol{\xi}\geq 0,\;\boldsymbol{\omega}\leq 0 \nonumber\\
& \qquad u_i \geq -\boldsymbol{\pi}^\intercal\boldsymbol{R}_i -a, i=1,...,N\nonumber\\
&\qquad  u_i \geq 0, i=1,...,N,\nonumber\\\nonumber
&\qquad \boldsymbol{\theta}\in\scU_B,
\end{aligned}\end{equation}
where
\eqn{\scU_B = \{(\boldsymbol{\pi},\delta,\boldsymbol{\tau},\boldsymbol{\nu}):&\boldsymbol{\pi}^\intercal\boldsymbol{1}=1, 0\leq\pi_i\leq 1, \boldsymbol{1}\delta+\boldsymbol{\tau}+\boldsymbol{\nu}=\boldsymbol{\pi}^\intercal\boldsymbol{R},\\
&\boldsymbol{\tau}\leq 0,\boldsymbol{\nu}\geq 0, (\boldsymbol{\theta}^0)^\intercal(\boldsymbol{\pi}^\intercal\boldsymbol{R})+\overline{\boldsymbol{\eta}}^\intercal\boldsymbol{\tau}+\underline{\boldsymbol{\eta}}^\intercal\boldsymbol{\nu}\geq \rho \}.}

\cite{kang2019data} let $E_P(\boldsymbol{R})=\mu$, $E_Q(\boldsymbol{R})=\hat{\mu}$, $E_P(\boldsymbol{RR}^\intercal )=\Sigma$, and $E_Q(\boldsymbol{RR}^\intercal )=\hat{\Sigma}$. They consider the ambiguity sets defined as 
\begin{align}
\scU(\gamma_1,\gamma_2)=
\left\{P\in\scM_+:\begin{array}{l}
       P(\boldsymbol{R}\in\Omega)=1,\\
(\mu-\hat{\mu})^\intercal\hat{\Sigma}^{-1}(\mu-\hat{\mu})\leq \gamma_1,\\
\|\Sigma-\hat{\Sigma}\|_2\leq \gamma_2,\;\Sigma \succ 0
\end{array}
\right\}.
\end{align}
They consider the following robust mean-CVaR model under distribution ambiguity,
\eq{\label{eq6.19}\min_{\boldsymbol{\pi}}\max_{P\in\scU(\gamma_1,\gamma_2)}\crl{a +\displaystyle\frac{1}{\alpha}E_P[-\boldsymbol{\pi}^\intercal \boldsymbol{R}-a]^+}\quad s.t\quad\min_{P\in\scU(\gamma_1,\gamma_2)}\;E_{P}(\boldsymbol{\pi^\intercal \boldsymbol{R}})\geq \rho.}
By introducing auxiliary variables $j$, $k\in\bbR$, they rewrite
the optimization problem (\ref{eq6.19}) as 
\begin{equation}\tag{KMC}\begin{aligned}
&\min_{\pi,j,k}\;-\hat{\mu}^\intercal \boldsymbol{\pi}+\sqrt{\gamma_1}k+\hat{\alpha} j,\\
&\text{s.t.}\;\sqrt{\gamma_1}\sqrt{\boldsymbol{\pi}^\intercal\hat{\Sigma}\boldsymbol{\pi}}\leq \hat{\mu}^\intercal\boldsymbol{\pi}-\rho,\\
&\quad\sqrt{\boldsymbol{\pi}^\intercal(\hat{\Sigma}+\gamma_2\boldsymbol{I}_n)\boldsymbol{\pi}}\leq j,\\
&\quad\sqrt{\boldsymbol{\pi}^\intercal\hat{\Sigma}\boldsymbol{\pi}}\leq k,
\end{aligned}\end{equation}
where $\displaystyle\hat{\alpha}=\sqrt{\frac{\alpha}{1-\alpha}}$.

They use the bootstrap to construct samples for computing the optimal parameters $\gamma_1$ and $\gamma_2$ (See \cite{kang2019data}, Section 3).

\subsubsection*{Simulation Results without transaction costs}
Table \ref{table1} shows the result of the optimal investment without transaction costs for the five methods.
\TabInsert{table1}
The following plots are the compositions of portfolios for each strategy without transaction costs, starting from different dates.
\FigInsert{fig4}
\FigInsert{fig5}
\FigInsert{fig6}
\FigInsert{fig7}
\FigInsert{fig8}
We assume the initial investment amount is \$1 and each strategy is self-financing. We get the cumulative wealth as follows. 
\FigInsert{fig9}
\FigInsert{fig10}
\FigInsert{fig11}
\FigInsert{fig12}
\FigInsert{fig13}
Cumulative wealth is not clear to reflect its variance. We will consider the rolling one-year Sharpe ratio, which is a good way to analyze the historical performance of an investment. This makes it easy to observe the change and robustness throughout the time interval. It tells investors whether the strategy tends to underperform from time to time or whether it it will always perform well. The following figures are the rolling 1-year Sharpe ratios of each strategy without transaction costs, starting from different dates.
\FigInsert{fig14}
\FigInsert{fig15}
\FigInsert{fig16}
\FigInsert{fig17}
\FigInsert{fig18}

\subsubsection*{Simulation Results with transaction costs}
In this subsection, we consider the transaction costs for each strategy. Table \ref{table2} is the comparison between the investment strategies with the transaction costs.
\TabInsert{table2}
The following plots are the compositions of portfolios for each strategy with transaction costs, starting from different dates.
\FigInsert{fig19}
\FigInsert{fig20}
\FigInsert{fig21}
\FigInsert{fig22}
\FigInsert{fig23}

The following plots are the cumulative wealth and maximum drawdown of each strategy with transaction costs. 
\FigInsert{fig24}
\FigInsert{fig25}
\FigInsert{fig26}
\FigInsert{fig27}
\FigInsert{fig28}

The following plots are the rolling 1-year Sharpe ratios of each strategy with transaction costs, starting from different dates.
\FigInsert{fig29}
\FigInsert{fig30}
\FigInsert{fig31}
\FigInsert{fig32}
\FigInsert{fig33}

\subsection{Simulation conclusion}
In our simulations, RMC-1 and RMC-2 models performed competitively compared to NMC, BMC, and KMC models. The simulation results are similar whether there are transaction costs or not.

For the two simulations starting from 2002.02.01 and 2009.06.01, our RMC-1 and RMC-2 models showed the best Sharpe ratio, Mean/CVaR, cumulative wealth, and rolling one-year Sharpe ratio among all strategies. For the simulation starting from 2006.06.01, the RMC-1 model showed the best performance, but the RMC-2 model showed moderate Mean/CVaR and consistently outperformed the KMC method. For the simulation starting from 2004.06.01, the RMC-1 model showed the best Sharpe ratio and the second best Mean/CVaR, but the RMC-2 model showed the worst performance. For the simulation starting from 2008.08.01, RMC-1 and RMC-2 models showed the worst Sharpe ratios and moderate Mean/CVaR outperformed the KMC strategy. Depending on the in-sample data we selected, it is possible that our robust models perform worse than other methods. However, it is worth noting that both RMC-1 and RMC-2 models showed the most well-diversified portfolio allocation among all strategies.

While there is a limitation that we simulate these strategies only for five time frames, our empirical result provides a promising performance of RMC-1 and RMC-2 we suggested.

\section{Conclusion}
\label{section6}
In this article, we formulate a framework to solve distributionally robust optimization problem. We allow the true probability measure to be inside a Wasserstein ball that is specified by the empirical data and the given confidence level. We transform the robust optimization into a non-robust minimization with a penalty term. We provide a method to select an appropriate size of the Wasserstein ambiguity set. Then we apply our RWO framework to the robust mean-CVaR optimization problem. Moreover, its numerical simulations of the US stock market provide a promising result compared to other popular strategies.
\section*{Acknowledgements}
The authors wish to thank the reviewers and the editor for
providing feedback which can help improve this paper.
\section*{Data availability statement}
The data that support the findings of this study are available from Bloomberg.
Restrictions apply to the availability of these data, which were used under license for this study. Data are available from the authors with the permission of Bloomberg.

\bibliographystyle{abbrvnat}
\bibliography{main.bib}

\begin{thebibliography}{20}
\providecommand{\natexlab}[1]{#1}
\providecommand{\url}[1]{\texttt{#1}}
\expandafter\ifx\csname urlstyle\endcsname\relax
  \providecommand{\doi}[1]{doi: #1}\else
  \providecommand{\doi}{doi: \begingroup \urlstyle{rm}\Url}\fi

\bibitem[Artzner et~al.(1999)Artzner, Delbaen, Eber, and
  Heath]{artzner1999coherent}
P.~Artzner, F.~Delbaen, J.-M. Eber, and D.~Heath.
\newblock Coherent measures of risk.
\newblock \emph{Mathematical finance}, 9\penalty0 (3):\penalty0 203--228, 1999.

\bibitem[Blanchet et~al.(2019{\natexlab{a}})Blanchet, Kang, and
  Murthy]{blanchet2019robust}
J.~Blanchet, Y.~Kang, and K.~Murthy.
\newblock Robust wasserstein profile inference and applications to machine
  learning.
\newblock \emph{Journal of Applied Probability}, 56\penalty0 (3):\penalty0
  830--857, 2019{\natexlab{a}}.

\bibitem[Blanchet et~al.(2019{\natexlab{b}})Blanchet, Kang, Murthy, and
  Zhang]{blanchet2019data}
J.~Blanchet, Y.~Kang, K.~Murthy, and F.~Zhang.
\newblock Data-driven optimal transport cost selection for distributionally
  robust optimization.
\newblock In \emph{2019 winter simulation conference (WSC)}, pages 3740--3751.
  IEEE, 2019{\natexlab{b}}.

\bibitem[Delage and Ye(2010)]{delage2010distributionally}
E.~Delage and Y.~Ye.
\newblock Distributionally robust optimization under moment uncertainty with
  application to data-driven problems.
\newblock \emph{Operations research}, 58\penalty0 (3):\penalty0 595--612, 2010.

\bibitem[Esfahani and Kuhn(2018)]{esfahani2018data}
P.~M. Esfahani and D.~Kuhn.
\newblock Data-driven distributionally robust optimization using the
  wasserstein metric: Performance guarantees and tractable reformulations.
\newblock \emph{Mathematical Programming}, 171\penalty0 (1):\penalty0 115--166,
  2018.

\bibitem[Fabozzi et~al.(2007)Fabozzi, Kolm, Pachamanova, and
  Focardi]{fabozzi2007robust}
F.~J. Fabozzi, P.~N. Kolm, D.~A. Pachamanova, and S.~M. Focardi.
\newblock \emph{Robust portfolio optimization and management}.
\newblock John Wiley \& Sons, 2007.

\bibitem[Fournier and Guillin(2015)]{fournier2015rate}
N.~Fournier and A.~Guillin.
\newblock On the rate of convergence in wasserstein distance of the empirical
  measure.
\newblock \emph{Probability Theory and Related Fields}, 162\penalty0
  (3):\penalty0 707--738, 2015.

\bibitem[Hansen and Sargent(2001)]{hansen2001robust}
L.~Hansen and T.~J. Sargent.
\newblock Robust control and model uncertainty.
\newblock \emph{American Economic Review}, 91\penalty0 (2):\penalty0 60--66,
  2001.

\bibitem[Hota et~al.(2018)Hota, Cherukuri, and Lygeros]{hota2018}
A.~R. Hota, A.~Cherukuri, and J.~Lygeros.
\newblock Data-driven chance constrained optimization under wasserstein
  ambiguity sets.
\newblock \emph{arXiv}, 2018.
\newblock URL \url{https://arxiv.org/abs/1805.06729}.

\bibitem[Hota et~al.(2019)Hota, Cherukuri, and Lygeros]{hota2019data}
A.~R. Hota, A.~Cherukuri, and J.~Lygeros.
\newblock Data-driven chance constrained optimization under wasserstein
  ambiguity sets.
\newblock In \emph{2019 American Control Conference (ACC)}, pages 1501--1506.
  IEEE, 2019.

\bibitem[Kang et~al.(2019)Kang, Li, Li, and Zhu]{kang2019data}
Z.~Kang, X.~Li, Z.~Li, and S.~Zhu.
\newblock Data-driven robust mean-cvar portfolio selection under distribution
  ambiguity.
\newblock \emph{Quantitative Finance}, 19\penalty0 (1):\penalty0 105--121,
  2019.

\bibitem[Lobo and Boyd(2000)]{lobo2000worst}
M.~S. Lobo and S.~Boyd.
\newblock The worst-case risk of a portfolio.
\newblock \emph{Unpublished manuscript. Available from
  \url{https://web.stanford.edu/~boyd/papers/pdf/risk_bnd.pdf}}, 2000.

\bibitem[Peng and Lin(1999)]{peng1999non}
J.-M. Peng and Z.~Lin.
\newblock A non-interior continuation method for generalized linear
  complementarity problems.
\newblock \emph{Mathematical Programming}, 86\penalty0 (3):\penalty0 533--563,
  1999.

\bibitem[Pichler and Xu(2018)]{pichler2018quantitative}
A.~Pichler and H.~Xu.
\newblock Quantitative stability analysis for minimax distributionally robust
  risk optimization.
\newblock \emph{Mathematical Programming}, pages 1--31, 2018.

\bibitem[Rockafellar and Uryasev(2002)]{rockafellar2002conditional}
R.~T. Rockafellar and S.~Uryasev.
\newblock Conditional value-at-risk for general loss distributions.
\newblock \emph{Journal of banking \& finance}, 26\penalty0 (7):\penalty0
  1443--1471, 2002.

\bibitem[Rockafellar et~al.(2000)Rockafellar, Uryasev,
  et~al.]{rockafellar2000optimization}
R.~T. Rockafellar, S.~Uryasev, et~al.
\newblock Optimization of conditional value-at-risk.
\newblock \emph{Journal of risk}, 2:\penalty0 21--42, 2000.

\bibitem[Tong et~al.(2008)Tong, Wu, and Qi]{tong2008available}
X.~Tong, F.~F. Wu, and L.~Qi.
\newblock Available transfer capability calculation using a smoothing pointwise
  maximum function.
\newblock \emph{IEEE Transactions on Circuits and Systems I: Regular Papers},
  55\penalty0 (1):\penalty0 462--474, 2008.

\bibitem[T{\"u}t{\"u}nc{\"u} and Koenig(2004)]{tutuncu2004robust}
R.~H. T{\"u}t{\"u}nc{\"u} and M.~Koenig.
\newblock Robust asset allocation.
\newblock \emph{Annals of Operations Research}, 132\penalty0 (1):\penalty0
  157--187, 2004.

\bibitem[Villani(2009)]{villani2009optimal}
C.~Villani.
\newblock \emph{Optimal transport: old and new}, volume 338.
\newblock Springer, 2009.

\bibitem[Zhu and Fukushima(2009)]{zhu2009worst}
S.~Zhu and M.~Fukushima.
\newblock Worst-case conditional value-at-risk with application to robust
  portfolio management.
\newblock \emph{Operations research}, 57\penalty0 (5):\penalty0 1155--1168,
  2009.

\end{thebibliography}

\begin{appendices}
\section{}
\subsection{}
\begin{proof}[Proof of Proposition \ref{proposition1}]
\label{proofproposition1}
We consider the following constraint problem
\eq{\label{eqA1.1}\min_{P\in\scU_\delta(Q)}E_P[g(\boldsymbol{x},\boldsymbol{\theta})]}
equivalently,
\eq{\label{eqA1.2}-\max_{P\in \scU_\delta(Q)}E_P[-g(\boldsymbol{x},\boldsymbol{\theta})].}
According to Proposition 1 in \cite{blanchet2019robust}, we obtain the dual problem
\eq{\label{eq1.3}\max_{P\in \scU_\delta(Q)}E_P[-g(\boldsymbol{x},\boldsymbol{\theta})]=\inf_{\gamma_0\geq0}\left[\gamma_0\delta+\frac{1}{N}\sum_{i=1}^N\Phi_{\gamma_0}(\theta_i)\right],}
where \eq{\label{eq1.4}\Phi_{\gamma_0}(\theta_i)&=\sup_{\boldsymbol{\theta}}\{-g(\boldsymbol{x},\boldsymbol{\theta})-\gamma_0 c(\boldsymbol{\theta},\theta_i)\}\nonumber\\
&=\sup_{\boldsymbol{\theta}}\{-g(\boldsymbol{x},\boldsymbol{\theta})-\gamma_0\|\boldsymbol{\theta}-\theta_i\|_2^\kappa\}.}
Therefore, (\ref{eq1.3}) becomes
\eq{\label{eq1.5}\max_{P\in \scU_\delta(Q)}E_P[-g(\boldsymbol{x},\boldsymbol{\theta})]=\inf_{\gamma_0\geq0}\left\{\gamma_0\delta+\frac{1}{N}\sum_{i=1}^N\brak{\sup_{\boldsymbol{\theta}}\crl{-g(\boldsymbol{x},\boldsymbol{\theta})-\gamma_0\|\boldsymbol{\theta}-\theta_i\|_2^\kappa}}\right\}
}
or
\eq{\label{eq1.6}\min_{P\in \scU_\delta(Q)}E_P[g(\boldsymbol{x},\boldsymbol{\theta})]=-\inf_{\gamma_0\geq0}\left\{\gamma_0\delta+\frac{1}{N}\sum_{i=1}^N\brak{\sup_{\boldsymbol{\theta}}\crl{-g(\boldsymbol{x},\boldsymbol{\theta})-\gamma_0\|\boldsymbol{\theta}-\theta_i\|_2^\kappa}}\right\}.
}
Therefore, the feasible region can be rewritten as
\eqn{\scF_{(\delta,\rho) }:=\crl{\boldsymbol{x}:-\inf_{\gamma_0\geq0}\left\{\gamma_0\delta+\frac{1}{N}\sum_{i=1}^N\brak{\sup_{\boldsymbol{\theta}}\crl{-g(\boldsymbol{x},\boldsymbol{\theta})-\gamma_0\|\boldsymbol{\theta}-\theta_i\|_2^\kappa}}\right\}\geq\rho}.
}
For the primal problem (\ref{eq1}), the inner optimization problem is
\eq{\label{eqA1}\max_{P\in\scU_\delta(Q)}E_P\Big[f(\boldsymbol{x},\boldsymbol{\theta})\Big].}
Based on Proposition 1 in \cite{blanchet2019robust}, the optimal value of (\ref{eqA1}) is
\eqn{\min_{\gamma_1\geq0}\left[\gamma_1\delta+\frac{1}{N}\sum_{i=1}^N\Phi_{\gamma_1}(\theta_i)\right],}
where
\eqn{\Phi_{\gamma_1}(\theta_i)=\sup_{\boldsymbol{\theta}}\Big[f(\boldsymbol{x},\boldsymbol{\theta})-\gamma_1 c(\boldsymbol{\theta},\theta_i)\Big].}
Then the initial optimization problem (\ref{eq1}) becomes
\eq{\label{eqA2}\min_{\boldsymbol{x}\in\scF_{(\delta,\rho)}}\max_{P\in\scU_\delta(Q)}E_P[f(\boldsymbol{x},\boldsymbol{\theta})]&=\min_{\boldsymbol{x}\in\scF_{(\delta,\rho)}}\Big\{\inf_{\gamma_1\geq 0}\Big[\gamma_1\delta+\frac{1}{N}\sum_{i=1}^N\Phi_{\gamma_1}(\theta_i)\Big]\Big\}\\
&=\min_{\boldsymbol{x}\in\scF_{(\delta,\rho)}}\Big\{\inf_{\gamma_1\geq 0}\Big[\gamma_1\delta+\frac{1}{N}\sum_{i=1}^N\sup_{\boldsymbol{\theta}}\crl{f(\boldsymbol{x},\boldsymbol{\theta})-\gamma_1 c(\boldsymbol{\theta},\theta_i)}\Big]\Big\}\\
&=\min_{\boldsymbol{x}\in\scF_{(\delta,\rho)},\gamma_1\geq 0}\crl{ \gamma_1\delta+\frac{1}{N}\sum_{i=1}^N\sup_{\boldsymbol{\theta}}\crl{f(\boldsymbol{x},\boldsymbol{\theta})-\gamma_1\|\boldsymbol{\theta}-\theta_i\|_2^\kappa}}.
}
Therefore, the initial optimization problem (\ref{eq1}) has the following dual problem:
\eqn{& \min_{\gamma_1\geq 0,\boldsymbol{x}}\quad\gamma_1\delta+\frac{1}{N}\sum_{i=1}^N \brak{\sup_{\boldsymbol{\theta}}\crl{f(\boldsymbol{x},\boldsymbol{\theta})-\gamma_1\|\boldsymbol{\theta}-\theta_i\|_2^\kappa}},\\
&\text{s.t.}\quad -\inf_{\gamma_0\geq0}\left\{\gamma_0\delta+\frac{1}{N}\sum_{i=1}^N\brak{\sup_{\boldsymbol{\theta}}\{-g(\boldsymbol{x},\boldsymbol{\theta})-\gamma_0\|\boldsymbol{\theta}-\theta_i\|_2^\kappa\}}\right\}\geq\rho.}
This completes the proof.
\end{proof}

\subsection{}
\begin{proof}[Proof of Proposition \ref{proposition4.1}]
\label{proofA1}
Consider $\kappa=1$. Because the problem
\eqn{\min_{P\in \scU_\delta(Q)}[E_P(\boldsymbol{\pi}^\intercal\boldsymbol{R})]}
is equivalent to
\eqn{-\max_{P\in \scU_\delta(Q)}[E_P((-\boldsymbol{\pi})^\intercal\boldsymbol{R})].}
We  obtain the dual formulation from Proposition \ref{proposition1} that
\eq{\label{A.36}\max_{P\in \scU_\delta(Q)}[E_P((-\boldsymbol{\pi})^\intercal\boldsymbol{R})]=\inf_{\gamma_0^*\geq0}\crl{\gamma_0^*\delta+\frac{1}{N}\sum_{i=1}^N\Phi_{\gamma_0^*}(R_i)},}
where \eqn{\Phi_{\gamma_0^*}(R_i)&=\sup_{\boldsymbol{R}}\{(-\boldsymbol{\pi})^\intercal\boldsymbol{R}-\gamma_0^* c(\boldsymbol{R},R_i)\}.}
For each $i$, we have
\eqn{\Phi_{\gamma_0^*}&=\sup_{\boldsymbol{R}}\{(-\boldsymbol{\pi})^\intercal\boldsymbol{R}-\gamma_0^*\|\boldsymbol{R}-R_i\|_2\}\nonumber\\
&=\sup_{\Delta_i}\{(-\boldsymbol{\pi})^{\intercal}(\Delta_i+R_i)-\gamma_0^*\|\Delta_i\|_2\}\nonumber\\
&=\sup_{\Delta_i}\{(-\boldsymbol{\pi})^{\intercal}\Delta_i-\gamma_0^*\|\Delta_i\|_2\}-\boldsymbol{\pi}^{\intercal}R_i\nonumber\\
&=\sup_{\Delta_i}\{\|\boldsymbol{\pi}\|_2\|\Delta_i\|_2-\gamma_0^*\|\Delta_i\|_2\}-\boldsymbol{\pi}^{\intercal}R_i\nonumber\\
&=\begin{cases} 
-\boldsymbol{\pi}^{\intercal}R_i & \mbox{if}\; \gamma_0^*\geq\|\boldsymbol{\pi}\|_2 
\\ +\infty & \mbox{if}\; \gamma_0^*\leq\|\boldsymbol{\pi}\|_2
\end{cases}}
For the outer minimization (\ref{A.36}), it is sufficient to restrict to $\gamma_0^*\geq\|\boldsymbol{\pi}\|_2$. Then we can get
\eq{\max_{P\in \scU_\delta(Q)}[E_P((-\boldsymbol{\pi})^{\intercal}\boldsymbol{R})]&=\inf_{\gamma_0^*\geq \|\boldsymbol{\pi}\|_2}\crl{\gamma_0^*\delta+\frac{1}{N}\sum_{i=1}^N(-\boldsymbol{\pi}^{\intercal}R_i)}\nonumber\\
&=\delta\|\boldsymbol{\pi}\|_2-E_{Q}[\boldsymbol{\pi}^{\intercal}\boldsymbol{R}]}
or\eq{\min_{P\in \scU_\delta(Q)}[E_P(\boldsymbol{\pi}^{\intercal}\boldsymbol{R})]=E_{Q}[\boldsymbol{\pi}^{\intercal}\boldsymbol{R}]-\delta\|\boldsymbol{\pi}\|_2.}
Therefore, the feasible region can be rewritten as
\eqn{\scF_{(\delta,\rho) }:=\crl{\boldsymbol{\pi}:E_{Q}[\boldsymbol{\pi}^{\intercal}\boldsymbol{R}]-\delta\|\boldsymbol{\pi}\|_2\geq\rho,\;\boldsymbol{\pi}^\intercal\boldsymbol{1}=1}.
}
The optimization problem (\ref{eq4.10}) is 
\eq{\label{A.31}
&\min_{\boldsymbol{\pi}\in\scF_{(\delta,\rho)},a}\max_{P\in\scU_\delta(Q)}\crl{a +\displaystyle\frac{1}{\alpha}E_P[-\boldsymbol{\pi},\boldsymbol{R}-a]^+}\nonumber\\
&=\min_{\boldsymbol{\pi}\in\scF_{(\delta,\rho)},a}\crl{a+\frac{1}{\alpha}\left(\max_{P\in\scU_\delta(Q)}E_P[-\boldsymbol{\pi}^\intercal\boldsymbol{R}-a]^+\right)}.
}
Refer to Proposition \ref{proposition1}, we can get 
\eq{\label{A.32}\max_{P\in\scU_\delta(Q)}E_P[-\boldsymbol{\pi}^\intercal\boldsymbol{R}-a]^+=\inf_{\gamma_1^*\geq0}\crl{\gamma_1^*\delta+\frac{1}{N}\sum_{i=1}^N\Phi_{\gamma_1^*}(R_i)},}
where \eqn{\Phi_{\gamma_1^*}(R_i)&=\sup_{\boldsymbol{R}}\{(-\boldsymbol{\pi}^\intercal\boldsymbol{R}-a)^+-\gamma_1^*(\boldsymbol{R},R_i)\}.}
For each $i$, let us consider the maximization problem and for simplicity we denote $\Delta_i = \boldsymbol{R}-R_i$. Then
\eq{\label{A.33}\Phi_{\gamma_1^*}(R_i)&=\sup_{\boldsymbol{R}}\crl{(-\boldsymbol{\pi}^\intercal\boldsymbol{R}-a)^+ -\gamma_1^*\|\boldsymbol{R}-R_i\|_2}\nonumber\\
&=\sup_{\Delta_i}\crl{(-\boldsymbol{\pi}^\intercal(\Delta_i+R_i)-a)^+ -\gamma_1^*\|\Delta_i\|_2}\nonumber\\
&=\sup_{\Delta_i}\sup_{0\leq \beta_i \leq1}\crl{\beta_i(-\boldsymbol{\pi}^\intercal(\Delta_i+R_i)-a)-\gamma_1^*\|\Delta_i\|_2}\nonumber\\
&=\sup_{0\leq \beta_i \leq1}\sup_{\Delta_i}\crl{-\beta_i\boldsymbol{\pi}^\intercal\Delta_i-\gamma_1^*\|\Delta_i\|_2+\beta_i(-\boldsymbol{\pi}^\intercal R_i-a)}\nonumber\\
&=\sup_{0\leq \beta_i \leq1}\sup_{\Delta_i}\crl{\beta_i\|\boldsymbol{\pi}\|_2\|\Delta_i\|_2-\gamma_1^*\|\Delta_i\|_2+\beta_i(-\boldsymbol{\pi}^\intercal R_i-a)}\nonumber\\
&=\begin{cases} 
(-\boldsymbol{\pi}^\intercal R_i-a)^+ & \mbox{if}\; \gamma_1^*\geq\|\boldsymbol{\pi}\|_2 
\\ +\infty & \mbox{if}\; \gamma_1^*\leq\|\boldsymbol{\pi}\|_2.
\end{cases}}
The first equality follows from the observation that $(r)^+ = \sup_{0\leq \beta \leq 1}\beta r$. Note that the function now is concave in $\Delta_i$ and linear in $\beta$, and $\beta$ is in a compact set. Then we can get the second equality by applying minimax theorem to switch the order of maxima. For the third equality, we use the H$\ddot{o}$lder inequality $|\boldsymbol{\pi}^\intercal\Delta_i|\leq\|\boldsymbol{\pi}\|_2\|\Delta_i\|_2$.
Finally we choose the the restriction $\gamma_1^*\geq\|\boldsymbol{\pi}\|_2$ for the outer minimization.

As a result, the outer minimization problem (\ref{A.32}) becomes 
\eq{&\inf_{\gamma_1^*\geq \|\boldsymbol{\pi}\|_2}\crl{\gamma_1^*\delta+\frac{1}{N}\sum_{i=1}^N(-\boldsymbol{\pi}^\intercal R_i-a)^+}=\frac{1}{N}\sum_{i=1}^N(-\boldsymbol{\pi}^\intercal R_i-a)^+ + \delta\|\boldsymbol{\pi}\|_2.
}
Therefore, the dual problem of (\ref{eq4.10}) is 
\eqn{&\min_{\boldsymbol{\pi},a}\;\crl{a +\displaystyle\frac{1}{\alpha}\left(E_Q[-\boldsymbol{\pi}^\intercal R_i-a]^+ + \delta\|\boldsymbol{\pi}\|_2\right)},\\
&\text{s.t.}\quad E_{Q}[\boldsymbol{\pi}^{\intercal}\boldsymbol{R}]-\delta\|\boldsymbol{\pi}\|_2\geq \rho,\\
&\qquad\;\boldsymbol{\pi}^\intercal \boldsymbol{1}=1.
}
This completes the proof for \ref{RMC-1}.

If $\kappa=2$, then we obtain the dual formulation from Proposition \ref{proposition1} that
\eq{\label{A.39}\max_{P\in \scU_\delta(Q)}[E_P((-\boldsymbol{\pi})^\intercal\boldsymbol{R})]=\inf_{\gamma'\geq0}\crl{\gamma'\delta+\frac{1}{N}\sum_{i=1}^N\Phi_{\gamma'}(R_i)},}
where \eqn{\Phi_{\gamma'}(R_i)&=\sup_{\boldsymbol{R}}\{(-\boldsymbol{\pi})^\intercal\boldsymbol{R}-\gamma' c(\boldsymbol{R},R_i)\}\nonumber\\
&=\sup_{\boldsymbol{R}}\{(-\boldsymbol{\pi})^\intercal\boldsymbol{R}-\gamma'\|\boldsymbol{R}-R_i\|_2^2\}\nonumber\\
&=\sup_\Delta\{(-\boldsymbol{\pi})^{\intercal}(\Delta+R_i)-\gamma'\|\Delta\|_2^2\}\nonumber\\
&=\sup_\Delta\{(-\boldsymbol{\pi})^{\intercal}\Delta-\gamma'\|\Delta\|_2^2\}-\boldsymbol{\pi}^{\intercal}R_i\nonumber\\
&=\sup_\Delta\{\|\boldsymbol{\pi}\|_2\|\Delta\|_2-\gamma'\|\Delta\|_2^2\}-\boldsymbol{\pi}^{\intercal}R_i\nonumber\\
&=\frac{\|\boldsymbol{\pi}\|_2^2}{4 \gamma'}-\boldsymbol{\pi}^\intercal R_i.}
Therefore, (\ref{A.39}) becomes
\eq{\max_{P\in \scU_\delta(Q)}[E_P((-\boldsymbol{\pi})^{\intercal}\boldsymbol{R})]&=\inf_{\gamma'\geq0}\crl{\gamma'\delta+\frac{1}{N}\sum_{i=1}^N\left[\frac{\|\boldsymbol{\pi}\|_2^2}{4\gamma'}-\boldsymbol{\pi}^{\intercal}R_i\right]}\nonumber\\
&=\inf_{\gamma'\geq0}\crl{\gamma'\delta+\frac{\|\boldsymbol{\pi}\|_2^2}{4\gamma'}-E_{Q}[\boldsymbol{\pi}^{\intercal}\boldsymbol{R}]}\nonumber\\
&=\sqrt{\delta}\|\boldsymbol{\pi}\|_2-E_{Q}\left[\boldsymbol{\pi}^{\intercal}\boldsymbol{R}\right]}
or\eq{\min_{P\in \scU_\delta(Q)}[E_P(\boldsymbol{\pi}^{\intercal}\boldsymbol{R})]=E_{Q}[\boldsymbol{\pi}^{\intercal}\boldsymbol{R}]-\sqrt{\delta}\|\boldsymbol{\pi}\|_2.}
For $\kappa=2$, the feasible region can be rewritten as
\eqn{\scF_{(\delta,\rho) }:=\crl{\boldsymbol{\pi}:E_{Q}[\boldsymbol{\pi}^{\intercal}\boldsymbol{R}]-\sqrt{\delta}\|\boldsymbol{\pi}\|_2\geq\rho,\;\boldsymbol{\pi}^\intercal\boldsymbol{1}=1}.
}
Let $f(\boldsymbol{\pi},a,\boldsymbol{R})=\max\left\{\displaystyle-\frac{1}{\alpha}\boldsymbol{\pi}^\intercal \boldsymbol{R}+a\brak{1-\frac{1}{\alpha}},a\right\}$. Consider our inner optimization problem
\eq{\label{A.42}\max_{P\in\scU_\delta(Q)}E_P\Big[f(\boldsymbol{\pi},\boldsymbol{R})\Big].}
According to Proposition \ref{proposition1}, the optimal value of (\ref{A.42}) is
\eqn{\min_{\gamma\geq0}\Big[\gamma\delta+\frac{1}{N}\sum_{i=1}^N\Phi_{\gamma}(R_i)\Big],}
where
\eqn{\Phi_{\gamma}(R_i)=\sup_{\boldsymbol{R}}\Big[f(\boldsymbol{\pi},\boldsymbol{R})-\gamma c(\boldsymbol{R},R_i)\Big].}
Then the inner optimization problem (\ref{A.42}) becomes
\eq{\label{A.43}\max_{P\in\scU_\delta(Q)}E_P[f(\boldsymbol{\pi},\boldsymbol{R})]&=\inf_{\gamma\geq 0}\Big[\gamma\delta+\frac{1}{N}\sum_{i=1}^N\Phi_{\gamma}(R_i)\Big]\\
&=\inf_{\gamma\geq 0}\Big[\gamma\delta+\frac{1}{N}\sum_{i=1}^N\sup_{\boldsymbol{R}}\crl{f(\boldsymbol{\pi},\boldsymbol{R})-\gamma c(\boldsymbol{R},R_i)}\Big]\\
&=\inf_{\gamma\geq 0}\Big[\gamma\delta+\frac{1}{N}\sum_{i=1}^N\sup_{\boldsymbol{R}}\crl{f(\boldsymbol{\pi},\boldsymbol{R})-\gamma\|\boldsymbol{R}-R_i\|_2^2}\Big].
}
Introducing the the auxiliary variables $s_i$, $i\leq N$, allows us to reformulate (\ref{A.43}) as
\eqn{& \inf_{\gamma\geq 0,s_i}\gamma\delta+\frac{1}{N}\sum_{i=1}^N s_i,\nonumber\\
&\text{s.t.}\; \sup_{\boldsymbol{R}}\crl{f(\boldsymbol{\pi},\boldsymbol{R})-\gamma\|\boldsymbol{R}-R_i\|^2_2}\leq s_i.
}
If $f(\boldsymbol{\pi},\boldsymbol{R})=\displaystyle-\frac{1}{\alpha}\boldsymbol{\pi}^\intercal \boldsymbol{R}+a\brak{1-\frac{1}{\alpha}}$, we have
\eqn{&\sup_{\boldsymbol{R}}\crl{f(\boldsymbol{\pi},\boldsymbol{R})- \gamma\|\boldsymbol{R}-R_i\|_2^2}\nonumber\\
&=\sup_{\boldsymbol{R}}\crl{-\frac{1}{\alpha}\boldsymbol{\pi}^\intercal \boldsymbol{R}+a\brak{1-\frac{1}{\alpha}}- \gamma\|\boldsymbol{R}-R_i\|_2^2}\nonumber\\
&=\sup_{\Delta}\crl{-\frac{1}{\alpha}\boldsymbol{\pi}^\intercal (\Delta+R_i)+a\brak{1-\frac{1}{\alpha}}- \gamma\|\Delta\|_2^2}\nonumber\\
&=\sup_{\Delta}\crl{-\gamma\|\Delta\|_2^2-\frac{1}{\alpha}\boldsymbol{\pi}^\intercal\Delta}-\frac{1}{\alpha}\boldsymbol{\pi}^\intercal R_i+a\brak{1-\frac{1}{\alpha}}\nonumber\\
&=\frac{\|\boldsymbol{\pi}\|_2^2}{4\gamma\alpha^2}-\frac{1}{\alpha}\boldsymbol{\pi}^\intercal R_i+a\brak{1-\frac{1}{\alpha}}\leq s_i.
}
If $f(\boldsymbol{\pi},\boldsymbol{R})=a$, we obtain
\eqn{a\leq s_i.}
Therefore, the inner maximum problem (\ref{A.42}) has the dual problem:
\eq{\label{A5.6}& \inf_{\gamma\geq 0,s_i}\gamma\delta+\frac{1}{N}\sum_{i=1}^N s_i,\nonumber\\
&\text{s.t.}\; \frac{\|\boldsymbol{\pi}\|_2^2}{4\gamma\alpha^2}-\frac{1}{\alpha}\boldsymbol{\pi}^\intercal R_i+a\brak{1-\frac{1}{\alpha}}\leq s_i,\nonumber\\
&\qquad a\leq s_i.
}
Finally, the initial optimization problem (\ref{eq4.10}) is equivalent to 
\eqn{& \min_{\boldsymbol{\pi},a,\gamma,s_i}\gamma\delta+\frac{1}{N}\sum_{i=1}^N s_i,\\
&\text{s.t.}\quad \frac{\|\boldsymbol{\pi}\|_2^2}{4\gamma\alpha^2}-\frac{1}{\alpha}\boldsymbol{\pi}^\intercal R_i+a\brak{1-\frac{1}{\alpha}}\leq s_i,\\
&\qquad a\leq s_i,\\
&\qquad E_{Q}[\boldsymbol{\pi}^{\intercal}\boldsymbol{R}]-\sqrt{\delta}\|\boldsymbol{\pi}\|_2\geq\rho,\\
&\qquad\boldsymbol{\pi}^\intercal 1= 1,\\
&\qquad\gamma\geq 0,\\
&\qquad a\in\bbR,\;\;s_i\in\bbR.
}
This completes the proof for \ref{RMC-2}.
\end{proof}

\subsection{}
\begin{proof}[Proof of Theorem \ref{theorem2}]
\label{prooftheorem2}

Assume $\boldsymbol{\pi}^*$ and $a^*$ satisfy the optimal condition \eqn{E_{P^*}[h(\boldsymbol{\pi}^*,a^*,\boldsymbol{R})]=\boldsymbol{0},}
where 
\eqn{h(\boldsymbol{\pi}^*,a^*,\boldsymbol{R}) =g(\boldsymbol{\pi}^*,a^*,\boldsymbol{R})-\lambda_1^*\boldsymbol{R}-\lambda_2^*\boldsymbol{1},
}
and
\eqn{g(\boldsymbol{\pi}^*,a^*,\boldsymbol{R})=\frac{-\boldsymbol{R}}{\alpha[1+exp(((\boldsymbol{\pi}^*)^\intercal\boldsymbol{R}+a^*)/t)]}.
}
We know
\eqn{\|h(\boldsymbol{\pi}^*,a^*,\boldsymbol{R})\|_2^2\leq \||g(\boldsymbol{\pi}^*,a^*,\boldsymbol{R})|+|\lambda_1^*\boldsymbol{R}|+|\lambda_2^*\boldsymbol{1}|\|_2^2.
}
Thus $E_{P^*}\|h(\boldsymbol{\pi}^*,a^*,\boldsymbol{R})\|_2^2$ is finite. Denote $I_n$ $n\times n$ identity matrix.
\eqn{D_{\boldsymbol{R}}h(\boldsymbol{\pi}^*,a^*,\boldsymbol{R}) &= \frac{-I_n}{\alpha[1+exp(((\boldsymbol{\pi}^*)^\intercal\boldsymbol{R}+a^*)/t)]}\\
&+ \frac{\boldsymbol{R}(\boldsymbol{\pi}^*)^\intercal}{\alpha t[1+exp(((\boldsymbol{\pi}^*)^\intercal\boldsymbol{R}+a^*)/t)][1+exp((-(\boldsymbol{\pi}^*)^\intercal\boldsymbol{R}-a^*)/t)]}-\lambda_1^*I_n.
}
It is continuously differentiable. For any $\xi \neq \boldsymbol{0}$, we know
\eqn{&P^{*}\Big(\|\xi^\intercal D_{\boldsymbol{R}}h(\boldsymbol{\pi}^*,a^*,\boldsymbol{R})\|_2=0\Big)\\
&=P^{*}\Big(t[1+exp((-(\boldsymbol{\pi}^*)^\intercal\boldsymbol{R}-a^*)/t)]\xi=(\xi^\intercal\boldsymbol{R})\boldsymbol{\pi}^*\\
&\quad -(\lambda_1^*\alpha t[1+exp(((\boldsymbol{\pi}^*)^\intercal\boldsymbol{R}+a^*)/t)][1+exp((-(\boldsymbol{\pi}^*)^\intercal\boldsymbol{R}-a^*)/t)])\xi\Big)=0
}
thus satisfying $P^*(\|\xi^\intercal D_{\boldsymbol{R}}h(\boldsymbol{\pi}^*,a^*,\boldsymbol{R})\|_2>0)>0$. This completes the proof that Assumptions \ref{as4} and \ref{as5} can also hold for $h(\boldsymbol{\pi}^*,a^*,\boldsymbol{R})$.

As all the conditions required in Theorem \ref{theorem1}, we obtain
\eqn{\sqrt{N}\scR_N(1) \Rightarrow \widebar{\scR}(1) := \sup_{\xi\in \Xi}\xi^\intercal Z,
}
where
\eqn{Z\sim \mathcal{N}(\boldsymbol{0},E_{P^*}[h(\boldsymbol{\pi}^*,a^*,\boldsymbol{R})h(\boldsymbol{\pi}^*,a^*,\boldsymbol{R})^\intercal])
}
and the set 
\eqn{\Xi=\{\xi\in\bbR^n:\|\xi^\intercal D_{\boldsymbol{R}}g(\boldsymbol{\pi}^*,a^*,\boldsymbol{R})\|\leq 1 \}.}
Proof of the upper bound for $\scR_N(1)$.
First, we claim that $\Xi$ is a subset of the norm ball $\{\xi\in\bbR^n:\|\xi\|_2\leq 1\}$. Due to H$\ddot{o}$lder's inequality $|\xi^\intercal\boldsymbol{R}|\leq \|\xi\|_2\|\boldsymbol{R}\|_2$, we obtain 
\eq{\label{eqA5.1}\|\xi^\intercal D_{\boldsymbol{R}}h(\boldsymbol{\pi}^*,a^*,\boldsymbol{R})\|_2 &=\Big\|\frac{-\xi^\intercal}{\alpha[1+exp(((\boldsymbol{\pi}^*)^\intercal\boldsymbol{R}+a^*)/t)]}\\
&\quad + \frac{(\xi^\intercal\boldsymbol{R})(\boldsymbol{\pi}^*)^\intercal}{\alpha t[1+exp(((\boldsymbol{\pi}^*)^\intercal\boldsymbol{R}+a^*)/t)][1+exp((-(\boldsymbol{\pi}^*)^\intercal\boldsymbol{R}-a^*)/t)]}-\lambda_1^*\xi^\intercal\Big\|_2
\\&\geq 
\Big\|\frac{-\xi}{\alpha[1+exp(((\boldsymbol{\pi}^*)^\intercal\boldsymbol{R}+a^*)/t)]}\\
&\quad + \frac{(\xi^\intercal\boldsymbol{R})\boldsymbol{\pi}^*}{\alpha t[1+exp(((\boldsymbol{\pi}^*)^\intercal\boldsymbol{R}+a^*)/t)][1+exp((-(\boldsymbol{\pi}^*)^\intercal\boldsymbol{R}-a^*)/t)]}\Big\|_2-|\lambda_1^*|\|\xi\|_2\\
&\geq 
\Big\|\frac{\xi}{\alpha[1+exp(((\boldsymbol{\pi}^*)^\intercal\boldsymbol{R}+a^*)/t)]}\Big\|_2\\
&\quad -\Big\|\frac{(\xi^\intercal\boldsymbol{R})\boldsymbol{\pi}^*}{\alpha t[1+exp(((\boldsymbol{\pi}^*)^\intercal\boldsymbol{R}+a^*)/t)][1+exp((-(\boldsymbol{\pi}^*)^\intercal\boldsymbol{R}-a^*)/t)]}\Big\|_2-|\lambda_1^*|\|\xi\|_2\\
&\geq \Big( \frac{1}{\alpha[1+exp(((\boldsymbol{\pi}^*)^\intercal\boldsymbol{R}+a^*)/t)]}\\
&\quad -\frac{\|\boldsymbol{R}\|_2\|\boldsymbol{\pi}^*\|_2}{\alpha t[1+exp(((\boldsymbol{\pi}^*)^\intercal\boldsymbol{R}+a^*)/t)][1+exp((-(\boldsymbol{\pi}^*)^\intercal\boldsymbol{R}-a^*)/t)]}-|\lambda_1^*|\Big)\|\xi\|_2.
}
 Following (\ref{eqA5.1}), if $\xi\in\bbR^n$ is such that $\|\xi\|_2=(1-\epsilon)^{-2}>1$ for a given $\epsilon>0$, then 
 \eqn{\|\xi^\intercal D_{\boldsymbol{R}}h(\boldsymbol{\pi}^*,a^*,\boldsymbol{R})\|_2>1,} 
 whenever
\eqn{(\boldsymbol{\pi}^*,a^*,\boldsymbol{R},\lambda_1^*)\in\Omega_{\epsilon}:=\Big\{(\boldsymbol{\pi}^*,a^*,\boldsymbol{R},\lambda_1^*):
\frac{1}{\alpha[1+exp(((\boldsymbol{\pi}^*)^\intercal\boldsymbol{R}+a^*)/t)]}\geq 1-\frac{\epsilon}{2},\\
\frac{\|\boldsymbol{R}\|_2\|\boldsymbol{\pi}\|_2}{\alpha t[1+exp(((\boldsymbol{\pi}^*)^\intercal\boldsymbol{R}+a^*)/t)][1+exp((-(\boldsymbol{\pi}^*)^\intercal\boldsymbol{R}-a^*)/t)]}+|\lambda_1^*|\leq \frac{\epsilon}{2}.
\Big\}
}
Since $\boldsymbol{R}$ has positive probability density almost everywhere, the set $\Omega_{\epsilon}$ has positive probability for every $\epsilon>0$. Thus, if $\|\xi\|_2>1$, $\|\xi^\intercal D_{\boldsymbol{R}}h(\boldsymbol{\pi}^*,a^*,\boldsymbol{R})\|_2 >1$ with positive probability. Therefore, $\Xi$ is a subset of $\crl{\xi:\|\xi\|_2\leq 1}$. Consequently,
\eqn{\sqrt{N}\scR_N(1)\Rightarrow \widebar{\scR}(1):=\sup_{\xi\in \Xi}\xi^\intercal Z \leq \sup_{\xi:\|\xi\|_2\leq 1}\xi^\intercal Z=\|\Tilde{Z}\|_2.
}
We can estimate $\Tilde{Z}$. When $t\rightarrow 0^+$, we have 
\eqn{\hat{g}(\boldsymbol{\pi}^*,a^*,\boldsymbol{R})=\lim_{t\rightarrow 0^+}g(\boldsymbol{\pi}^*,a^*,\boldsymbol{R})=\begin{cases} 
\boldsymbol{0} & \mbox{if}\; (\boldsymbol{\pi}^*)^\intercal\boldsymbol{R}+a^*>0\\
-\boldsymbol{R}/2\alpha  & \mbox{if}\; (\boldsymbol{\pi}^*)^\intercal\boldsymbol{R}+a^*=0
\\ -\boldsymbol{R}/\alpha & \mbox{if}\; (\boldsymbol{\pi}^*)^\intercal\boldsymbol{R}+a^*<0.
\end{cases}
}
Let $A=(\boldsymbol{\pi}^*)^\intercal\boldsymbol{R}+a^*$ and we have
\eqn{E_{P^*}(\hat{g}^i(\boldsymbol{\pi}^*,a^*,\boldsymbol{R}))= E_{P^*}\left(-\frac{R^i}{2\alpha}\mathbf{1}_{\crl{A=0}}-\frac{R^i}{\alpha}\mathbf{1}_{\crl{A<0}}\right)=-\frac{1}{\alpha} E_{P^*}(R^i\mathbf{1}_{\crl{A<0}}).}
Then we can obtain $\hat{\lambda}_1^*$ and $\hat{\lambda}_2^*$ as $t\rightarrow 0^+$,
\eqn{\hat{\lambda}_1^*=\lim_{t\rightarrow 0^+}\lambda_1^*&=\frac{E_{P^*}\left(-\frac{R^i}{2\alpha}\mathbf{1}_{\crl{A=0}}-\frac{R^i}{\alpha}\mathbf{1}_{\crl{A<0}}\right)-\sum_{i=1}^n(\pi^*)^i E_{P^*}\left(-\frac{R^i}{2\alpha}\mathbf{1}_{\crl{A=0}}-\frac{R^i}{\alpha}\mathbf{1}_{\crl{A<0}}\right)}{E_{P^*}[R^i]-\rho}\\
&=\frac{-\frac{1}{\alpha}E_{P^*}\left(R^i\mathbf{1}_{\crl{A<0}}\right)+\frac{1}{\alpha}\sum_{i=1}^n(\pi^*)^i E_{P^*}\left(R^i\mathbf{1}_{\crl{A<0}}\right)}{E_{P^*}[R^i]-\rho}
}
and 
\eqn{\hat{\lambda}_2^*=\lim_{t\rightarrow 0^+}\lambda_2^*=-\frac{1}{\alpha} \sum_{i=1}^n(\pi^*)^i E_{P^*}\left(R^i\mathbf{1}_{\crl{A<0}}\right)-\hat{\lambda}_1^*\rho.
}
Thus we have 
\eqn{\hat{h}(\boldsymbol{\pi}^*,a^*,\boldsymbol{R})=\lim_{t\rightarrow0^+}h(\boldsymbol{\pi}^*,a^*,\boldsymbol{R})=\hat{g}(\boldsymbol{\pi}^*,a^*,\boldsymbol{R})-\hat{\lambda}_1^*\boldsymbol{R}-\hat{\lambda}_2^*\boldsymbol{1}
}
and
\eqn{\|\hat{h}(\boldsymbol{\pi}^*,a^*,\boldsymbol{R})\|_2^2&\leq\||\hat{g}(\boldsymbol{\pi}^*,a^*,\boldsymbol{R})|+|\hat{\lambda}_1^*\boldsymbol{R}|+|\hat{\lambda}_2^*\boldsymbol{1}|\|_2^2\\
&\leq \left\|\frac{|\boldsymbol{R}|}{\alpha}+|\hat{\lambda}_1^*\boldsymbol{R}|+|\hat{\lambda}_2^*\boldsymbol{1}|\right\|_2^2\\
&= \left\|\left(\frac{1}{\alpha}+|\hat{\lambda}_1^*|\right)|\boldsymbol{R}|+|\hat{\lambda}_2^*|\boldsymbol{1}\right\|_2^2.
}
If we let 
\eqn{\Tilde{Z}\sim\mathcal{N}\left(\boldsymbol{0},E_{P^*}\left[\left(\Big(\frac{1}{\alpha}+|\hat{\lambda}_1^*|\Big)|\boldsymbol{R}|+|\hat{\lambda}_2^*|\boldsymbol{1}\right)\left(\Big(\frac{1}{\alpha}+|\hat{\lambda}_1^*|\Big)|\boldsymbol{R}|+|\hat{\lambda}_2^*|\boldsymbol{1}\right)^\intercal\right]\right),
}
then $Cov(\Tilde{Z})-Cov(Z)$ is positive definite. Thus, $\sqrt{N}\scR_N(1)$ is
stochastically dominated by $\|\Tilde{Z}\|_2$.

Proof of the upper bound for $\scR_N(2)$. If $p=q=\kappa=2$ and $E_{P^*}[D_{\boldsymbol{R}}h(\boldsymbol{\pi}^*,a^*,\boldsymbol{R})]$ is invertible,
\eqn{\widebar{\scR}(2)= \max_{\xi\in\bbR^n}\crl{2\xi^\intercal Z-\xi^\intercal E_{P^*}[ D_{\boldsymbol{R}}h(\boldsymbol{\pi}^*,a^*,\boldsymbol{R})]\xi}=Z^\intercal (E_{P^*}[ D_{\boldsymbol{R}}h(\boldsymbol{\pi}^*,a^*,\boldsymbol{R})])^{-1}Z.
}
As $t\rightarrow 0^+$, we have 
\eqn{\hat{D}_{\boldsymbol{R}}h(\boldsymbol{\pi}^*,a^*,\boldsymbol{R})
&=\lim_{t\rightarrow 0^+}D_{\boldsymbol{R}}h(\boldsymbol{\pi}^*,a^*,\boldsymbol{R})\\
&=\begin{cases} 
-\hat{\lambda}_1^*I_n & \mbox{if}\; (\boldsymbol{\pi}^*)^\intercal\boldsymbol{R}+a^*>0\\
\infty & \mbox{if}\; (\boldsymbol{\pi}^*)^\intercal\boldsymbol{R}+a^*=0
\\ -(\frac{1}{\alpha}+\hat{\lambda}_1^*)I_n & \mbox{if}\; (\boldsymbol{\pi}^*)^\intercal\boldsymbol{R}+a^*<0.
\end{cases}
}
Its expectation becomes 
\eqn{E_{P^*}[\hat{D}_{\boldsymbol{R}}h(\boldsymbol{\pi}^*,a^*,\boldsymbol{R})]=E_{P^*}\left[-\hat{\lambda}_1^*\mathbf{1}_{\crl{A>0}}-\frac{1+\alpha\hat{\lambda}_1^*}{\alpha}\mathbf{1}_{\crl{A<0}}\right]I_n
}
and the inverse is 
\eqn{(E_{P^*}[\hat{D}_{\boldsymbol{R}}h(\boldsymbol{\pi}^*,a^*,\boldsymbol{R})])^{-1}=\left(E_{P^*}\left[-\hat{\lambda}_1^*\mathbf{1}_{\crl{A>0}}-\frac{1+\alpha\hat{\lambda}_1^*}{\alpha}\mathbf{1}_{\crl{A<0}}\right]\right)^{-1}I_n.
}
Then we have
\eqn{\hat{\scR}_N(2)=\lim_{t\rightarrow0^+}\widebar{\scR}(2)&=\hat{Z}^\intercal(E_{P^*}[\hat{D}_{\boldsymbol{R}}h(\boldsymbol{\pi}^*,a^*,\boldsymbol{R})])^{-1}\hat{Z}\\
&=\left(E_{P^*}\left[-\hat{\lambda}_1^*\mathbf{1}_{\crl{A>0}}-\frac{1+\alpha\hat{\lambda}_1^*}{\alpha}\mathbf{1}_{\crl{A<0}}\right]\right)^{-1}\hat{Z}^\intercal \hat{Z},
}
where 
\eqn{\hat{Z}\sim \mathcal{N}(\boldsymbol{0},E_{P^*}[\hat{h}(\boldsymbol{\pi}^*,a^*,\boldsymbol{R})\hat{h}(\boldsymbol{\pi}^*,a^*,\boldsymbol{R})^\intercal]).}
As $t\rightarrow0^+$, we know $\|\hat{h}(\boldsymbol{\pi}^*,a^*,\boldsymbol{R})\|_2^2\leq\left\|\left(\displaystyle\frac{1}{\alpha}+|\hat{\lambda}_1^*|\right)|\boldsymbol{R}|+|\hat{\lambda}_2^*|\boldsymbol{1}\right\|_2^2$. Therefore we obtain the upper bound 
\eqn{N\scR_N(2) \Rightarrow \hat{\scR}_N(2) \leq \left(E_{P^*}\left[-\hat{\lambda}_1^*\mathbf{1}_{\crl{A>0}}-\frac{1+\alpha\hat{\lambda}_1^*}{\alpha}\mathbf{1}_{\crl{A<0}}\right]\right)^{-1}\Tilde{Z}^\intercal\Tilde{Z},
}
where
\eqn{\Tilde{Z}\sim\mathcal{N}\left(\boldsymbol{0},E_{P^*}\left[\left(\Big(\frac{1}{\alpha}+|\hat{\lambda}_1^*|\Big)|\boldsymbol{R}|+|\hat{\lambda}_2^*|\boldsymbol{1}\right)\left(\Big(\frac{1}{\alpha}+|\hat{\lambda}_1^*|\Big)|\boldsymbol{R}|+|\hat{\lambda}_2^*|\boldsymbol{1}\right)^\intercal\right]\right).
}
\end{proof}
\end{appendices}

\newpage
\section*{Figures}
\begin{figure}[H]
\centering
\includegraphics[width=15cm]{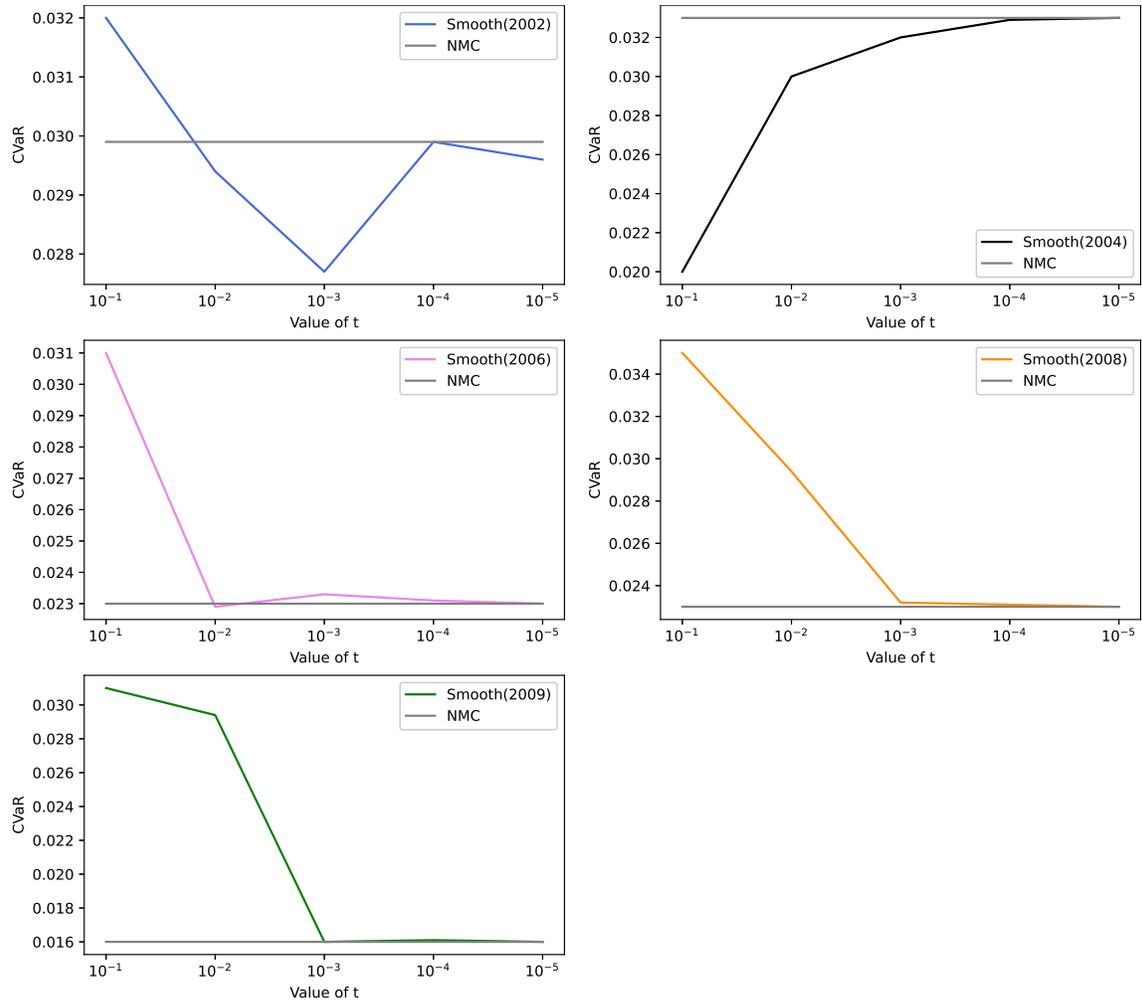}
\caption{Influence of the smooth parameter $t$}
\label{fig1}
\end{figure}

\begin{figure}[H]
\centering
\includegraphics[width=14cm]{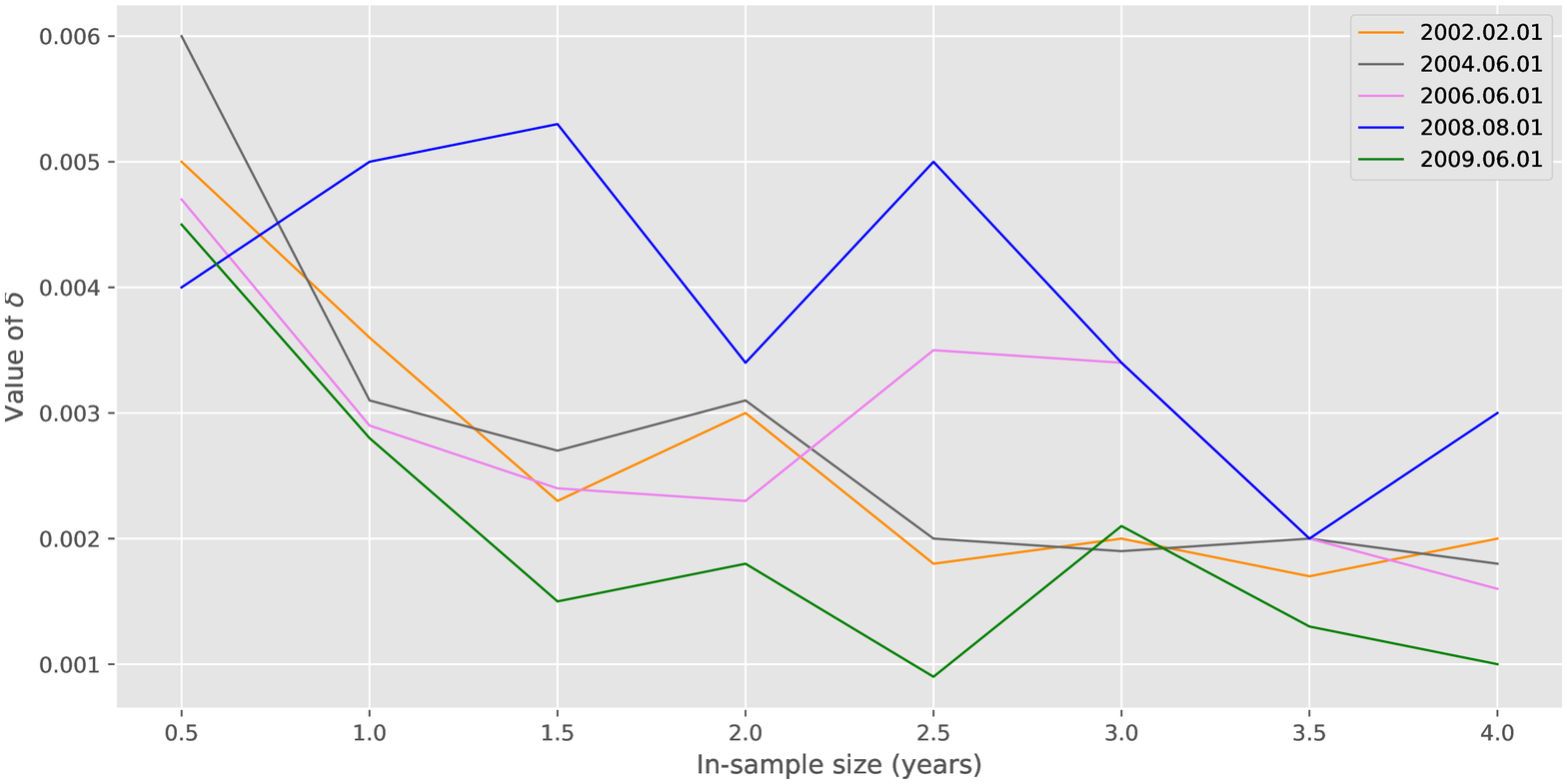}
\caption{$\delta$ for different in-sample size in RMC-1}
\label{fig2}
\end{figure}

\begin{figure}[H]
\centering
\includegraphics[width=14cm]{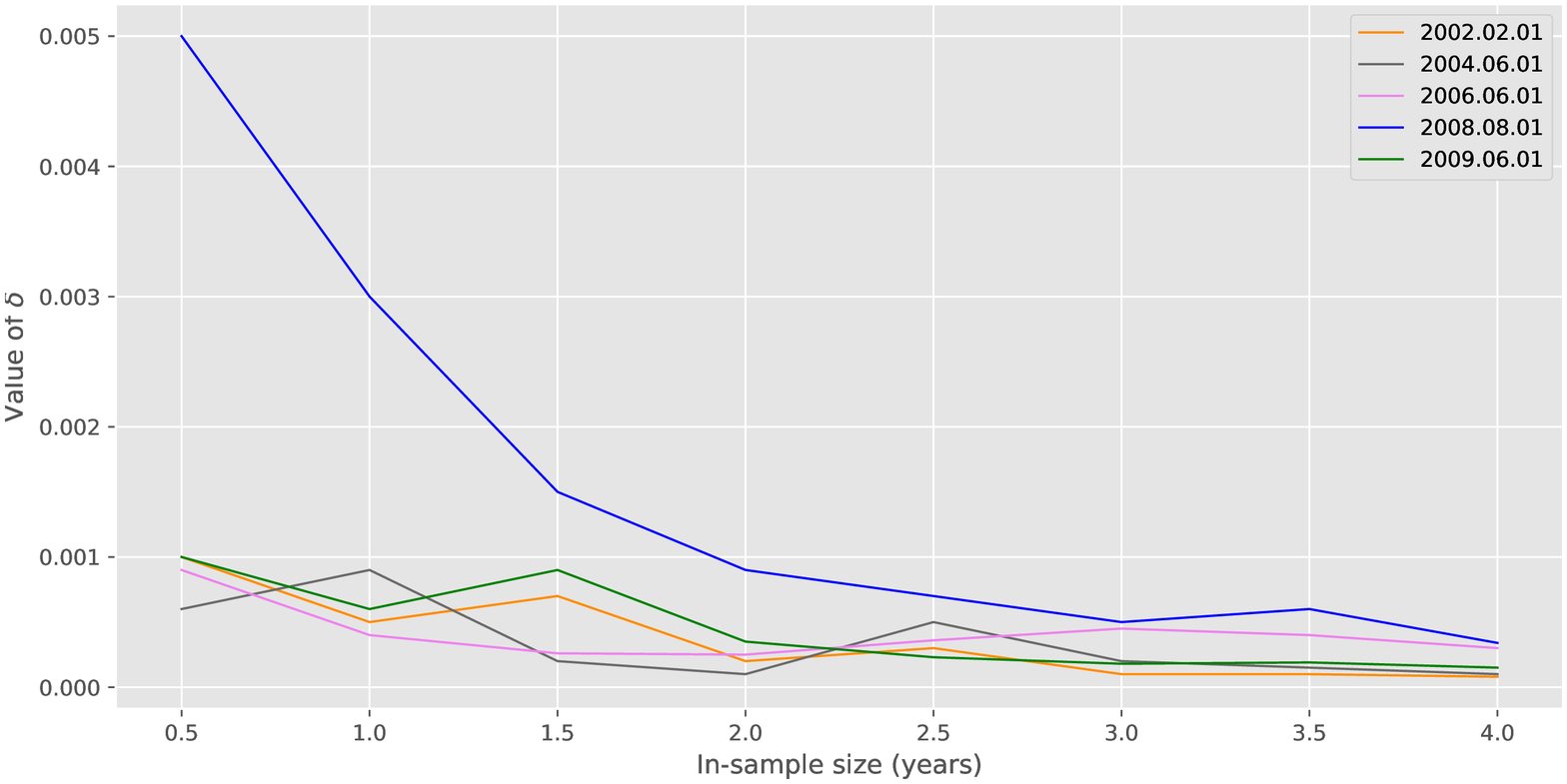}
\caption{$\delta$ for different in-sample size in RMC-2}
\label{fig3}
\end{figure}

\begin{figure}[H]
\centering
\includegraphics[width=12cm]{}
\caption{Composition of portfolios without transaction costs starting, 2002.02.01}
\label{fig4}
\end{figure}
\begin{figure}[H]
\centering
\includegraphics[width=12cm]{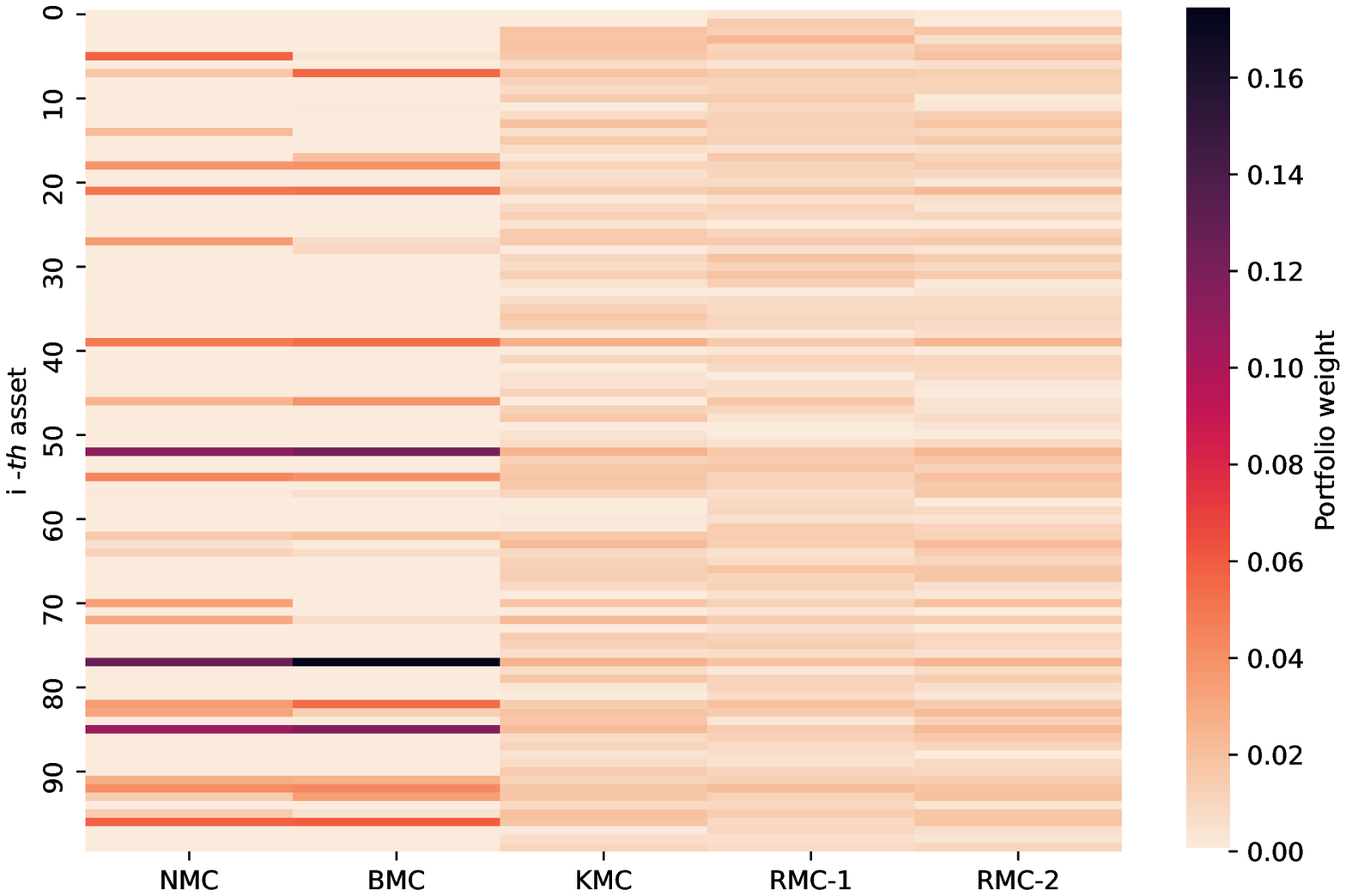}
\caption{Composition of portfolios without transaction costs starting, 2004.06.01}
\label{fig5}
\end{figure}
\begin{figure}[H]
\centering
\includegraphics[width=12cm]{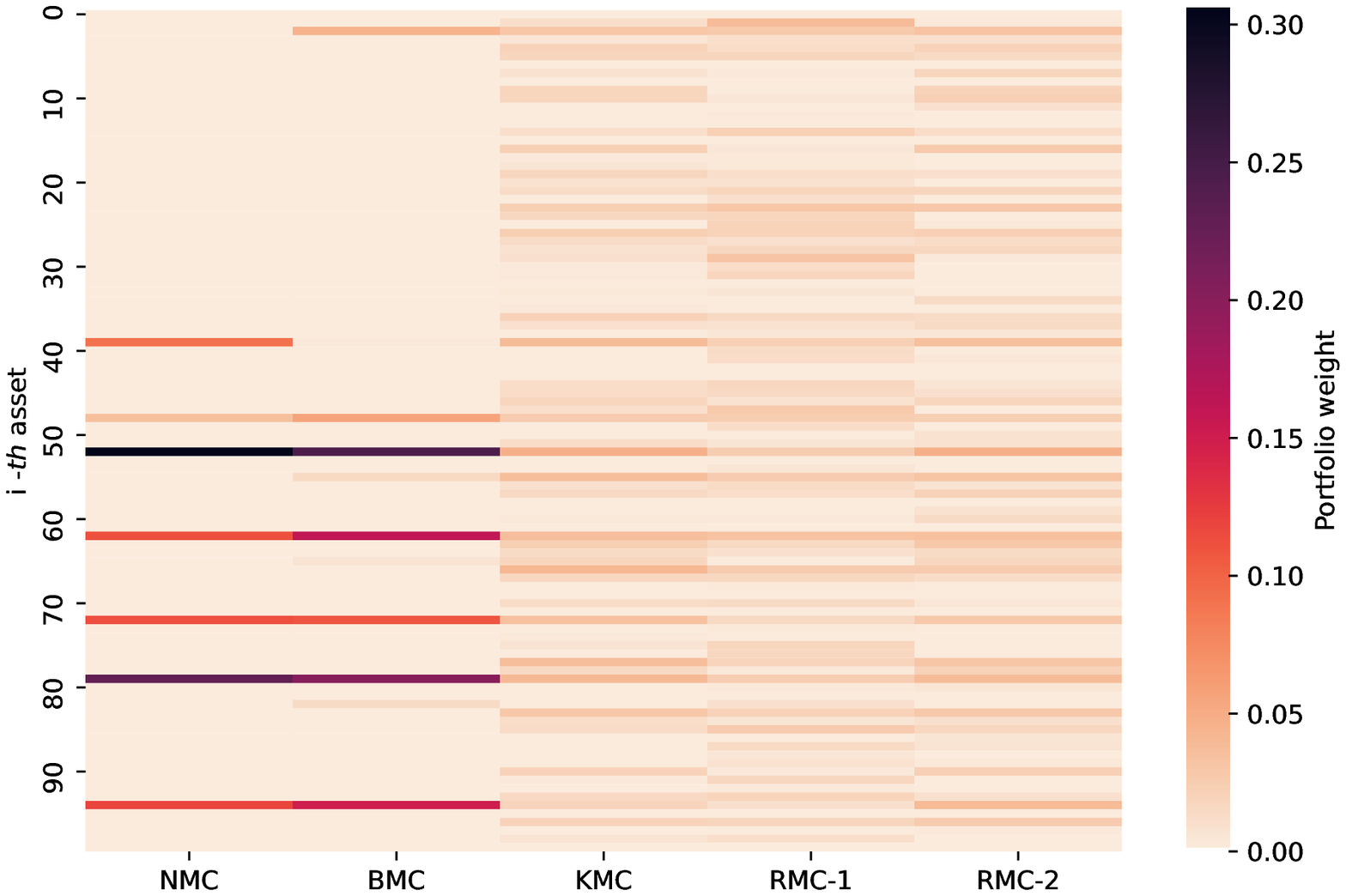}
\caption{Composition of portfolios without transaction costs starting, 2006.06.01}
\label{fig6}
\end{figure}
\begin{figure}[H]
\centering
\includegraphics[width=12cm]{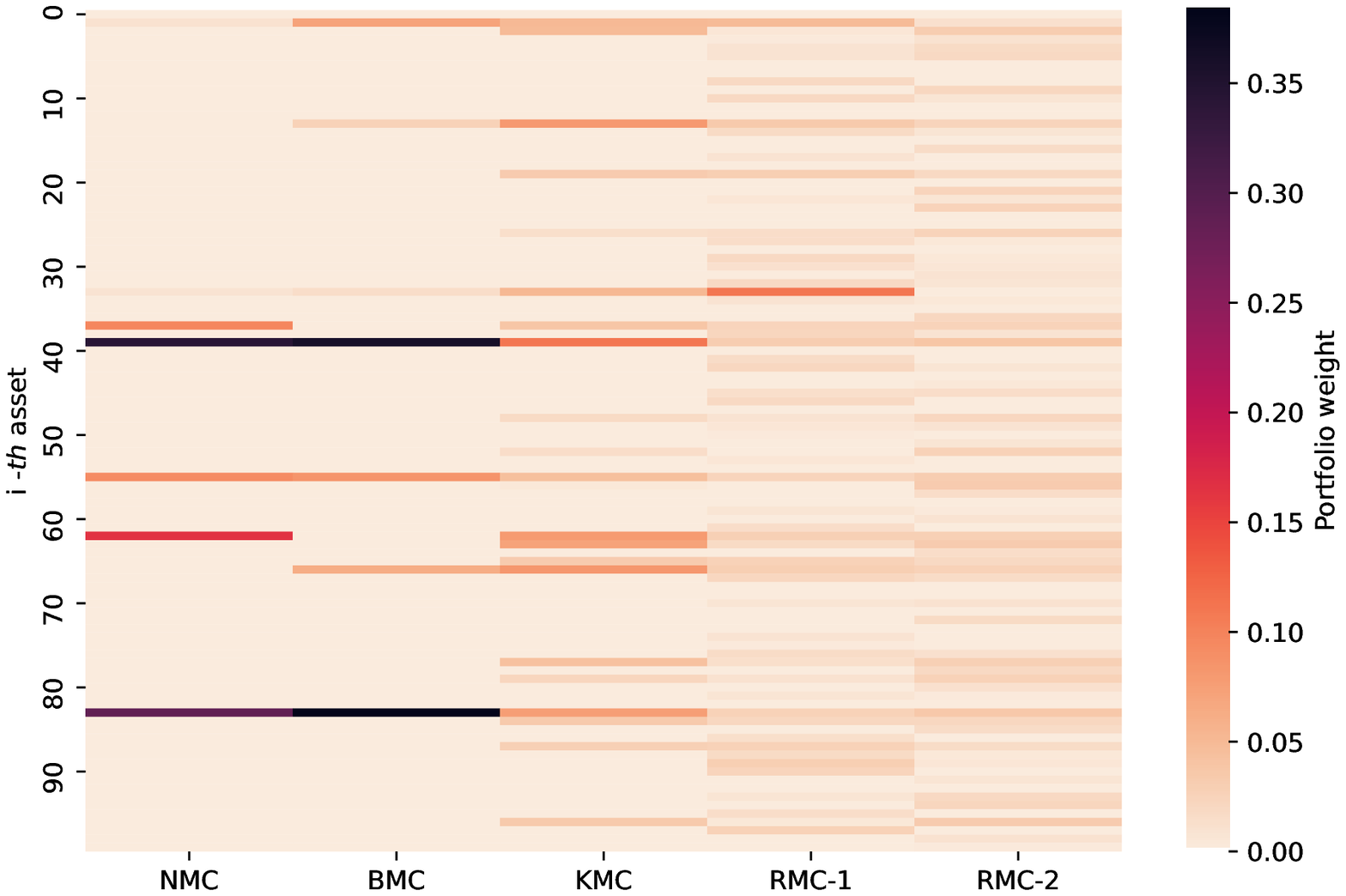}
\caption{Composition of portfolios without transaction costs starting, 2008.08.01}
\label{fig7}
\end{figure}
\begin{figure}[H]
\centering
\includegraphics[width=12cm]{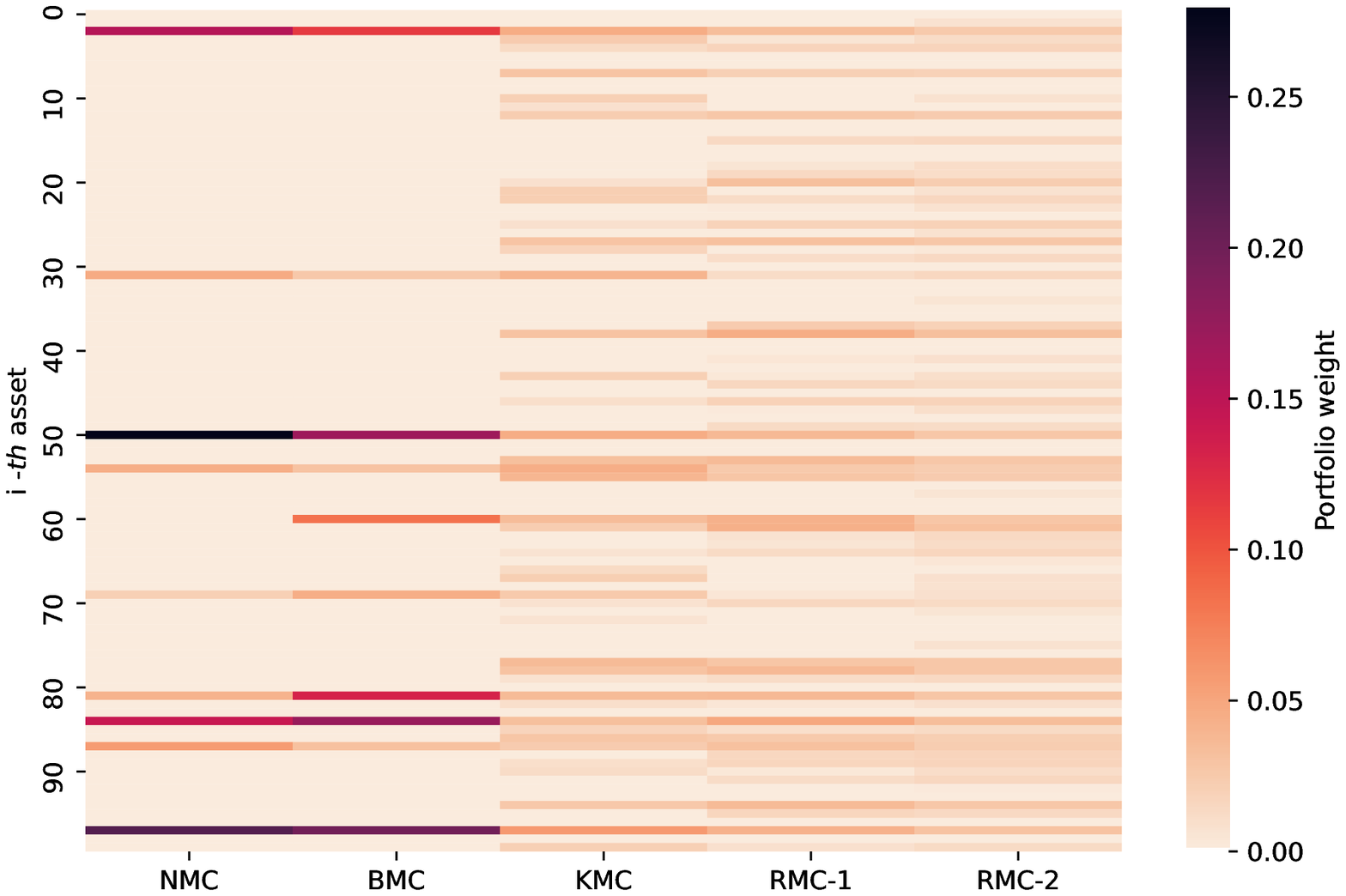}
\caption{Composition of portfolios without transaction costs starting, 2009.06.01}
\label{fig8}
\end{figure}

\begin{figure}[H]
\centering
\includegraphics[width=14cm]{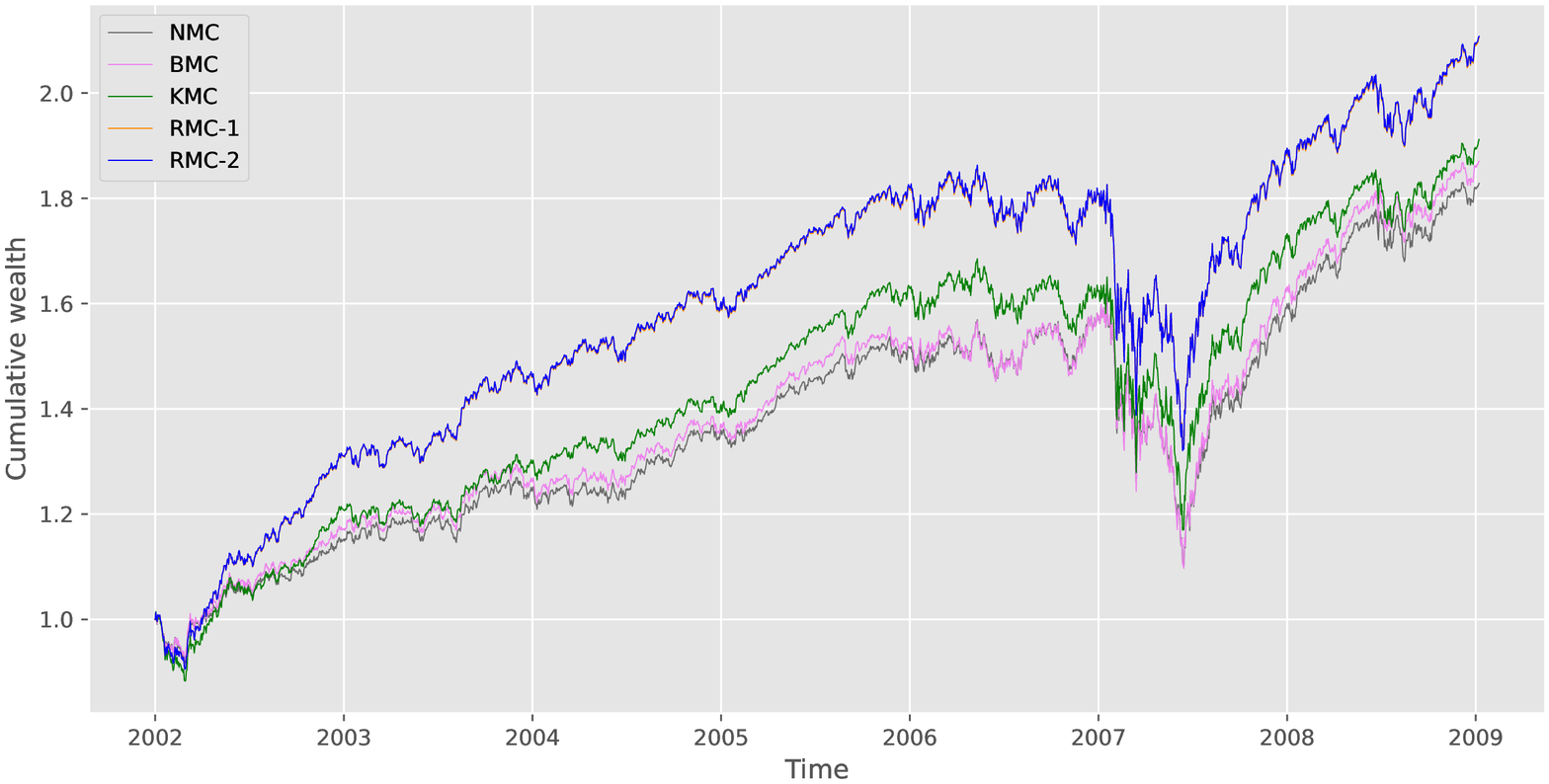}
\caption{Cumulative wealth without transaction costs, starting 2002.02.01}
\label{fig9}
\end{figure}
\begin{figure}[H]
\centering
\includegraphics[width=14cm]{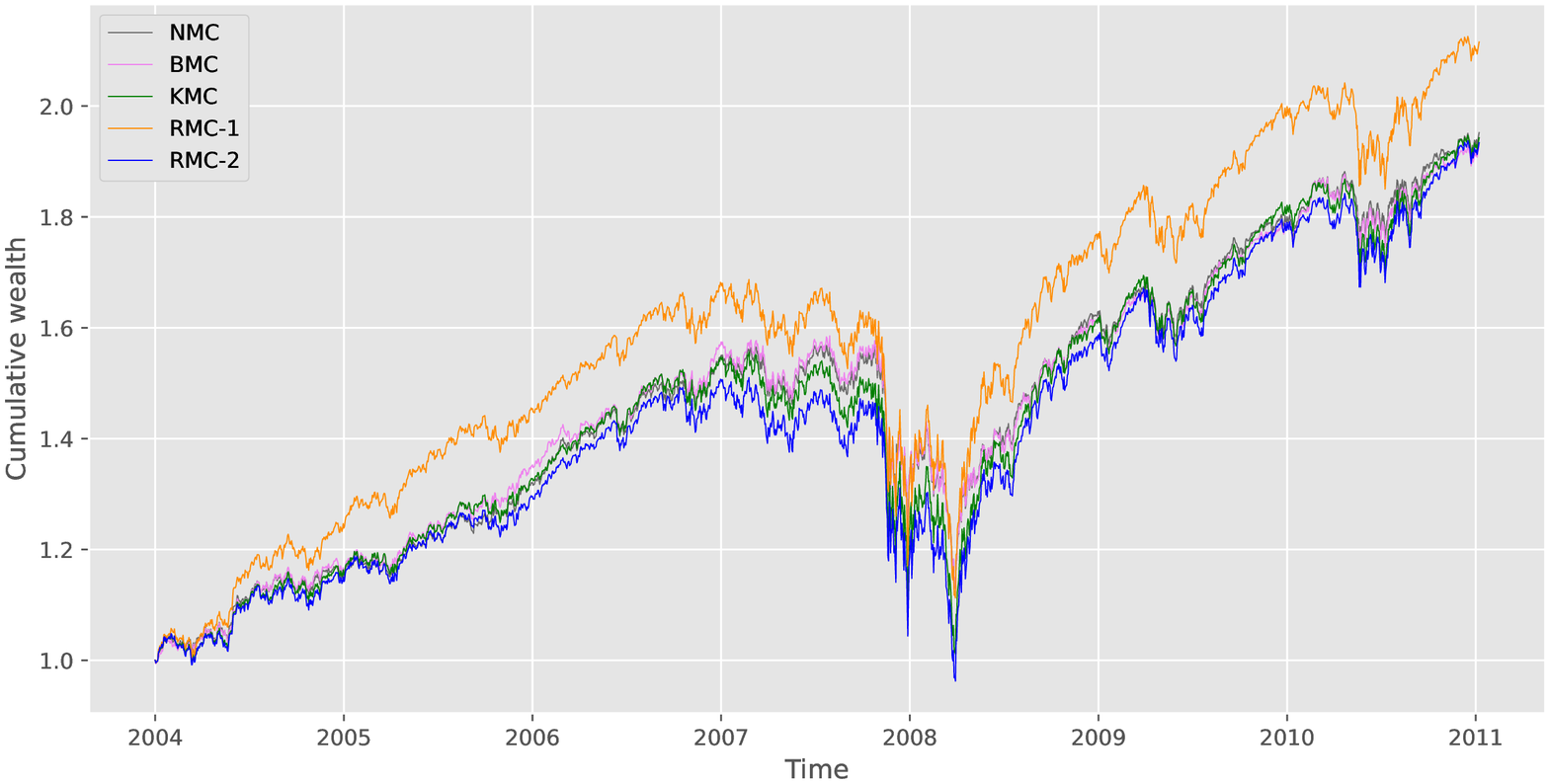}
\caption{Cumulative wealth without transaction costs, starting 2004.06.01}
\label{fig10}
\end{figure}\begin{figure}[H]
\centering
\includegraphics[width=14cm]{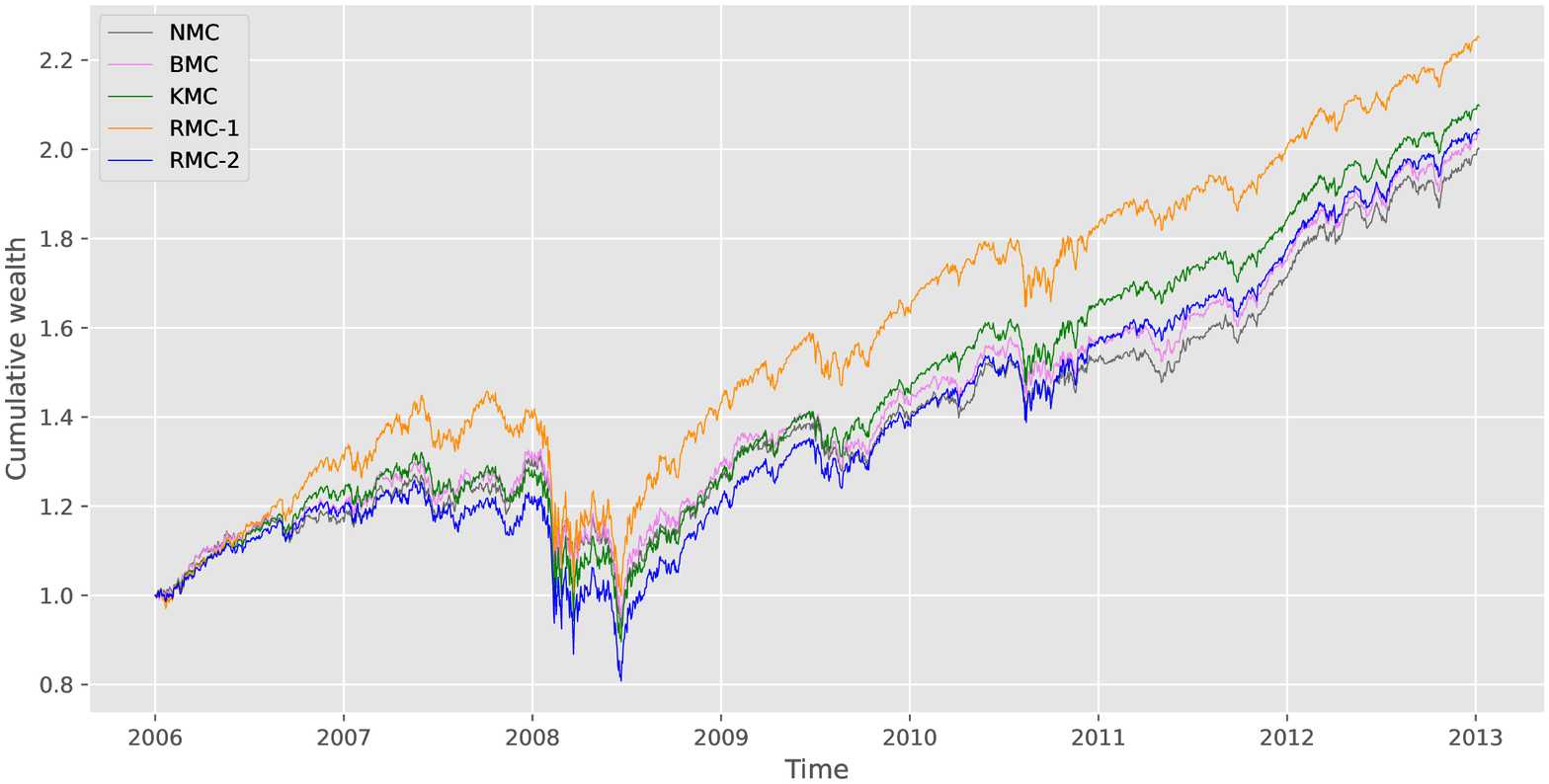}
\caption{Cumulative wealth without transaction costs, starting 2006.06.01}
\label{fig11}
\end{figure}
\begin{figure}[H]
\centering
\includegraphics[width=14cm]{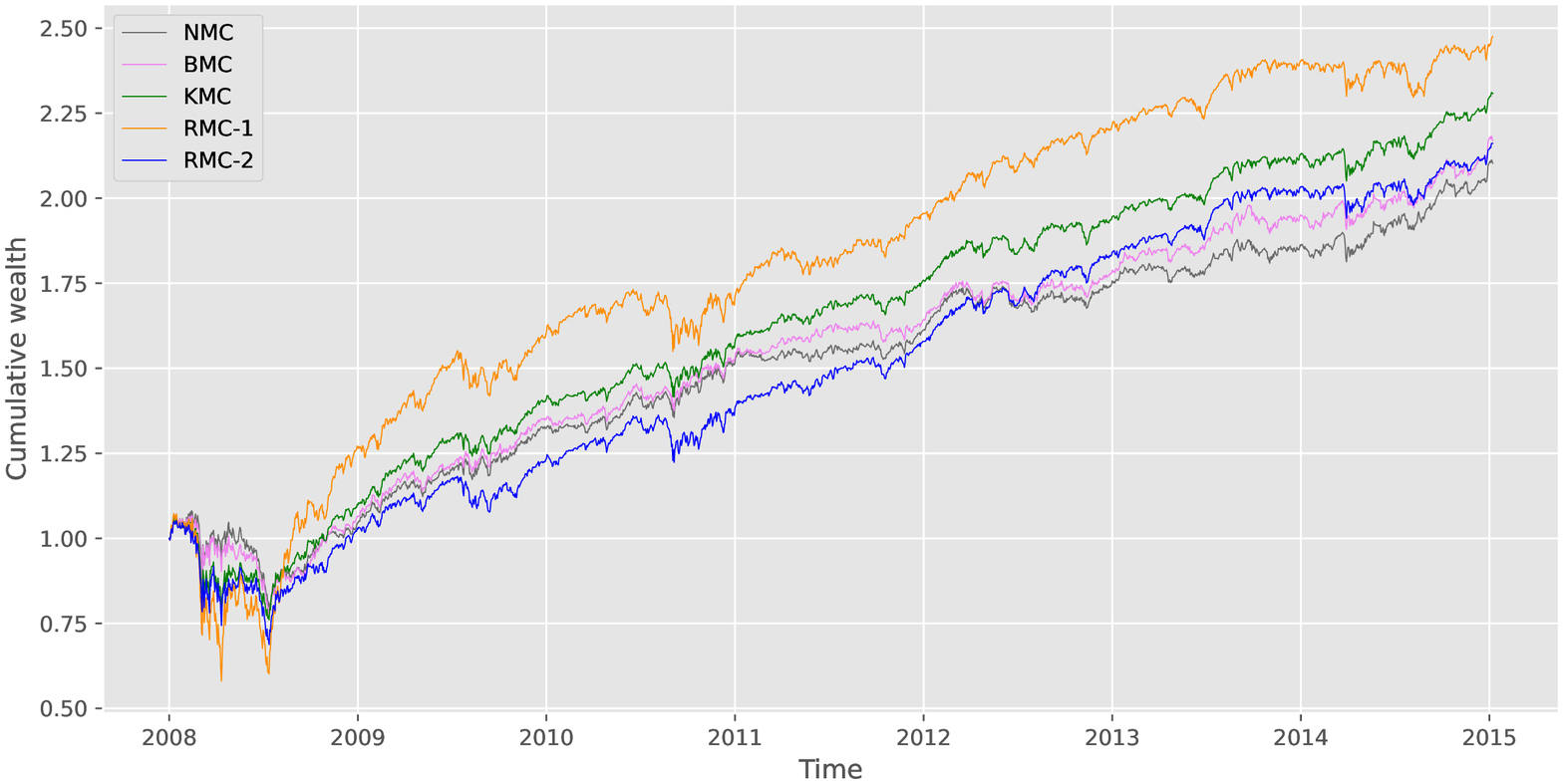}
\caption{Cumulative wealth without transaction costs, starting 2008.08.01}
\label{fig12}
\end{figure}
\begin{figure}[H]
\centering
\includegraphics[width=14cm]{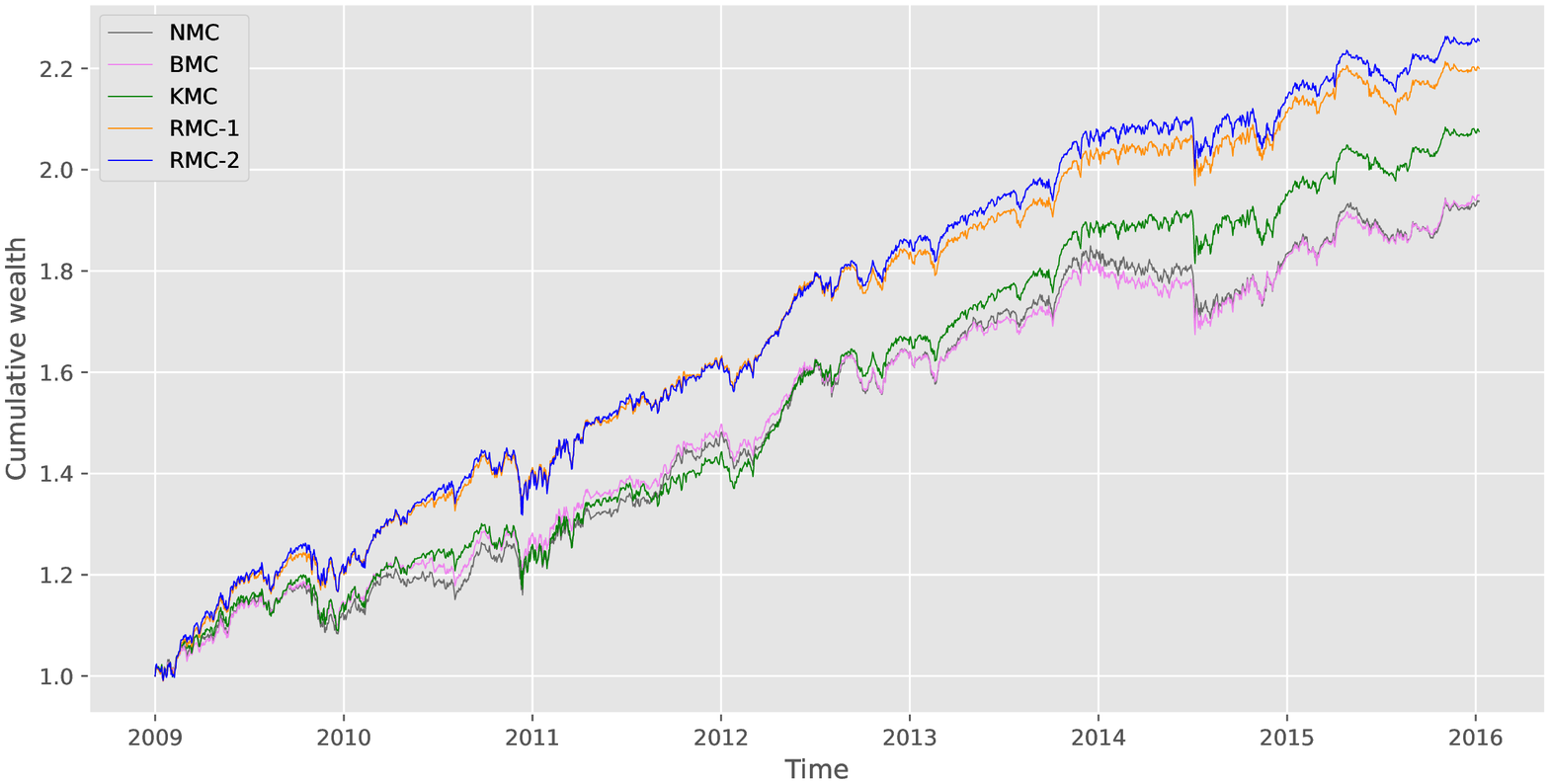}
\caption{Cumulative wealth without transaction costs, starting 2009.06.01}
\label{fig13}
\end{figure}

\begin{figure}[H]
\centering
\includegraphics[width=14cm]{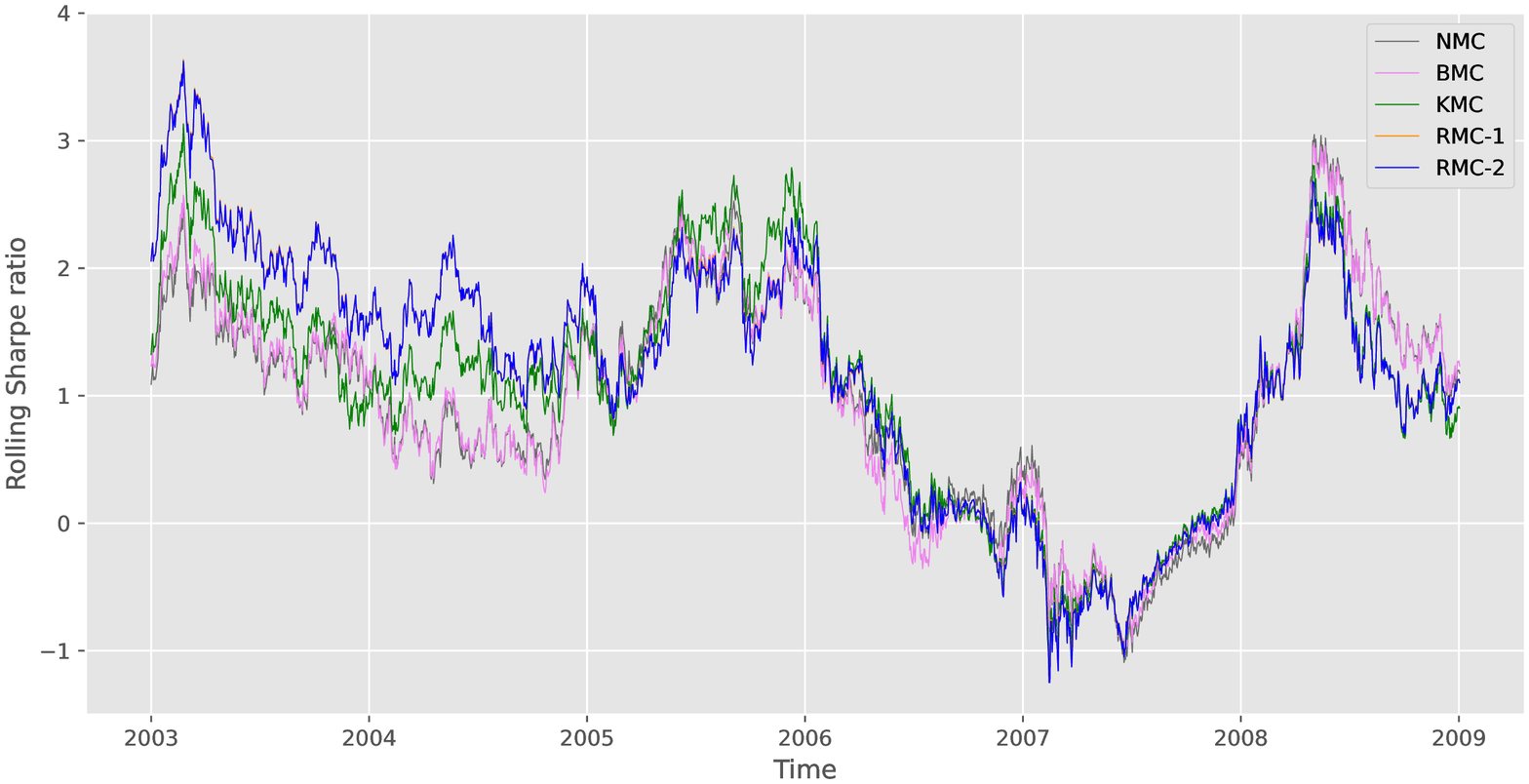}
\caption{Rolling one year Sharpe ratio without transaction costs, starting 2002.02.01}
\label{fig14}
\end{figure}
\begin{figure}[H]
\centering
\includegraphics[width=14cm]{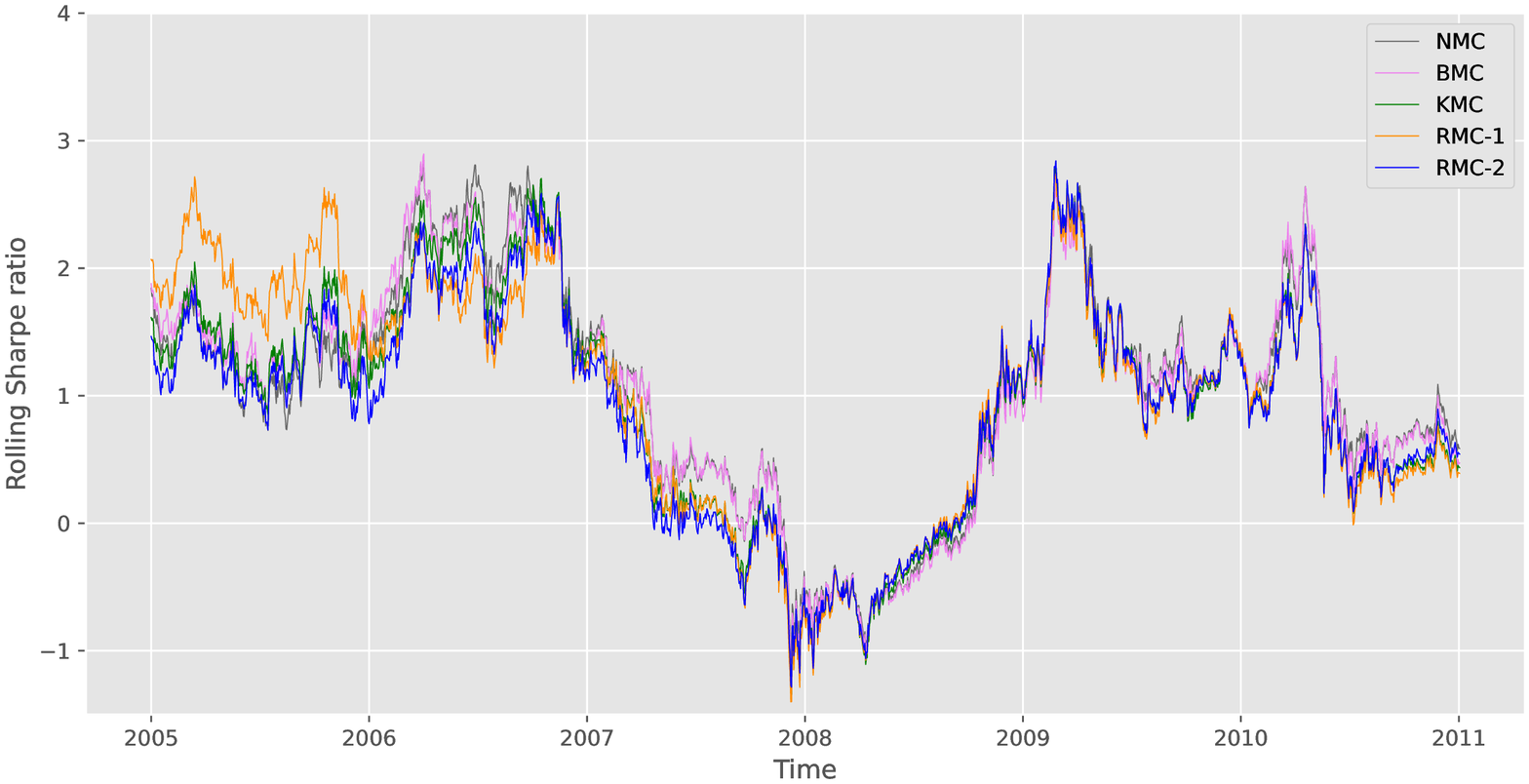}
\caption{Rolling one year Sharpe ratio without transaction costs, starting 2004.06.01}
\label{fig15}
\end{figure}\begin{figure}[H]
\centering
\includegraphics[width=14cm]{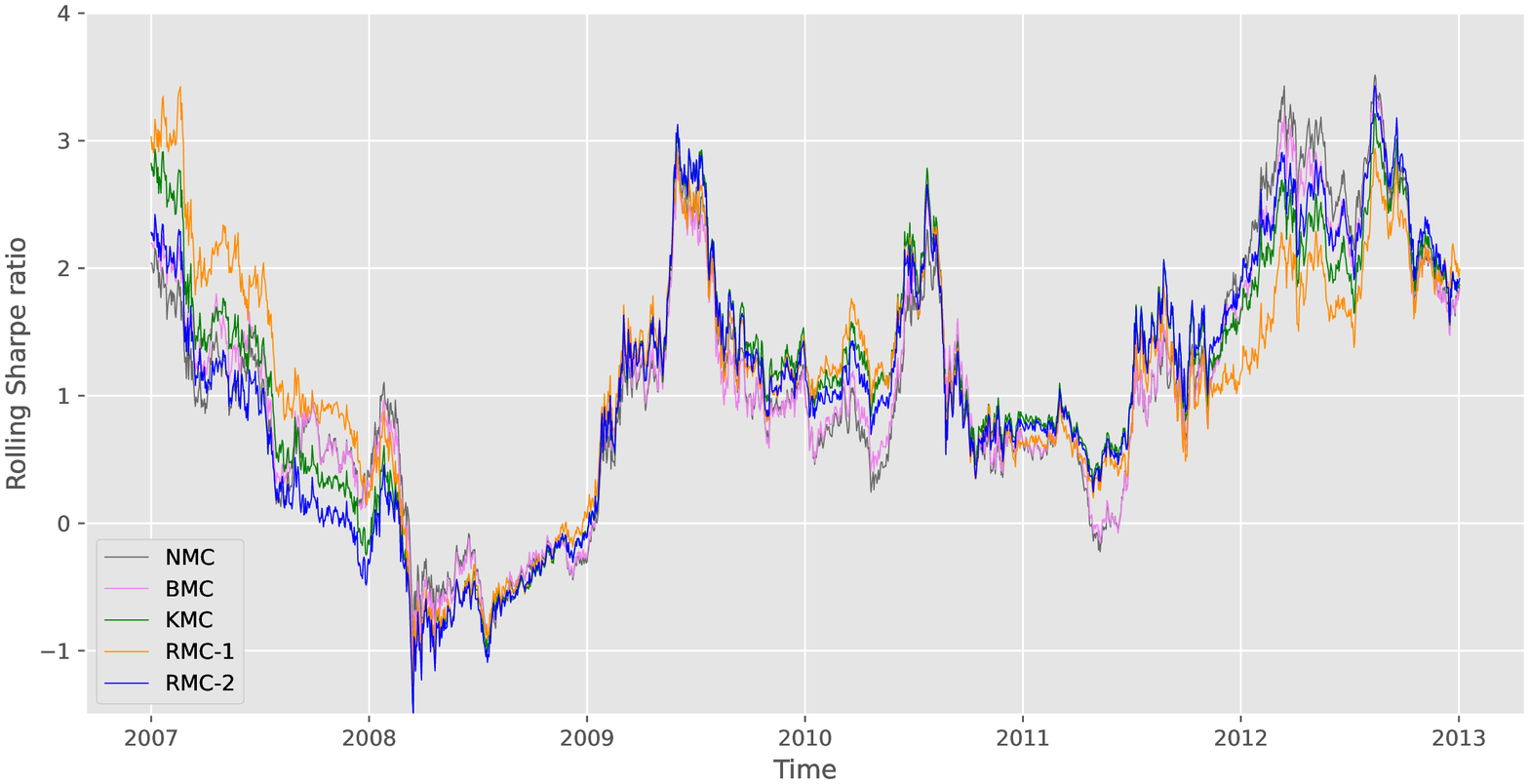}
\caption{Rolling one year Sharpe ratio without transaction costs, starting 2006.06.01}
\label{fig16}
\end{figure}
\begin{figure}[H]
\centering
\includegraphics[width=14cm]{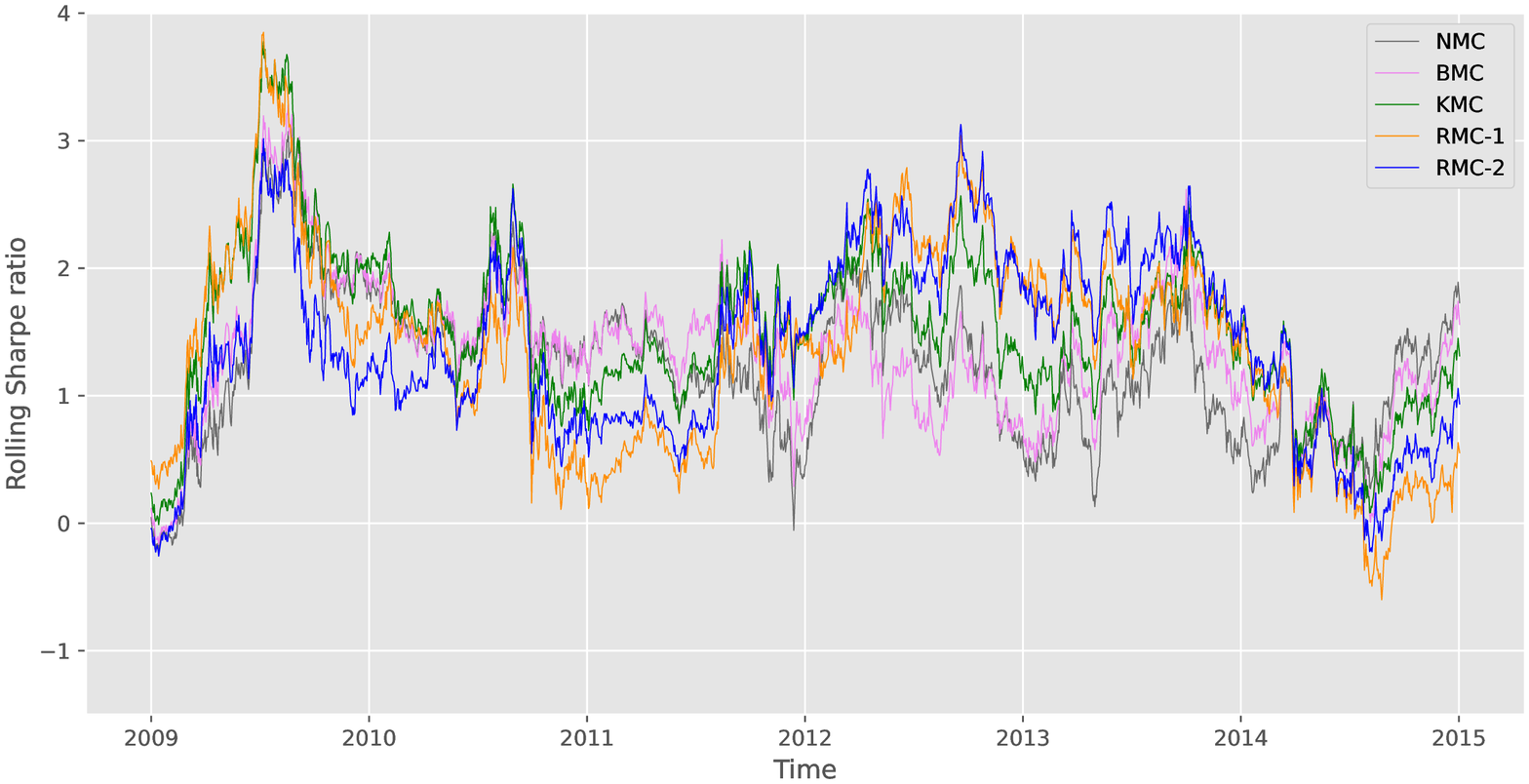}
\caption{Rolling one year Sharpe ratio without transaction costs, starting 2008.08.01}
\label{fig17}
\end{figure}
\begin{figure}[H]
\centering
\includegraphics[width=14cm]{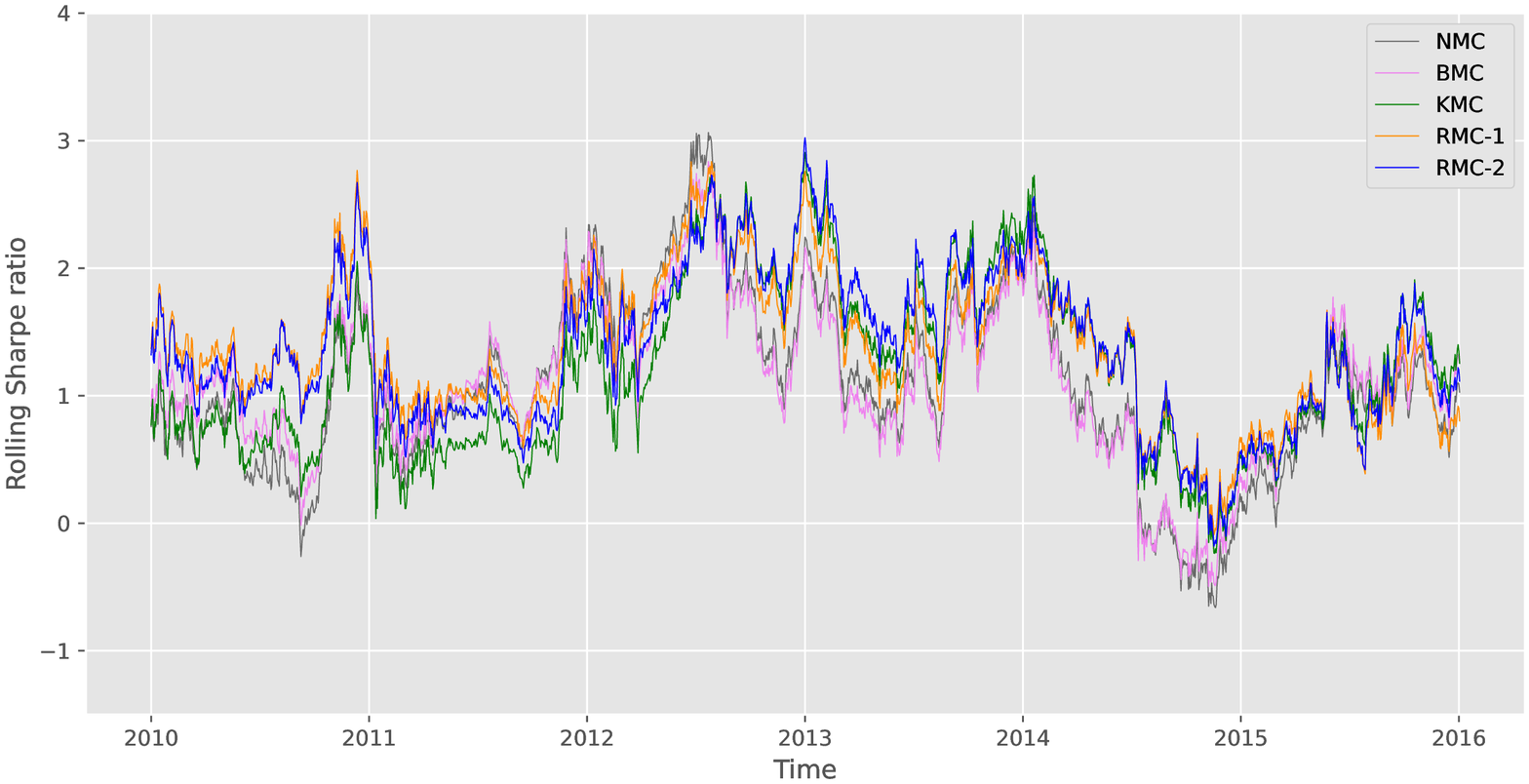}
\caption{Rolling one year Sharpe ratio without transaction costs, starting 2009.06.01}
\label{fig18}
\end{figure}

\begin{figure}[H]
\centering
\includegraphics[width=12cm]{}
\caption{Composition of portfolios with transaction costs, starting 2002.02.01}
\label{fig19}
\end{figure}
\begin{figure}[H]
\centering
\includegraphics[width=12cm]{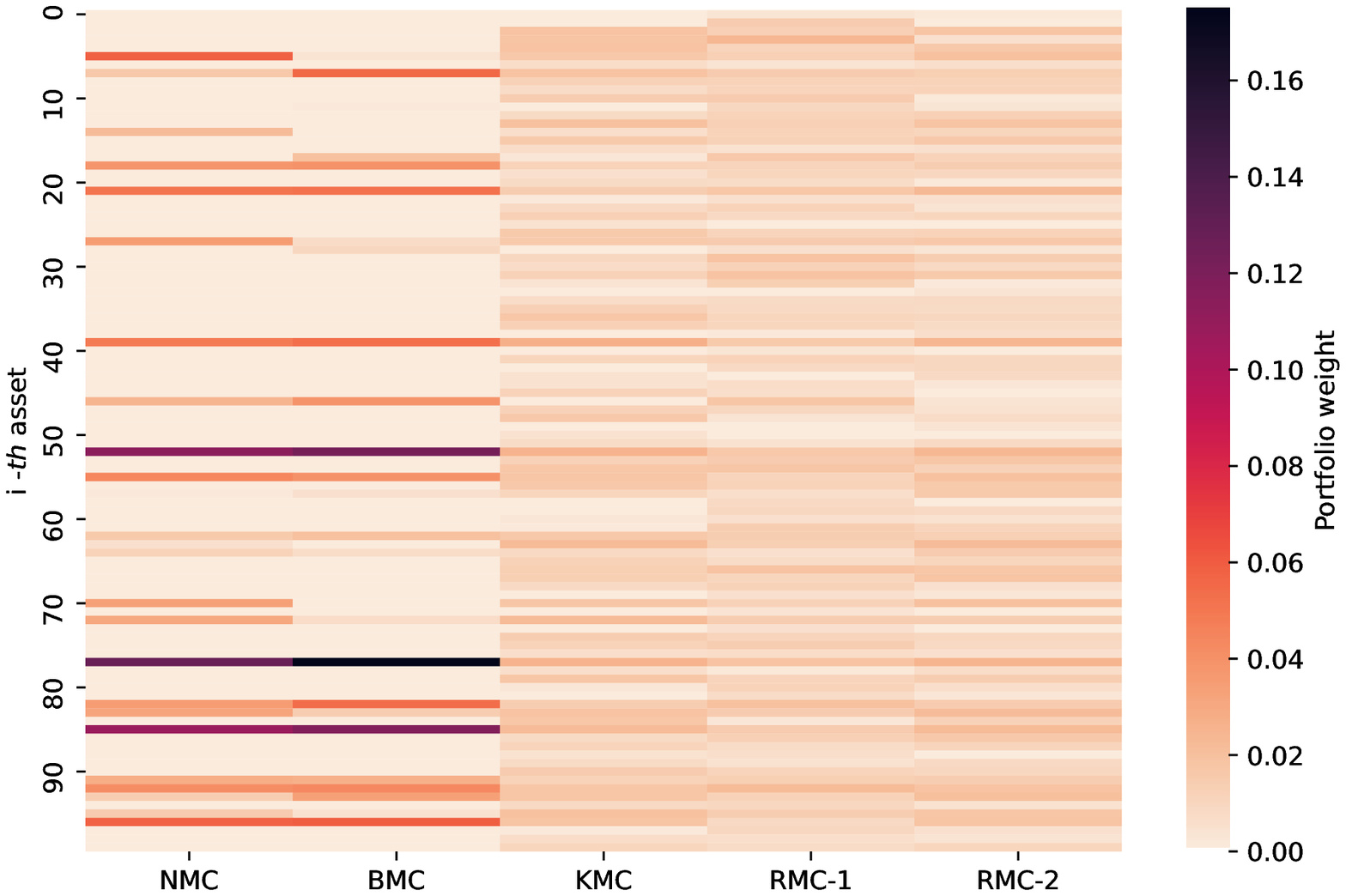}
\caption{Composition of portfolios with transaction costs, starting 2004.06.01}
\label{fig20}
\end{figure}
\begin{figure}[H]
\centering
\includegraphics[width=12cm]{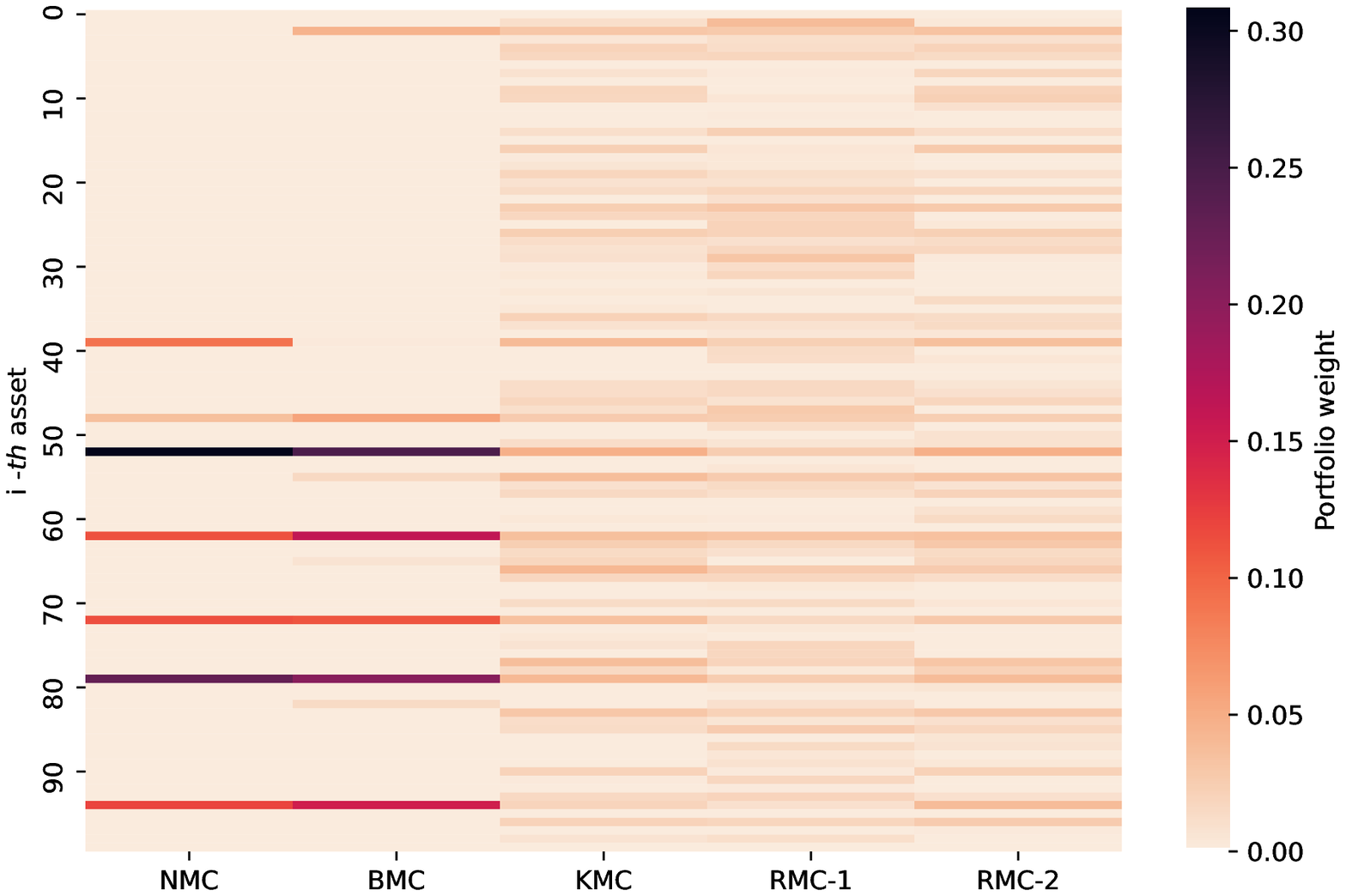}
\caption{Composition of portfolios with transaction costs, starting 2006.06.01}
\label{fig21}
\end{figure}
\begin{figure}[H]
\centering
\includegraphics[width=12cm]{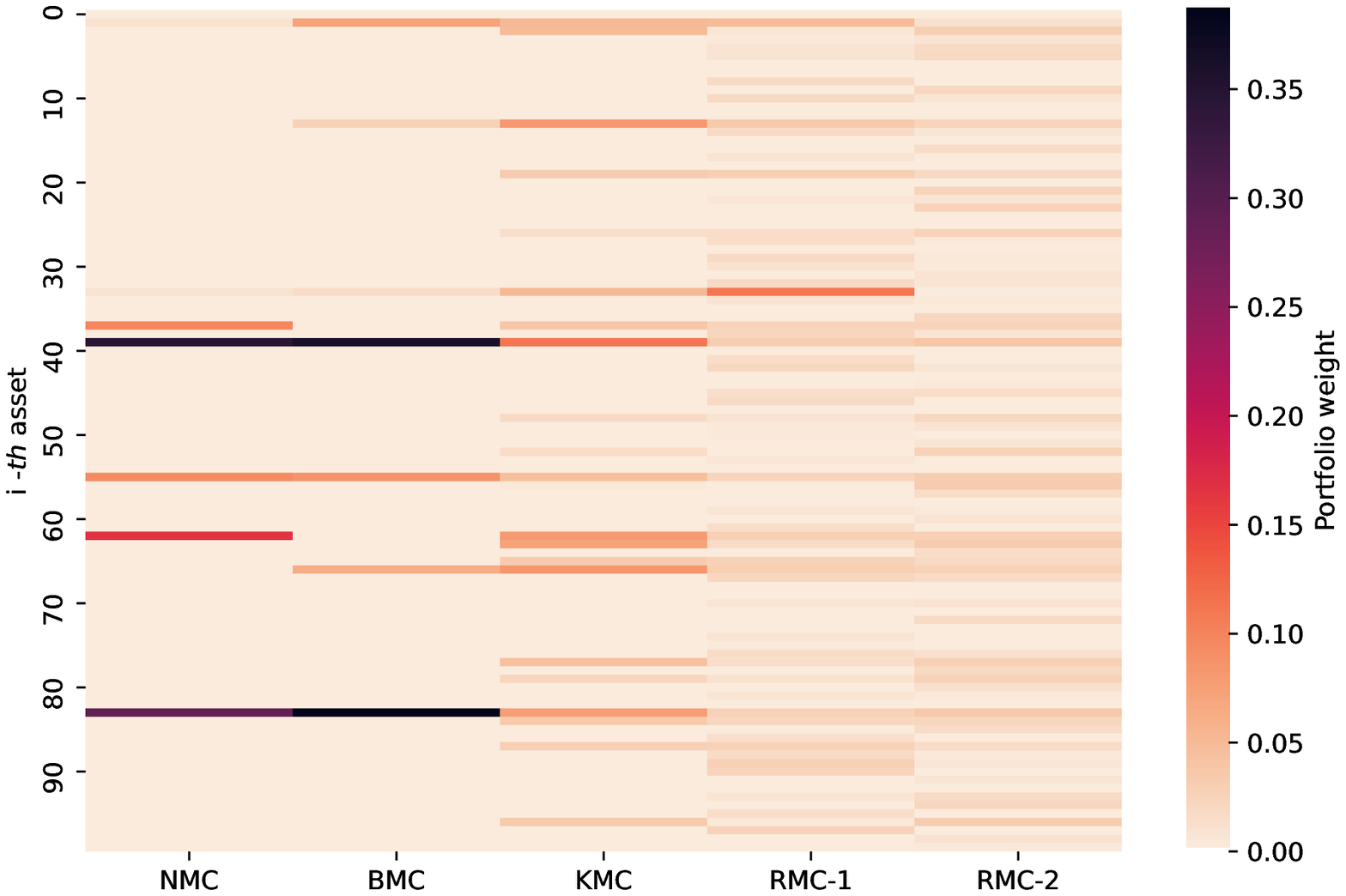}
\caption{Composition of portfolios with transaction costs, starting 2008.08.01}
\label{fig22}
\end{figure}
\begin{figure}[H]
\centering
\includegraphics[width=12cm]{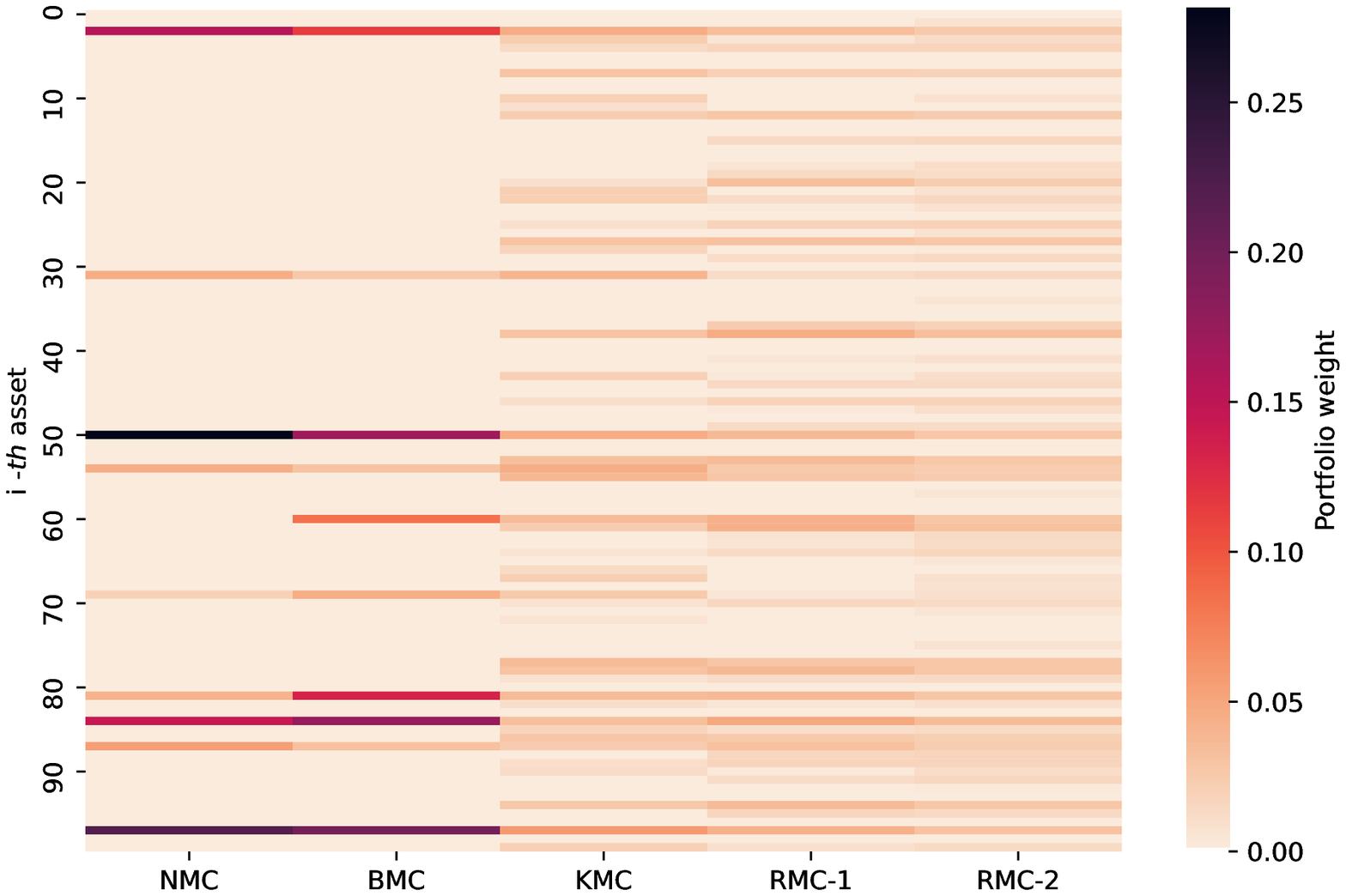}
\caption{Composition of portfolios with transaction costs, starting 2009.06.01}
\label{fig23}
\end{figure}

\begin{figure}[H]
\centering
\includegraphics[width=14cm]{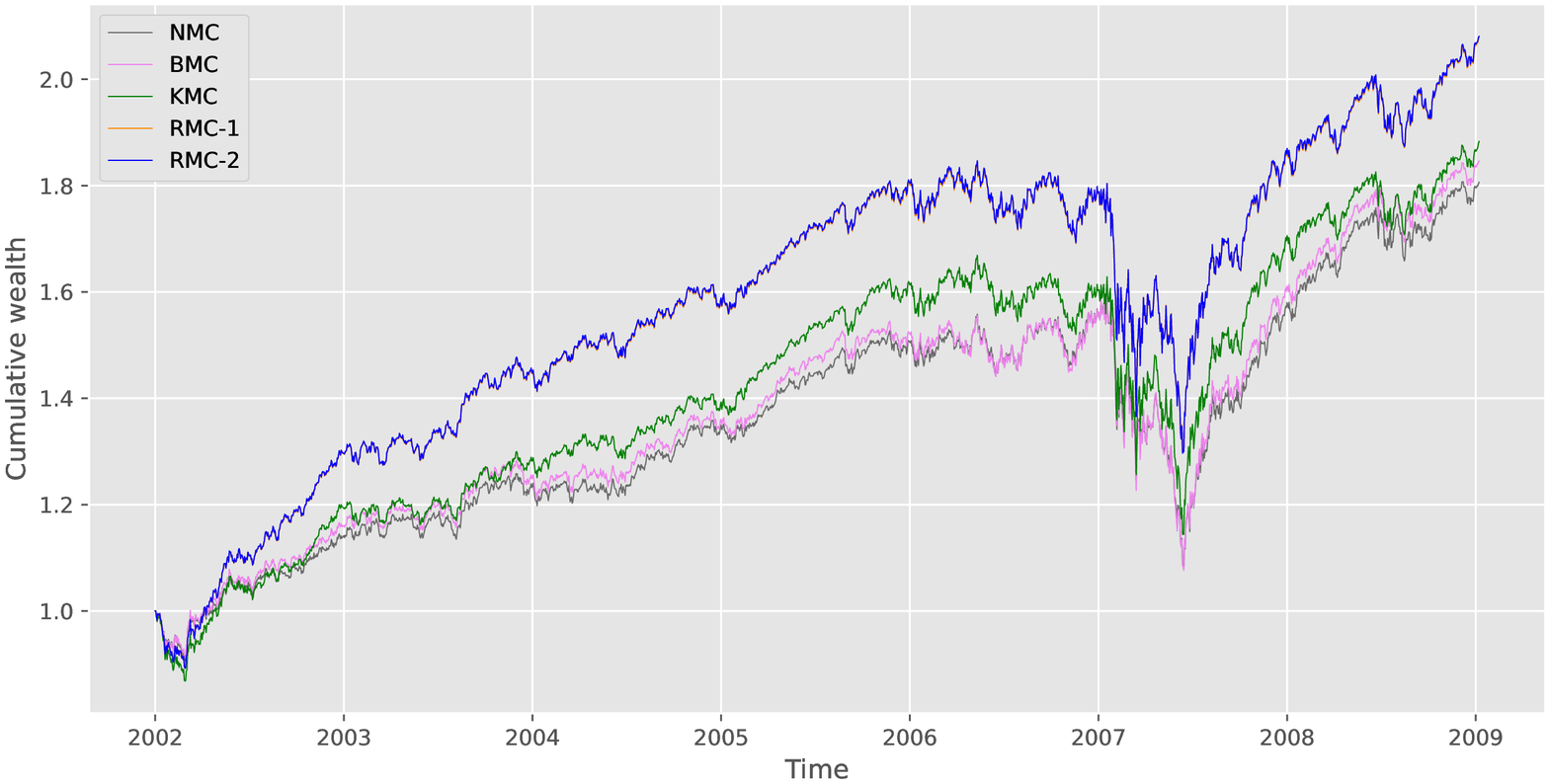}
\caption{Cumulative wealth with transaction costs, starting 2002.02.01}
\label{fig24}
\end{figure}
\begin{figure}[H]
\centering
\includegraphics[width=14cm]{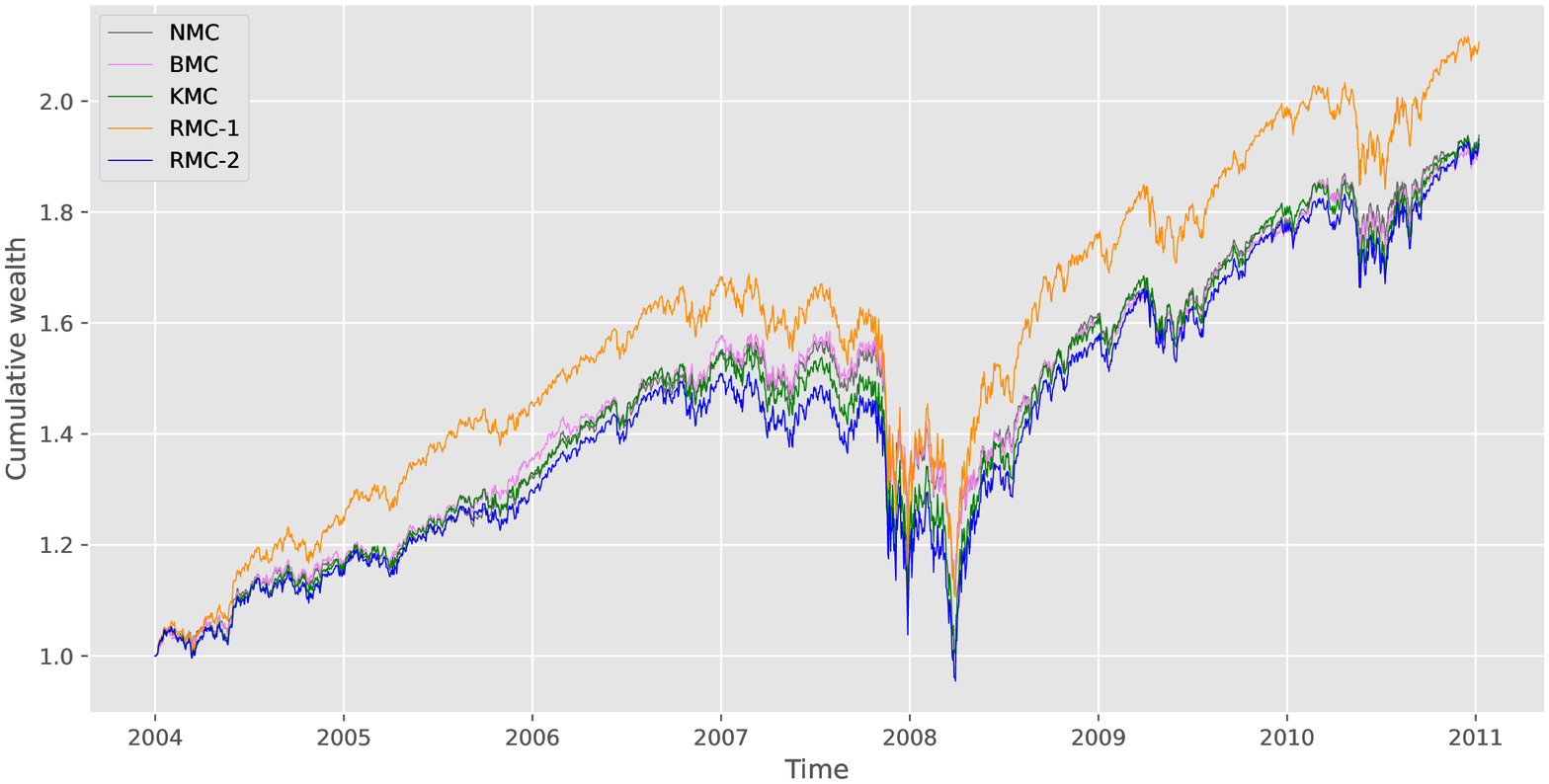}
\caption{Cumulative wealth with transaction costs, starting 2004.06.01}
\label{fig25}
\end{figure}
\begin{figure}[H]
\centering
\includegraphics[width=14cm]{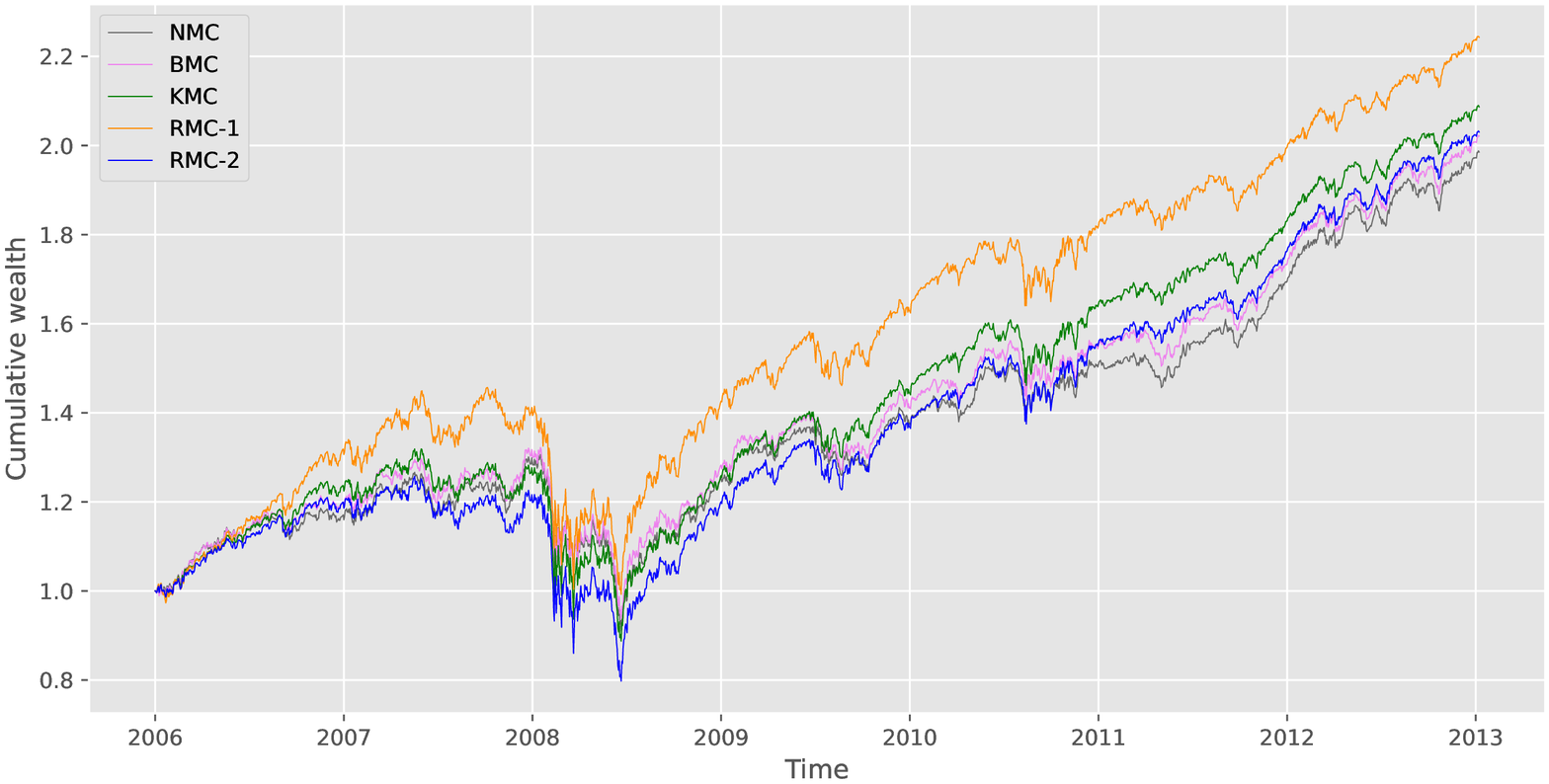}
\caption{Cumulative wealth with transaction costs, starting 2006.08.01}
\label{fig26}
\end{figure}
\begin{figure}[H]
\centering
\includegraphics[width=14cm]{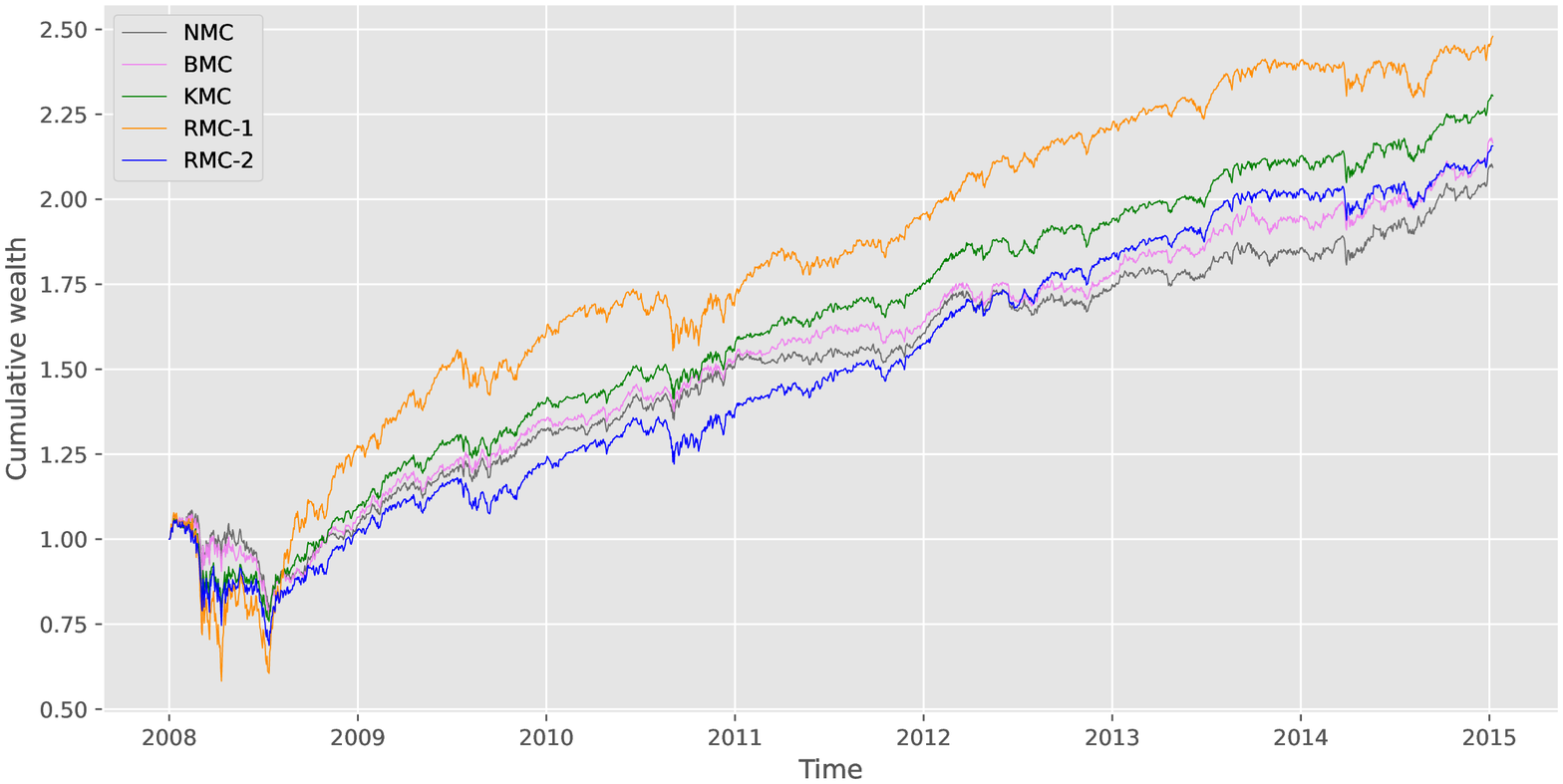}
\caption{Cumulative wealth with transaction costs, starting 2008.08.01}
\label{fig27}
\end{figure}
\begin{figure}[H]
\centering
\includegraphics[width=14cm]{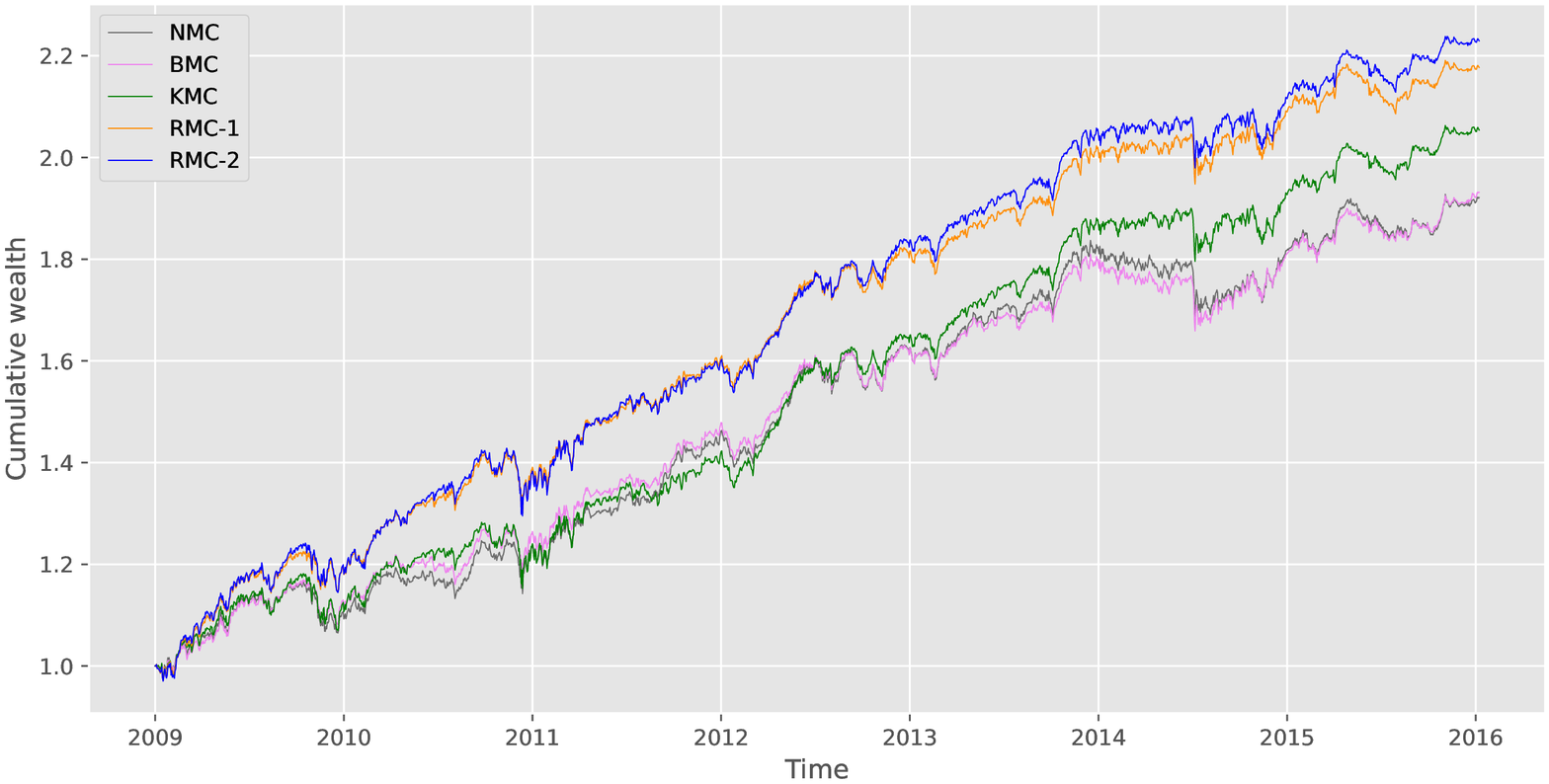}
\caption{Cumulative wealth with transaction costs, starting 2009.06.01}
\label{fig28}
\end{figure}

\begin{figure}[H]
\centering
\includegraphics[width=14cm]{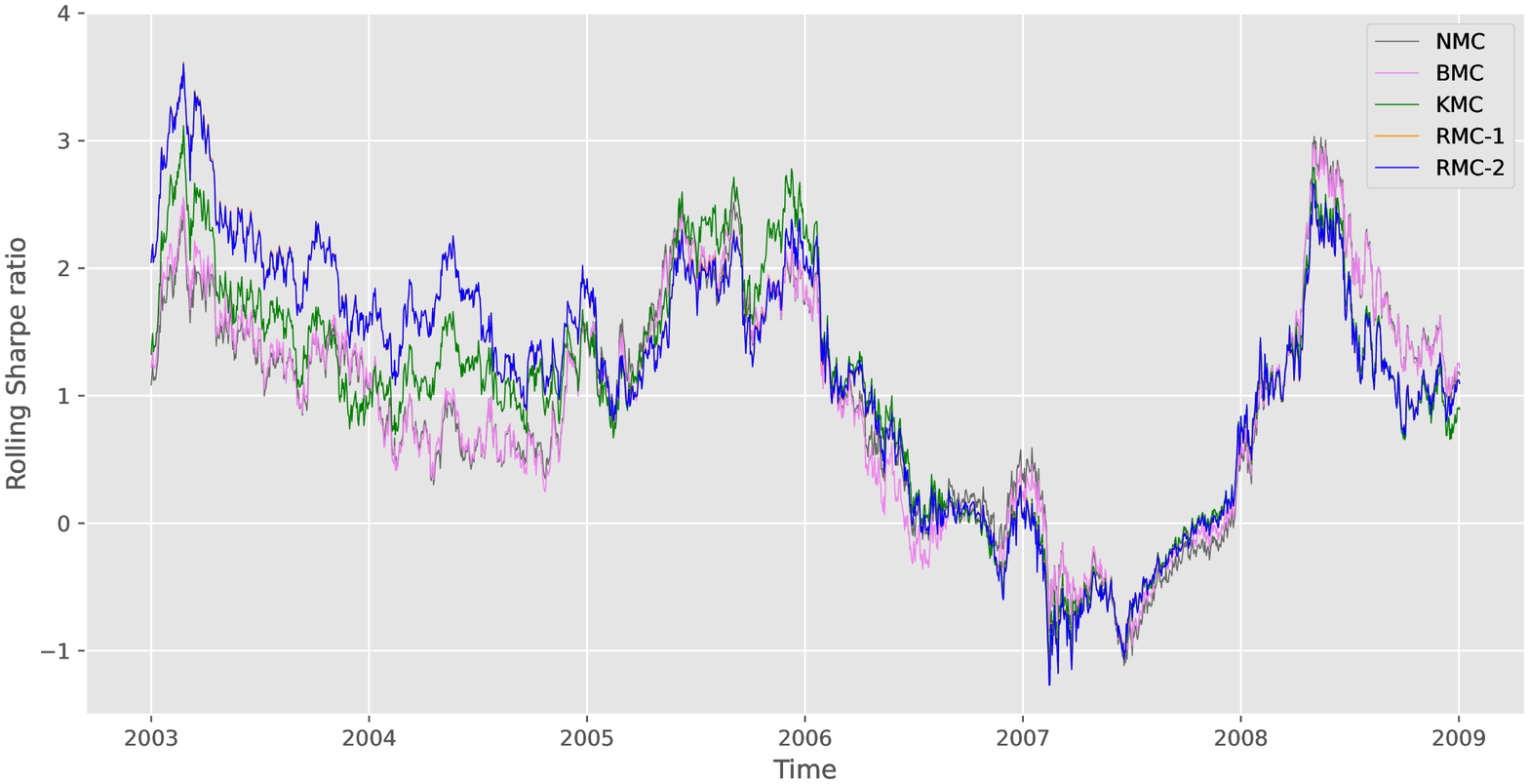}
\caption{Rolling one year Sharpe ratio with transaction costs, starting 2002.02.01}
\label{fig29}
\end{figure}
\begin{figure}[H]
\centering
\includegraphics[width=14cm]{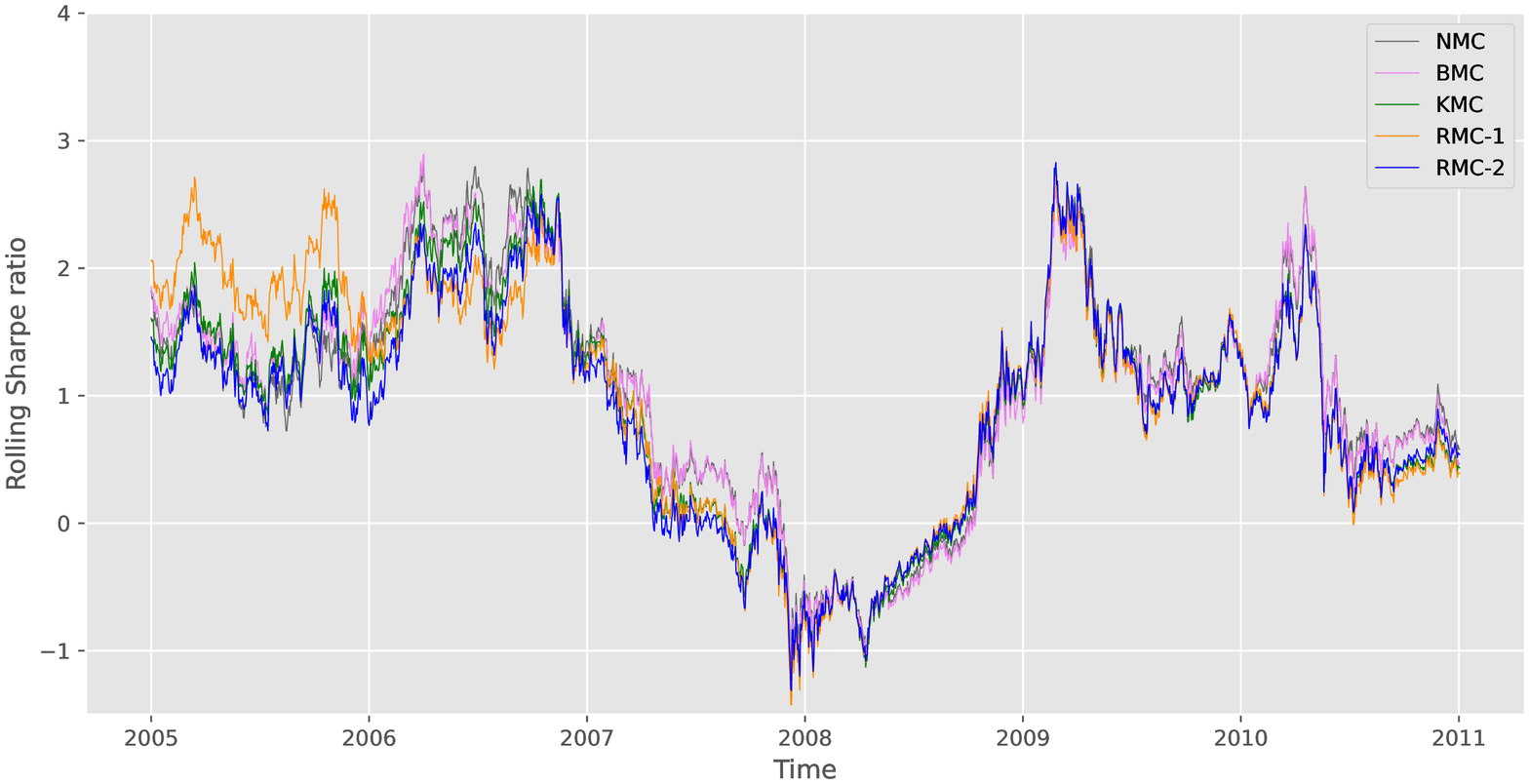}
\caption{Rolling one year Sharpe ratio with transaction costs, starting 2004.06.01}
\label{fig30}
\end{figure}
\begin{figure}[H]
\centering
\includegraphics[width=14cm]{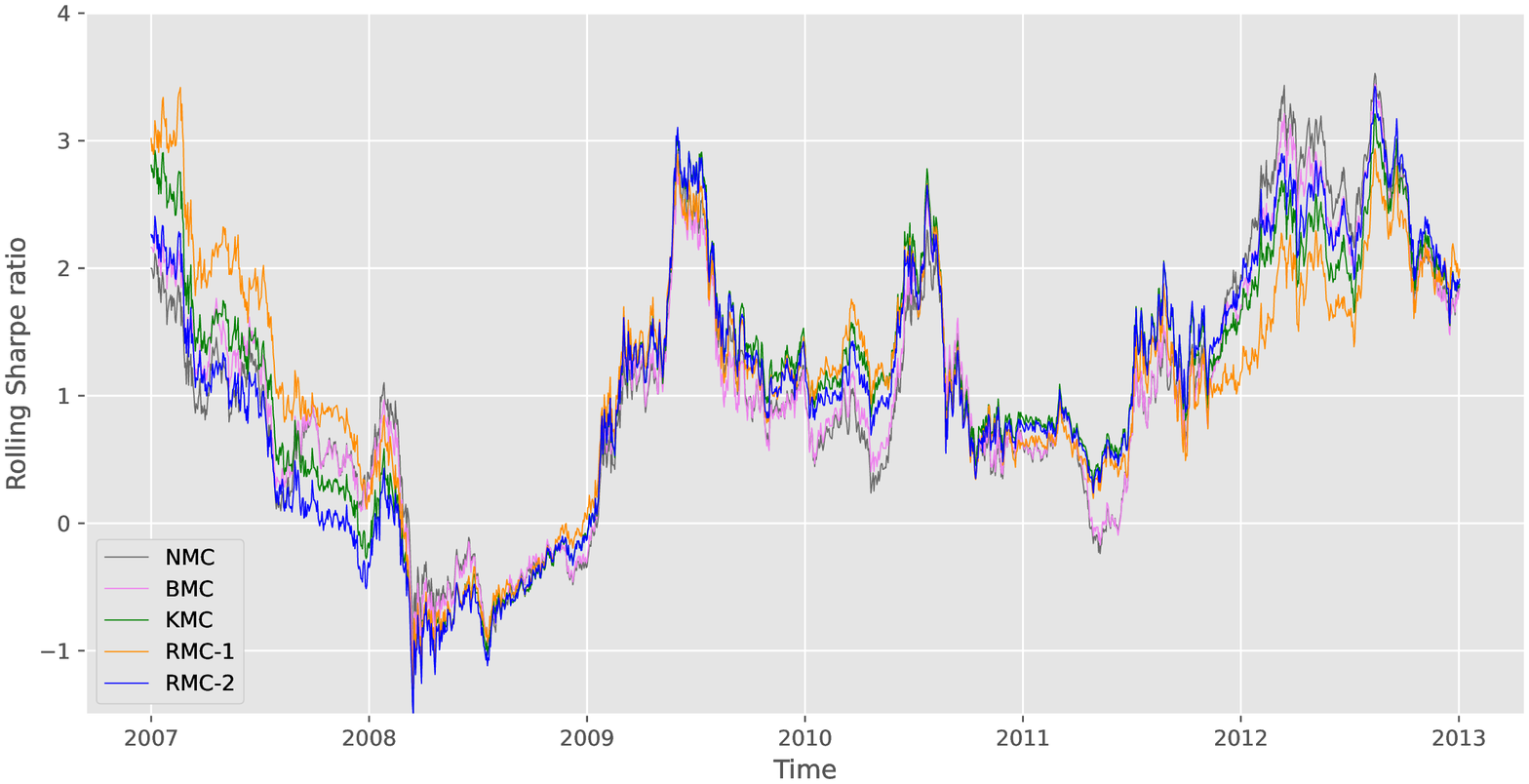}
\caption{Rolling one year Sharpe ratio with transaction costs, starting 2006.06.01}
\label{fig31}
\end{figure}
\begin{figure}[H]
\centering
\includegraphics[width=14cm]{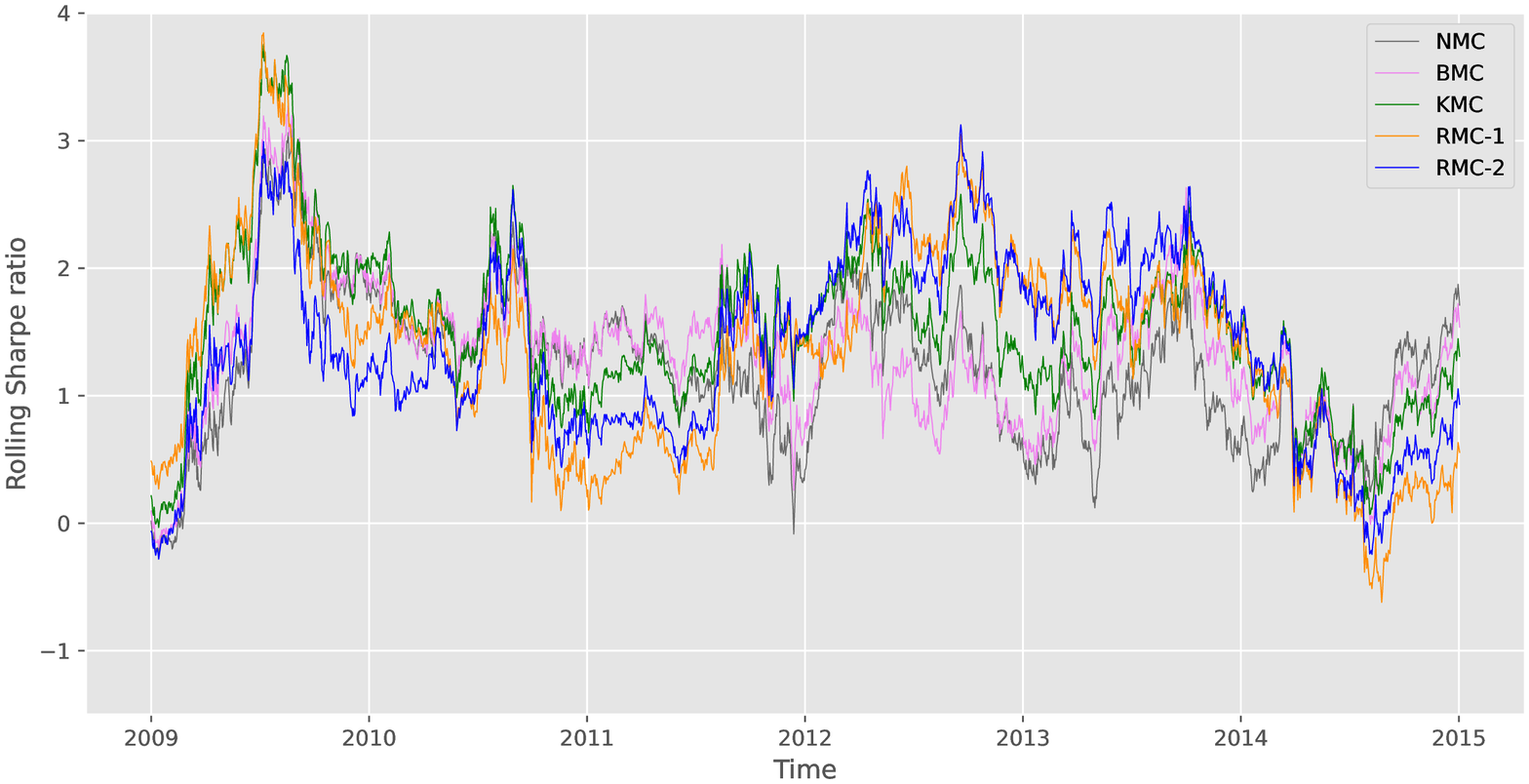}
\caption{Rolling one year Sharpe ratio with transaction costs, starting 2008.08.01}
\label{fig32}
\end{figure}
\begin{figure}[H]
\centering
\includegraphics[width=14cm]{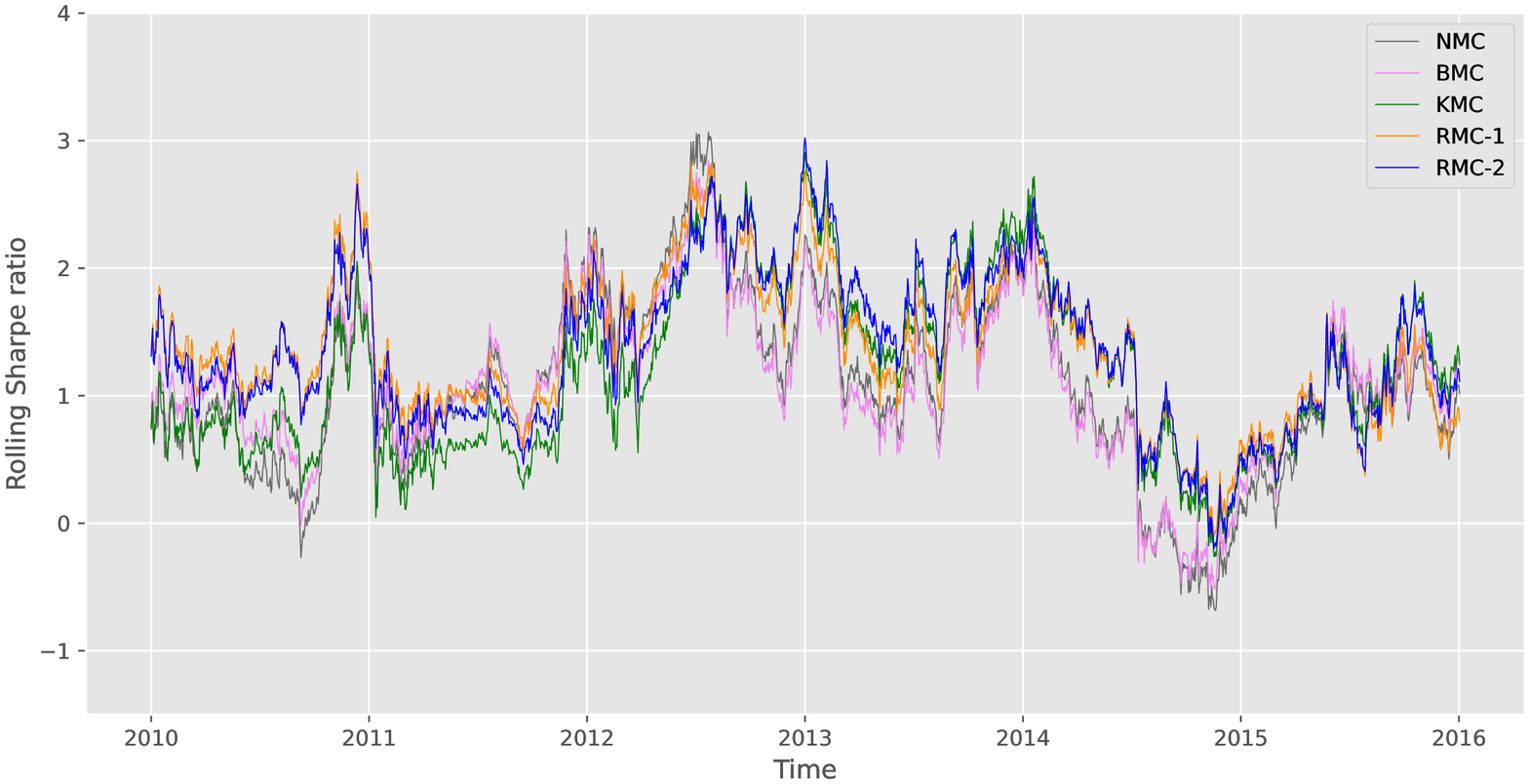}
\caption{Rolling one year Sharpe ratio with transaction costs, starting 2009.06.01}
\label{fig33}
\end{figure}

\section*{Tables}

\begin{table}[H]\centering
\small 
\begin{tabular}{ |c||c|c|c|c|c| }
 \hline
2002.02.01 & Mean (Daily) & Std Dev (Daily) & CVaR$_{0.95}$ & Sharpe (Annualized) & Mean/CVaR \\
 \hline
NMC  & 0.000414214 & 0.010989046 & 0.029964765 & 0.598363528 & 0.013823368\\ 
BMC & 0.000435181 & 0.011647557 & 0.031060781 & 0.593111407 & 0.014010626\\
KMC & 0.000455863 & 0.012299988 & 0.034368950 & 0.588342707 & 0.013263803\\
RMC-1 & 0.000552188 & 0.013090301 & 0.035034627 & 0.669634198 & 0.015761207\\
RMC-2 & 0.000553807 & 0.013134099 & 0.032437378 & 0.669358810 & 0.017073134\\
 \hline
2004.06.01 & Mean (Daily) & Std Dev (Daily) & CVaR$_{0.95}$ & Sharpe (Annualized) & Mean/CVaR \\
 \hline
NMC  & 0.000476001 & 0.011184933 & 0.033064265 & 0.675577574 & 0.014396253\\ 
BMC & 0.000465466 & 0.011344004 & 0.032089496 & 0.651361896 & 0.014505261\\
KMC &  0.000471342 & 0.012952865 & 0.028609607 & 0.577658086 & 0.016474965\\
RMC-1 & 0.000495928 & 0.011402213 & 0.033977053 & 0.690446884 & 0.014595986\\
RMC-2 & 0.000466915 & 0.013160472 & 0.036372233 & 0.563205451 & 0.012837134\\
 \hline
2006.06.01 & Mean (Daily) & Std Dev (Daily) & CVaR$_{0.95}$ & Sharpe (Annualized) & Mean/CVaR \\
 \hline
NMC  & 0.000500284 & 0.009616184 & 0.023153042 & 0.825876164 & 0.021607740\\ 
BMC & 0.000517737 & 0.009946102 & 0.023363066 & 0.826336403 & 0.022160507\\
KMC & 0.000548701 & 0.010665874 & 0.036628863 & 0.816656828 & 0.014980018\\
RMC-1 & 0.000519369 & 0.009652195 & 0.023037725 & 0.854182106 & 0.022544296\\
RMC-2 & 0.000521862 & 0.010785479 & 0.026042552 & 0.768098063 & 0.020038827\\
 \hline
2008.08.01 & Mean (Daily) & Std Dev (Daily) & CVaR$_{0.95}$ & Sharpe (Annualized)& Mean/CVaR\\
 \hline
NMC & 0.000551496 & 0.009087272 & 0.023071616 & 0.963405690 & 0.023903657\\
BMC & 0.000584411 & 0.009215616 & 0.027692293 & 1.006687285 & 0.021103741\\
KMC  & 0.000653984 & 0.009559078 & 0.069318921 & 1.086055334 &  0.009434422\\
RMC-1 & 0.000737939 & 0.013222184 & 0.037927259 & 0.885968105 & 0.019456717\\
RMC-2 & 0.000580732 & 0.010365320 & 0.036360908 & 0.889393335 & 0.015971349\\
 \hline
2009.06.01 & Mean (Daily) & Std Dev (Daily) & CVaR$_{0.95}$ & Sharpe (Annualized) & Mean/CVaR \\
 \hline
NMC  & 0.000469028 & 0.007066659 & 0.016025745 & 1.053622161 & 0.029267157\\
BMC & 0.000474973 & 0.006836645 & 0.015247661 & 1.102876796 & 0.030747243\\
KMC & 0.000537719 & 0.007443031 & 0.037543810 & 1.146848850 & 0.014322441\\
RMC-1 & 0.000599987 & 0.007115376 & 0.016198153 & 1.338581097 & 0.037040495\\
RMC-2 & 0.000627472 & 0.007486768 & 0.019298999 & 1.330455427 & 0.032513194\\
\hline
\end{tabular}
\caption{Performances of different robust portfolio strategies without transaction costs.}
\label{table1}
\end{table}

\begin{table}[H]\centering
\small 
\begin{tabular}{ |c||c|c|c|c|c| }
 \hline
2002.02.01 & Mean (Daily) & Std Dev (Daily) & CVaR$_{0.95}$ & Sharpe (Annualized) & Mean/CVaR \\
 \hline
NMC & 0.000403039 & 0.010983332 & 0.030077326 & 0.582523783 & 0.013400094\\
BMC & 0.000423062 & 0.011640696 & 0.031149149 & 0.576933560 & 0.013581815\\
KMC & 0.000441529 & 0.012291726 & 0.034512515 & 0.570225580 & 0.012793301\\
RMC-1 & 0.000538792 & 0.013080727 & 0.035105688 & 0.653867770 & 0.015347727\\
RMC-2 & 0.000540320 & 0.013124600 & 0.032480528 & 0.653530297 & 0.016635216\\
 \hline
 2004.06.01 & Mean (Daily) & Std Dev (Daily) & CVaR$_{0.95}$ & Sharpe (Annualized) & Mean/CVaR \\
 \hline
NMC  & 0.000469330 & 0.011171314 & 0.033155308 & 0.666921925 & 0.014155523\\ 
BMC & 0.000458590 & 0.011331386 & 0.032185383 & 0.642453959 & 0.014248400\\
KMC & 0.000465784 & 0.012945511 & 0.028725431 & 0.571170848 & 0.016215055\\
RMC-1 & 0.000489308 & 0.011389217 & 0.034073714 & 0.682007719 & 0.014360297\\
RMC-2 & 0.000461418 & 0.013152442 & 0.036434588 & 0.556915338 & 0.012664307\\
 \hline
2006.06.01 & Mean (Daily) & Std Dev (Daily) & CVaR$_{0.95}$ & Sharpe (Annualized) & Mean/CVaR \\
 \hline
NMC  & 0.000492675 & 0.009618197 & 0.023242215 & 0.813144297 & 0.021197444\\ 
BMC & 0.000510663 & 0.009949806 & 0.023440742 & 0.814742125 & 0.021785281\\
KMC & 0.000543398 & 0.010655290 & 0.036826561 & 0.809568805 & 0.014755624\\
RMC-1 & 0.000510559 & 0.009652738 & 0.023128845 & 0.839646198 & 0.022074590\\
RMC-2 & 0.000515162 & 0.010774696 & 0.026043978 & 0.758996643 & 0.019780501\\
 \hline
2008.08.01 & Mean (Daily) & Std Dev (Daily) & CVaR$_{0.95}$ & Sharpe (Annualized) & Mean/CVaR\\
 \hline
NMC & 0.000546916 & 0.009078253 & 0.023064818 & 0.956355251 & 0.023712131\\
BMC & 0.000583198 & 0.009206164 & 0.027685907 & 1.005629031 & 0.021064796\\
KMC  & 0.000652256 & 0.009549088 & 0.069674256 & 1.084318071 & 0.009361506\\
RMC-1 & 0.000739663 & 0.013209766 & 0.037904801 & 0.888872286 & 0.019513717\\
RMC-2 & 0.000578638 & 0.010353592 & 0.036338335 & 0.887188993 & 0.015923624\\
 \hline
2009.06.01 & Mean (Daily) & Std Dev (Daily) & CVaR$_{0.95}$ & Sharpe (Annualized) & Mean/CVaR\\
 \hline
NMC & 0.000460546 & 0.007073345 & 0.016076162 & 1.033591132 & 0.028647758\\
BMC & 0.000465774 & 0.006841371 & 0.015483874 & 1.080767852 & 0.030081231\\
KMC & 0.000526968 & 0.007441115 & 0.037741824 & 1.124208648 & 0.013962441\\
RMC-1 & 0.000588551 & 0.007114447 & 0.016233162 & 1.313237698 & 0.036256100\\
RMC-2 & 0.000614752 & 0.007481517 & 0.019297441 & 1.304400901 & 0.031856701\\
\hline
\end{tabular}
\caption{Performances of different portfolio strategies with transaction costs.}
\label{table2}
\end{table}

\end{document}